    \let\Cref\crtCref
    \let\cref\crtcref
\pgfplotsset{compat=1.15}
\tikzset{cross/.style={cross out, draw=black, minimum size=2*(#1-\pgflinewidth),
    inner sep=0pt, outer sep=0pt},
cross/.default={1pt}
}
\crefname{equation}{}{}
\Crefname{equation}{Eq.}{Eqs.}
\NewDocumentCommand{\mathlist}{m}
{
    \clist_set:Nn \l_tmpa_clist { #1 }
    \tl_set:Nn \l_tmpa_tl { \clist_use:Nn \l_tmpa_clist { ,\, } }
    \ensuremath { \l_tmpa_tl }
}
\newcommand{\defeq}{\coloneq}
\newcommand{\condSet}[2]{\left\{#1 \;\middle|\; #2\right\}}
\DeclareMathOperator*{\argmax}{argmax^*}    
\DeclareMathOperator*{\argmin}{argmin^*}
\DeclareMathOperator*{\argmaxS}{argmax}     
\DeclareMathOperator*{\argminS}{argmin}
\DeclareMathOperator{\sign}{sign}
\DeclareMathOperator{\diag}{diag}
\DeclareMathOperator{\fGr}{Gr}  
\DeclareMathOperator{\dom}{dom}
\newcommand{\signf}[1]{\sign\sqty{#1}}
\newcommand{\simplex}[1]{\Delta_{#1}}
\newcommand{\maxq}[1]{\max\qty{\mathlist{#1}}}  
\newcommand{\minq}[1]{\min\qty{\mathlist{#1}}}  
\newcommand{\sqty}[1]{\qty(\mathlist{#1})}  
\newcommand{\cqty}[1]{\qty{\mathlist{#1}}}  
\newcommand{\lrmid}[2]{\left(\mathlist{#1} \;\middle|\; \mathlist{#2}\right)}
\newcommand{\card}[1]{\left|#1\right|}  
\newcommand{\indicator}[1]{\mathbbm{1}_{#1}}  
\newcommand{\numSym}[2]{
    \stackanchor[1pt]%
    {\scalebox{.5}{\raisebox{.5pt}{\textcircled{\raisebox{-.9pt}{#2}}}}}%
    {\scalebox{.7}{\ensuremath{#1}}}
}
\newcommand{\eqN}[1]{\mathrel{\numSym{=}{#1}}}
\newcommand{\leqN}[1]{\mathrel{\numSym{\leq}{#1}}}
\newcommand{\geqN}[1]{\mathrel{\numSym{\geq}{#1}}}
\newcommand{\floor}[1]{\left\lfloor#1\right\rfloor}
\DeclareMathOperator{\setCl}{cl}
\DeclareMathOperator{\setConv}{conv}
\renewcommand{\real}{\mathbb{R}}
\newcommand{\posReal}{\mathbb{R}_+}
\newcommand{\integer}{\mathbb{Z}}
\newcommand{\posInt}{\integer_+}
\newcommand{\extPosInt}{\overline{\integer}_+}
\newcommand{\setN}{\mathcal{N}}  
\newcommand{\setS}{\mathcal{S}}  
\newcommand{\setR}{\mathcal{R}}
\newcommand{\setM}{\mathcal{M}}
\newcommand{\Ry}{Y}         
\newcommand{\vRy}{\V{\Ry}}
\newcommand{\Ryf}[2][]{\Ry#1\sqty{#2}}
\newcommand{\vRyf}[2][]{\vRy#1\sqty{#2}}
\newcommand{\gRy}{g_{\vRy}}  
\newcommand{\gRyf}[1]{g_{\vRy}\sqty{#1}}
\newcommand{\oslpGame}{OLP-Game}
\newcommand{\aoslpGame}{an \oslpGame{}}
\newcommand{\tfzGame}{\textsc{TZolp}-Game}
\newcommand{\ppfuncSym}[1][]{\alpha_{\operatorname{#1}}}
    \newcommand{\ppfunc}[3][]{\ppfuncSym[#1]\qty(#2,\,#3)}
\newcommand{\ppfMaxC}[1]{\bar{c}\qty(#1)}  
\newcommand{\ppfuncInv}[3][]{\ppfuncSym[#1]^{-1}\qty(#2,\,#3)}
\newcommand{\maxPayoffSym}{\mathcal{P}^+}
\newcommand{\minPayoffSym}{\mathcal{P}^-}
\DeclareMathOperator{\ncSetSym}{N_C}    
\newcommand{\ncSet}[1]{\ncSetSym\sqty{#1}}
\NewDocumentCommand{\lpgPlayerParam}{ o m }{
    \IfNoValueTF{#1}{
        \def\lpgPlayerParamStg{\vsig_{#2}}
    } {
        \def\lpgPlayerParamStg{#1}
    }
    \lrmid{v, c, \lpgPlayerParamStg}{\vsig_{-#2}}
}
\newcommand{\lpgCharSetSym}{\mathcal{C}}
\newcommand{\lpgCharSetP}[2]{       
    \lpgCharSetSym\lrmid{#1}{#2}}
\newcommand{\maxPayoff}[2][]{\maxPayoffSym_{#1}\sqty{#2}}
\newcommand{\minPayoff}[2][]{\minPayoffSym_{#1}\sqty{#2}}
\newcommand{\uncPayoffSym}{\mathcal{U}}   
\newcommand{\uncPayoff}[1]{\uncPayoffSym\sqty{#1}}
\newcommand{\lowrank}[2]{\widetilde{#1}^{(#2)}}   
\newcommand{\lowrankT}[2]{\widetilde{#1}^{(#2)\,\intercal}} 
\newcommand{\lowrankSet}[2]{\operatorname{LR}\sqty{#1, #2}}   
\newcommand{\intInt}[2]{\left\llbracket #1,\,#2 \right\rrbracket}
\newcommand{\lineRef}[1]{\hyperref[#1]{Line~\ref*{#1}}}
\newcommand{\funSubdGSym}[2][\epsilon]{\partial_{#1}#2}
\newcommand{\funSubdDSym}[2][\epsilon]{\mathring{D}_{#1}#2}
\newcommand{\funSubdP}[2][\epsilon]%
{\qty(\funSubdDSym[#1]{#2},\,\funSubdGSym[#1]{#2})}
\newcommand{\V}[1]{\bm{#1}}
\newcommand{\T}[2][]{#2^{#1\intercal}}
\newcommand{\xTx}[1]{\T{#1}#1}
\newcommand{\Null}[1]{\mathcal{N}\sqty{#1}}
\newcommand{\leftNull}[1]{\mathcal{N}\sqty{\T{#1}}}
\newcommand{\NullT}[1]{\T{\mathcal{N}}\sqty{#1}}
\newcommand{\leftNullT}[1]{\T{\mathcal{N}}\sqty{\T{#1}}}
\newcommand{\mcol}[2]{#1_{:,\,#2}}  
\newcommand{\spn}[1]{\operatorname{span}\sqty{#1}}
\renewcommand{\va}{\V{a}}
\newcommand{\vA}{\V{A}}
\renewcommand{\vb}{\V{b}}
\newcommand{\vB}{\V{B}}
\newcommand{\vd}{\V{d}}
\newcommand{\ve}{\V{e}}
\newcommand{\vI}{\V{I}}
\newcommand{\vM}{\V{M}}
\newcommand{\vP}{\V{P}}
\newcommand{\vQ}{\V{Q}}
\newcommand{\vq}{\V{q}}
\newcommand{\vR}{\V{R}}
\newcommand{\vS}{\V{S}}
\newcommand{\vU}{\V{U}}
\renewcommand{\vu}{\V{u}}
\newcommand{\vV}{\V{V}}
\newcommand{\vv}{\V{v}}
\newcommand{\vx}{\V{x}}
\newcommand{\vy}{\V{y}}
\newcommand{\vX}{\V{X}}
\newcommand{\vY}{\V{Y}}
\newcommand{\vSig}{\V{\Sigma}}
\newcommand{\vsig}{\V{\sigma}}
\newcommand{\vs}{\V{s}}
\newenvironment{enuminline}
{\begin{enumerate*}[label=(\roman*),itemjoin={{; }},itemjoin*={{; and }}]}
{\end{enumerate*}}
\algnewcommand\algorithmicinput{\textbf{Input:}}
\algnewcommand\Input{\item[\algorithmicinput]}
\NewDocumentCommand{\LeftComment}{s m}{%
  \Statex \hspace*{\ALG@thistlm}\(\triangleright\) #2}
\newcommand\notsotiny{\@setfontsize\notsotiny\@vipt\@viipt}
\newcommand{\overbar}[1]{%
\mkern 1.5mu\overline{\mkern-1.5mu#1\mkern-1.5mu}\mkern 1.5mu}
\newenvironment{separateProof}[1]{   
\noindent {\itshape Proof of \cref{#1}.}
}{
}
\spnewtheorem*{proofidea}{Proof idea}{\itshape}{\rmfamily}
\begin{document}
\title{Limited-perception games}
%
%
\author{Kai Jia\orcidID{0000-0001-8215-9899} \and
Martin Rinard
}
%
%
\institute{Department of Electrical Engineering and Computer Science, \\
Massachusetts Institute of Technology, USA \\
\email{\{jiakai,rinard\}@csail.mit.edu}}
\maketitle              
\begin{abstract}
    We study rational agents with different perception capabilities in strategic
    games. We focus on a class of one-shot limited-perception games. These games
    extend simultaneous-move normal-form games by presenting each player with an
    individualized perception of all players' payoff functions. The accuracy of
    a player's perception is determined by the player's capability level.
    Capability levels are countable and totally ordered, with a higher level
    corresponding to a more accurate perception. We study the rational behavior
    of players in these games and formalize relevant equilibria conditions. In
    contrast to equilibria in conventional bimatrix games, which can be
    represented by a pair of mixed strategies, in our limited perception games a
    higher-order response function captures how the lower-capability player uses
    their (less accurate) perception of the payoff function to reason about the
    (more accurate) possible perceptions of the higher-capability opponent. This
    response function characterizes, for each possible perception of the
    higher-capability player (from the perspective of the lower-capability
    player), the best response of the higher capability player for that
    perception. Since the domain of the response function can be exponentially
    large or even infinite, finding one equilibrium may be computationally
    intractable or even undecidable. Nevertheless, we show that for any
    $\epsilon$, there exists an $\epsilon$-equilibrium with a compact, tractable
    representation whose size is independent of the size of the response
    function's domain. We further identify classes of zero-sum
    limited-perception games in which finding an equilibrium becomes a
    (typically tractable) nonsmooth convex optimization problem.

    \keywords{Imperfect-information game \and
        Player capability \and
        Nash equilibrium \and
        Limited-perception game}
\end{abstract}

\section{Introduction}
\label{sec:introduction}

In many scenarios, players exhibit inherent limitations in various aspects of
their ability to generate maximally rational game play. Such \emph{capability
limitations} play an important role in previously considered scenarios
featuring, for example, limited strategy complexity \citep{ neyman1985bounded,
halpern2015algorithmic, yang2022on}, limited randomness in implementing mixed
strategies \citep{ neyman2000repeated, valizadeh2019playing,
orzech2023randomness}, limited accuracy in implementing strategies
\citep{selten1988reexamination, farina2018trembling}, and limited information
regarding players' payoffs~\citep{harsanyi1967games, aghassi2006robust,
kroer2018robust}.
\nopagebreak

We consider limited \emph{perception capabilities}, specifically limits in the
ability of players to accurately perceive (or predict) the payoffs of different
strategy profiles. Such limits affect the ability of the players to reason
effectively about the consequences of different strategies and therefore their
ability to maximize the rationality of their play. We present and study a new
model, \emph{one-shot limited-perception games} (\oslpGame{}s), in which each
player has a limited, hierarchically-ordered, capability to perceive the true
game payoffs:
\begin{enumerate}
    \item There is a normal-form \emph{true game}, hidden from the players, that
        determines the final payoffs of the players given a strategy profile.
    \item Each player has a \emph{capability level}; capability levels are
        countable and totally ordered, and are thus represented as integers.
    \item Each player perceives an \emph{abstraction} of the true game obtained
        by applying a \emph{payoff perception function} at this player's
        capability level to the true payoff functions of all the players. A
        higher capability level induces a more accurate perception. Each
        abstraction is still a normal-form game.
\end{enumerate}

The payoff perception function plays a central role in this model. We study
payoff perception functions characterized by three axioms (\cref{ def:lpg:ppf}):
\begin{enumerate}
    \item \emph{Capability path independence}: an abstraction obtained by
        applying the payoff perception function multiple times with different
        capability levels in an arbitrary order is the same as applying the
        perception payoff function once with the minimum capability level.
    \item \emph{Perfect perception with infinite capability}: any payoff
        function can be perfectly perceived by a player with an infinite
        capability level.
    \item \emph{Bounded concretization}: given any abstraction, the payoff
        values of compatible true games are bounded.
\end{enumerate}
The first axiom ensures that a single capability level (rather than a path of
capability levels) determines a perception, which induces a hierarchical
structure over the perception accuracy (e.g., a higher-capability player can
predict a lower-capability player's perception). The other two axioms induce
desirable properties of true payoff bounds (\cref{
thm:lpg:payoff-bounds-convexity}). This model captures a range of games with
limited payoff perceptions (see
\crefrange{ex:lpg:masked-payoff}{ex:lpg:limited-rank}).

We consider players with the common knowledge that every player knows
\begin{enuminline}
    \item the payoff perception function
    \item the capability levels of all players
    \item that each player maximizes their worst-case expected payoff given
        their limited perception of the true payoff function.
\end{enuminline}
The players use their (limited) perception of the true payoffs to reason about
the incentives and resulting actions of all players in the game. In the
two-player case with the first player having a lower capability, because the
first player has only limited knowledge of the second player's perception, and
must therefore consider all possible perceptions of the second player, a pair
$\sqty{\vx^*, \vy^*}$ of mixed strategies as in conventional bimatrix games
\citep{ nash1950equilibrium} cannot capture the equilibrium behavior in our
model; one particular issue is that the first player cannot solve $\vy^*$ using
their private information.

\nopagebreak
We instead propose a new equilibrium formalization $\sqty{\vx^*, \vRy}$ (\cref{
def:lpg:2p-fully-capability-aware-nash}), where $\vx^*$ is a mixed strategy of
the first (i.e., lower-capability) player, and $\vRyf{\cdot}$ is a higher-order
response function that maps each possible perception of the second player (from
the first player's perspective) to a mixed-strategy best response of the second
player for that perception. Then $\vx^*$ maximizes the first player's worst-case
payoff  against all possible responses of the second player.

It may not be immediately clear that such equilibria exist for all \oslpGame{}s.
We show that payoff bounds derived by reasoning about the possible true payoffs
given an abstraction produced via a payoff perception function are continuous
and concave in each player's strategy (\cref{ thm:lpg:payoff-bounds-convexity}).
The correspondence (a correspondence is a set-valued function) from other
players' response functions to the set of a player's best response functions is
therefore upper hemicontinuous and convex-valued. By considering an infinite
dimensional space containing the Cartesian product of all players' response
function spaces, one can apply the generalized Kakutani fixed-point
theorem~\cite{ fan1952fixed} to prove that an equilibrium exists.

Because the domain of $\vRyf{ \cdot}$ can be exponentially large or even
infinite, finding an equilibrium can be computationally intractable or even
undecidable. Despite this hardness, we present a positive result that for any
$\epsilon > 0$, there exists an $\epsilon$-equilibrium with a compact, tractable
representation whose size is independent of $\card{\dom \vRy}$ (\cref{
thm:lpg:fully-capability-aware-nash-repr}). One consequence is that even if
$\vRyf{\cdot}$ is not computable by any \PSPACE{} algorithm, a trusted oracle
can still communicate an efficient representation of an $\epsilon$-equilibrium
$\sqty{\vx', \vRy'}$ to Turing-complete players such that $\vRyf[']{\cdot}$ is
efficiently computable.

We next consider two-player zero-sum \oslpGame{}s. In sharp contrast to zero-sum
normal-form games, which are efficiently solvable due to the minimax
theorem~\citep[Section 1.4.2]{ nisan2007algorithmic}, finding an equilibrium in
a zero-sum \oslpGame{} is as computationally hard as in the general-sum case
since the latter can be reduced to the former (\cref{
thm:lpg:tfzg-gsum-reduction}). Nevertheless, we generalize the minimax condition
to identify efficiently solvable classes of two-player zero-sum \oslpGame{}s
(\cref{ sec:lpg:maximin-attainable-examples}), where finding an equilibrium
becomes a nonsmooth convex optimization problem (\cref{ thm:lpg:maximin-equil})
that is tractable in many cases~\citep{ bubeck2015convex}.

Researchers have studied various models of limited information games \citep{
harsanyi1967games, hartline2010bayesian, chen2021defending,
grosshans2013bayesian, farina2021model}. One classic model is Bayesian games
\citep{ harsanyi1967games} where each player's payoff is affected by a private
type drawn from a commonly known prior distribution. By comparison, in the
\oslpGame{} model, a player has no private knowledge of their own exact true
payoff function, and there is no distribution assumption. The distribution-free
approach has been considered by other authors in studying robust games where
players maximize worst-case payoffs~\citep{aghassi2006robust, kroer2018robust}.
To our knowledge, this paper is the first to model a hierarchy of perception
capabilities in a hidden true game and the first to study rational behavior in
this context.

\section{The limited-perception game model}
\label{sec:lpg:game-model}
This section formalizes the game model introduced in \cref{sec:introduction}. Of
note, we typically use bold symbols for vectors (lowercase) and matrices
(uppercase) (e.g., $\va$ is a vector, $a_i$ is the $i$-th element of $\va$, and
$\va_i$ is a vector indexed by $i$). We use $\argmax$/$\argmin$ to denote an
arbitrary maximizer/minimizer, and $\argmaxS$/$\argminS$ to denote the complete
set of maximizers/minimizers.

\begin{definition}[One-shot limited-perception game]
    \label{def:lpg:game}
    A finite \emph{one-shot limited-perception game} (\emph{\oslpGame} for
    short) with $n$ players is a tuple \\
    $G = \sqty{\setN, (\setS_i)_{ i\in\setN}, (u_i)_{i\in\setN}, \ppfuncSym}$
    where:
    \begin{itemize}
        \item $\setN = \intInt{1}{n}$ is the set of players ($\intInt{a}{b}$
            denotes integers from $a$ to $b$ inclusive).
        \item $\setS_i$ is a finite set of pure strategies available to $i$.
        \item $u_i : \setS_1 \times \cdots \times \setS_n \to \real$ is the
            \emph{true payoff function} of player $i$.
        \item $\ppfuncSym$ is the \emph{payoff perception function} of the game
            as defined in \cref{def:lpg:ppf}.
    \end{itemize}
    For a given game $G$, there can be multiple \emph{instances}. In each
    instance, each player has a capability level that determines their
    perception of the payoff functions. \Cref{ def:lpg:gameplay} specifies the
    gameplay.
\end{definition}

\begin{definition}[Payoff perception function]
    \label{def:lpg:ppf}
    In \aoslpGame{} (\cref{def:lpg:game}), the \emph{payoff perception function}
    is a function $\ppfuncSym: \qty(U \times \extPosInt) \mapsto U$ where $U
    \defeq \real^{\setS_1 \times \cdots \times \setS_n}$ is the set of payoff
    functions, and $\extPosInt \defeq \posInt \cup \cqty{\infty}$ is the
    extended set of positive integers. If $v = \ppfunc{u}{c}$, then $v$ is
    called an \emph{abstraction} of $u$ at \emph{capability level} $c$. The
    payoff perception function satisfies the following properties:
    \begin{itemize}
        \item \emph{Capability path independence}: for any $u \in U$ and $c_1,
            c_2 \in \posInt$, it holds that \\
            $\ppfunc{\ppfunc{u}{c_1}}{c_2} = \ppfunc{u}{\minq{c_1, c_2}}$.
        \item \emph{Perfect perception with infinite capability}: for any $u \in
            U$, $\ppfunc{u}{\infty} = u$.
        \item \emph{Bounded concretization}: given any $v \in U$, there exists a
            constant $b \in \posReal$ such that for any $u \in U$ and $c \in
            \extPosInt$ satisfying $\ppfunc{u}{c} = v$, it holds that $\max_{s
            \in \setS_1 \times \cdots \times \setS_n} \abs{u(s)} \leq b$.
    \end{itemize}
    Let $\ppfMaxC{u} \defeq \min\condSet{c \in \extPosInt}{\ppfunc{u}{c} = u}$.
    The second property guarantees that $\ppfMaxC{u}$ is well-defined. The value
    $\ppfMaxC{u}$ is called the \emph{intrinsic capability} of the payoff
    function $u(\cdot)$ given the payoff perception function~$\ppfunc{
    \cdot}{\cdot}$.
\end{definition}

\begin{example}[Masked payoff game]
    \label{ex:lpg:masked-payoff}
    Consider a function $\ppfuncSym[s]: \real^{m \times n} \times \extPosInt
    \mapsto \real^{m \times n}$ where $\ppfunc[s]{\vM}{c}$ keeps the $c$ entries
    with the largest absolute values in $\vM$ and sets the rest to zero.
    Tie-breaking can be arbitrary but must be consistent with capability path
    independence (e.g., by preferring to keep the smallest row/column indices).
    Then $\ppfuncSym[s](\cdot)$ is a payoff perception function for bimatrix
    games.
\end{example}

\begin{example}[Quantized payoff game]
    \label{ex:lpg:quant-payoff}
    Consider a function that keeps only $c$ decimal digits for each payoff value
    by rounding towards zero. Formally, let $\ppfunc[q]{u}{c} = u'$, then for
    any strategy profile $\vs \in \setS_1 \times \cdots \times \setS_n$,
    \begin{align*}
        u'(\vs) \defeq \begin{dcases}
            \signf{u(\vs)} \cdot 10^{-c} \cdot \floor{10^c \cdot \abs{u(\vs)}}
            & \text{if } c < \infty \\
            u(\vs) & \text{if } c = \infty
        \end{dcases}
    \end{align*}
    It is easy to verify that $\ppfunc[q]{\cdot}{\cdot}$ is a payoff perception
    function.
\end{example}

\begin{example}[Limited-rank game]
    \label{ex:lpg:limited-rank}
    A limited-rank game is a two-player \oslpGame{}. Let $\sqty{\vA, \vB}$ be
    the true payoff matrices. A player at capability level $c$ perceives the
    game as $\sqty{\vA', \vB'}$ where $\vA'$ and $\vB'$ are the best rank-$c$
    approximations (minimizing the Frobenius norm of the residual matrix) of
    $\vA$ and $\vB$, respectively.
\end{example}

\begin{example}[Largest masked payoff game]
    If we change \cref{ex:lpg:masked-payoff} to keep the $c$ entries with the
    smallest absolute values, the resulting function is not a payoff perception
    function since it violates the bounded concretization property.
\end{example}

\begin{definition}[Gameplay of the \oslpGame]
    \label{def:lpg:gameplay}
    In \aoslpGame{} instance (\cref{def:lpg:game}), each player $i$ has a
    capability level $c_i \in \extPosInt$. The tuple $(c_i)_{i \in \setN}$ is
    called the \emph{capability profile} of this instance. Player $i$ perceives
    the payoff function of player $j$ as $v^i_j \defeq \ppfunc{u_j}{c_i}$. The
    game described by $(v^i_j)_{j\in\setN}$ is the \emph{perceived game} of
    player $i$. Players know the following as common knowledge: $\setN$,
    $(\setS_j)_{j\in\setN}$, $(c_j)_{j\in\setN}$, $\ppfuncSym(\cdot)$, and the
    fact that each player tries to maximize their worst-case expected payoff (as
    formalized in \cref{ def:lpg:payoff-bounds}). Players do not know $u_j$.

    Player $i$ chooses a strategy $s_i \in \setS_i$ based on the perceived game
    $(v^i_j)_{j\in\setN}$. After all players choose their strategies, the true
    game $(u_j)_{j\in\setN}$ is revealed to all players. The payoff of player
    $i$ is $u_i\sqty{s_1, \ldots, s_n}$.
\end{definition}

Based on the perceived payoff function, a player can use the following payoff
concretization function to reason about the set of possible true payoff
functions:
\begin{definition}[Payoff concretization function]
    \label{def:lpg:ppf-inv}
    With the notation in \cref{def:lpg:ppf}, let $v \in U$ be a payoff function.
    The inverse of the payoff perception function is called a \emph{payoff
    concretization function}, which is denoted as $\ppfuncSym^{-1}: \qty(U
    \times \extPosInt) \mapsto 2^U$ and defined as
    \begin{align*}
        \ppfuncInv{v}{c} \defeq \condSet{u \in U}{\ppfunc{u}{\maxq{
            \ppfMaxC{v}, c}} = v}
    \end{align*}
\end{definition}
If a player at capability level $c$ perceives $v$, then the set of possible true
payoff functions is $\ppfuncInv{v}{c}$. Note that $\ppfuncInv{v}{c}$ is
well-defined even if $c < \ppfMaxC{v}$. The payoff concretization set can be
finite or infinite. For the masked payoff game (\cref{ ex:lpg:masked-payoff}),
$\ppfuncInv[s]{\vM}{c} = \cqty{\vM}$ when $c \geq mn$. By comparison, for the
quantized payoff game (\cref{ ex:lpg:quant-payoff}), $\ppfuncInv[q]{\vM}{c}$ is
always an uncountable (thus infinite) set for $c < \infty$.

Given the above definitions, one can prove the following properties about the
payoff perception function and the payoff concretization function. Those
properties confirm an intuitive understanding of the hierarchical structure of
the perception accuracy. The proofs are deferred to the appendix (\cref{
apx:sec:ppf-property}):
\begin{enumerate}
    \item \emph{Perfect perception beyond intrinsic capability}:
        \\ For any $c \geq \ppfMaxC{u}$, we have $\ppfunc{u}{c} = u$.
    \item \emph{Perfect perception of an abstraction}:
        \\ For any $c \in \posInt$, let $v = \ppfunc{u}{c}$.
        Then $\ppfunc{v}{c} = v$. Equivalently, $\ppfMaxC{v} \leq c$.
    \item \emph{First form of information loss}:
        \\ If $\ppfunc{u}{c} \neq u$ for some $c \in \posInt$, then
        $\ppfunc{u}{c'} \neq u$ for any $c' \in \intInt{1}{c}$.
    \item \emph{Second form of information loss}:
        \\ For any $c \in \posInt$, let $v = \ppfunc{u}{c}$. If $v \neq u$,
        then $\ppfunc{v}{c'} \neq u$ for any $c' \in \posInt$.
    \item \emph{Higher capability for more accurate perception}: \\
        For $c \in \posInt$, $\ppfuncInv{v}{c+1} \subseteq \ppfuncInv{v}{c}$.
    \item \emph{Information loss expressed via concretization}: \\
        If $v' = \ppfunc{v}{c}$, then for $c' \geq c$, $\ppfuncInv{v}{c'}
        \subseteq \ppfuncInv{v'}{c}$.
\end{enumerate}

\nopagebreak

\section{Equilibrium in one-shot limited-perception games}
\label{sec:lpg:fully-capability-aware}

\subsection{Formalizing the Nash equilibrium}
This section formalizes the Nash equilibrium for \oslpGame{}s. As defined in
\cref{def:lpg:gameplay}, each player in \aoslpGame{} maximizes their worst-case
payoff. We first formalize the bounds of expected payoffs and analyze their
properties.
\begin{definition}[Bounds of expected true payoff]
    \label{def:lpg:payoff-bounds}
    In \aoslpGame{} as defined in \cref{def:lpg:gameplay}, let
    $(\vsig_i)_{j\in\setN}$ be a mixed strategy profile with $\vsig_i \in
    \simplex{\setS_i}$. If a player perceives their payoff function as $v$ and
    has a capability level of $c$, then the lower and upper bounds of the
    expected true payoffs are denoted as
    \begin{align}
    \label{eqn:lpg:payoff-bounds-def:gen}
    \begin{split}
        \minPayoff[\ppfuncSym]{v, c, \sqty{\vsig_i}_{j\in\setN}} &\defeq
            \inf_{u \in \ppfuncInv{v}{c}} {u\sqty{\vsig_1, \ldots, \vsig_n}} \\
        \maxPayoff[\ppfuncSym]{v, c, \sqty{\vsig_i}_{j\in\setN}} &\defeq
            \sup_{u \in \ppfuncInv{v}{c}} {u\sqty{\vsig_1, \ldots, \vsig_n}}
    \end{split}
    \end{align}
    We omit $\ppfuncSym$ if it is clear from the context. Of note, for a mixed
    strategy profile $\sqty{\vsig_i}_{i\in\setN}$, $u\sqty{ \vsig_1, \ldots,
    \vsig_n}$ denotes the expected payoff with respect to the payoff function
    $u$. When we focus on a single player $i$, we use
    $\minPayoffSym\lpgPlayerParam{i}$ and $\maxPayoffSym\lpgPlayerParam{i}$ to
    denote their bounds with $\vsig_{-i}$ denoting the mixed strategy profile of
    all players except $i$.
\end{definition}

\begin{theorem}[Analytical properties of the payoff bounds]
    \label{thm:lpg:payoff-bounds-convexity}
    With the notation in \cref{ def:lpg:payoff-bounds},
    $\minPayoffSym\lpgPlayerParam{i}$ is concave in $\vsig_i$, and
    $\maxPayoffSym\lpgPlayerParam{i}$ is convex in $\vsig_i$. Both functions are
    bounded and continuous in $\sqty{\vsig_1, \ldots, \vsig_n}$.
\end{theorem}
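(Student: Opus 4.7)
The plan is to exploit the observation that for each fixed payoff function $u$, the expected payoff $u(\vsig_1, \ldots, \vsig_n) = \sum_{s \in \setS_1 \times \cdots \times \setS_n} u(s) \prod_{j=1}^n \sigma_j(s_j)$ is multilinear in the strategy profile, and in particular affine in $\vsig_i$ when $\vsig_{-i}$ is held fixed. All four claims (concavity, convexity, boundedness, continuity) then follow by analyzing the pointwise infimum and supremum over $u \in \ppfuncInv{v}{c}$ of this family of multilinear functions.

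For concavity and convexity in $\vsig_i$, I would fix $\vsig_{-i}$. Each map $\vsig_i \mapsto u(\vsig_1, \ldots, \vsig_n)$ is then affine, so $\minPayoffSym\lpgPlayerParam{i}$ is a pointwise infimum of affine functions and hence concave, while $\maxPayoffSym\lpgPlayerParam{i}$ is a pointwise supremum of affine functions and hence convex. This part is immediate from elementary convex analysis and does not yet use the bounded concretization axiom.

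Boundedness and joint continuity both rely on the bounded concretization axiom applied to $v$, which supplies a constant $b > 0$ such that $\max_s |u(s)| \le b$ for every $u \in \ppfuncInv{v}{c}$. Since mixed strategies are probability vectors, this instantly gives $|u(\vsig_1, \ldots, \vsig_n)| \le b$ uniformly in $u$, so both $\minPayoffSym\lpgPlayerParam{i}$ and $\maxPayoffSym\lpgPlayerParam{i}$ lie in $[-b,\, b]$. For continuity in $(\vsig_1, \ldots, \vsig_n)$, I would show that the family $\{(\vsig_1, \ldots, \vsig_n) \mapsto u(\vsig_1, \ldots, \vsig_n) : u \in \ppfuncInv{v}{c}\}$ is uniformly Lipschitz on the compact product $\prod_j \simplex{\setS_j}$. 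A standard telescoping argument decomposes $u(\vsig) - u(\vsig')$ into $n$ terms, each of which is affine in a single coordinate $\vsig_j$ with slope bounded (in $\ell_\infty$) by $b$; hence each term is bounded by $b \cdot \|\vsig_j - \vsig_j'\|_1$, giving a common Lipschitz constant depending only on $b$ and $n$. Because the pointwise infimum (respectively supremum) of an $L$-Lipschitz family is itself $L$-Lipschitz, $\minPayoffSym\lpgPlayerParam{i}$ and $\maxPayoffSym\lpgPlayerParam{i}$ are Lipschitz and therefore continuous.

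The main subtlety worth flagging is that $\ppfuncInv{v}{c}$ may well be uncountable (as in the quantized-payoff game), so one cannot reduce to finitely many $u$ by enumeration, and in general the infimum or supremum need not be attained at any particular $u$. The bounded concretization axiom is exactly what rescues the argument: it yields a uniform bound $b$ over the entire concretization set, which simultaneously ensures that the payoff bounds are finite and drives the uniform Lipschitz estimate, bypassing any need for attainment or compactness of $\ppfuncInv{v}{c}$ itself.
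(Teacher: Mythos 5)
Your proof is correct and follows the same overall route as the paper's: both treat the payoff bounds as the pointwise infimum/supremum over $u \in \ppfuncInv{v}{c}$ of the multilinear (hence affine-in-$\vsig_i$) expected payoffs, deduce concavity/convexity from elementary convex analysis, and obtain boundedness from the bounded concretization axiom. The one place you genuinely diverge is continuity. The paper simply invokes the statement that a pointwise infimum or supremum of a family of continuous functions is continuous; read literally, that is false for arbitrary families (a pointwise infimum of continuous functions is in general only upper semicontinuous). What makes it true here is equicontinuity of the family, and your uniform Lipschitz estimate --- telescoping over the coordinates and bounding each affine slope by the uniform constant $b$ supplied by bounded concretization --- establishes exactly that, so your version is the more self-contained and airtight of the two on this point. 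Two small additions you should make: first, $\ppfuncInv{v}{c}$ must be shown nonempty (otherwise the infimum is $+\infty$); the paper gets this from the perfect-perception axiom, which gives $v \in \ppfuncInv{v}{c}$. Second, your remark that concavity/convexity ``does not yet use the bounded concretization axiom'' is fine if one allows extended-real-valued concave functions, but for the bounds to be real-valued (as the theorem asserts) you do already need the uniform bound $b$, so the axioms are not quite as separable as you suggest.
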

\begin{proof}
    Due to the perfect perception with infinite capability property, $u \in
    \ppfuncInv{ u}{c}$ for any $u \in U$ and $c \in \extPosInt$. Thus
    $\ppfuncInv{\cdot}{\cdot}$ is nonempty and the functions in
    \cref{eqn:lpg:payoff-bounds-def:gen} are well-defined. Given
    \cref{eqn:lpg:payoff-bounds-def:gen}, the continuity follows from the fact
    that $u\sqty{\vsig_1, \ldots, \vsig_n}$ is continuous in $\vsig_i$ for $i
    \in \setN$, and the point-wise infimum or supremum of a family of continuous
    functions is continuous \citep[Theorem 7.4.9]{ berberian2010general}. Since
    $u\sqty{\vsig_1, \ldots, \vsig_n}$ is a multilinear function, it is both
    convex and concave in~$\vsig_i$. The convexity/concavity of
    $\minPayoff{\cdot}$ and $\maxPayoff{\cdot}$ follows from the fact that the
    point-wise supremum (resp. infimum) of a family of convex (resp. concave)
    functions is convex (resp. concave) \citep[Chapter IV, Proposition
    2.1.2]{hiriart1996convexI}. The boundedness follows from the bounded
    concretization property.
\end{proof}

Since $\minPayoffSym\lpgPlayerParam{i}$ is concave, maximizing the worst payoff
has global optima; the related equilibrium also exists. By contrast, maximizing
the best payoff is a convex maximization problem, which may not have associated
equilibrium solution.

Next, we define the Nash equilibrium of the two-player \oslpGame{}. As discussed
in \cref{ sec:introduction}, the lower-capability player reasons about all
compatible higher-capability player's perceptions. It is helpful to define this
set. We call it the \emph{narrow concretization set}:
\begin{definition}[Narrow concretization set]
    \label{def:lpg:narrow-concretization}
    In \aoslpGame{} instance (\cref{def:lpg:gameplay}), assume a
    capability-$c_1$ player perceives a payoff function as $v$. Then the set of
    possible perceptions of $v$ by another capability-$c_2$ player is called the
    \emph{narrow concretization set}, defined as:
    \begin{align}
        \label{eqn:lpg:narrow-concretization}
        \ncSet{v, c_1, c_2} &\defeq \condSet{u \in U}{
            \begin{aligned}
                & \ppfMaxC{u} \leq c_2 \text{ and } \\
                & \ppfunc{v}{\minq{c_1, c_2}} = \ppfunc{u}{\minq{c_1, c_2}}
            \end{aligned}
        }
    \end{align}
\end{definition}
One can verify that the narrow concretization set $\ncSet{v, c_1, c_2}$
satisfies the following properties; the proof is deferred to the appendix
(\cref{ thm:lpg:narrow-concretization}):
\begin{itemize}
    \item If $\ppfMaxC{v} \leq c_1 < c_2$, then $\ncSet{v, c_1, c_2}
        \subseteq \ppfuncInv{v}{c_1}$.
    \item If $c_1 \geq c_2$, then $\ncSet{v, c_1, c_2} = \cqty{
        \ppfunc{v}{c_2}}$.
\end{itemize}

\begin{definition}[Nash equilibrium of two-player \oslpGame{}]
    \label{def:lpg:2p-fully-capability-aware-nash}
    In a two-player \oslpGame{} instance (\cref{ def:lpg:gameplay}) with the
    capability profile $\sqty{c_1, c_2}$, assume without loss of generality that
    $c_1 \leq c_2$. Let $U \defeq \real^{\setS_1 \times \setS_2}$ be the set of
    all possible payoff functions. Let $\sqty{u_1, v_1}$ and $\sqty{u_2, v_2}$
    be the perceived payoff functions of the first and the second player,
    respectively. Define $T_1^2 \defeq \ncSet{v_1, c_1, c_2}$.

    A Nash equilibrium of this instance is a pair $\sqty{\vx^*, \vRy}$ where
    $\vx^* \in \simplex{\setS_1}$ is a mixed strategy of the first player, and
    $\vRy: T_1^2 \mapsto \simplex{\setS_2}$ is a response function of the second
    player. The equilibrium condition is that $\vx^*$ maximizes the first
    player's worst-case payoff over perceptions in $T_1^2$, and $\vRy$ encodes a
    best response of the second player to any possibility in $T_1^2$. Formally,
    $\sqty{\vx^*, \vRy}$ is a Nash equilibrium if and only if the following
    conditions hold:
    \begin{itemize}
        \item For a perceived payoff functions $v_2' \in T_1^2$, the second
            player has no incentive to deviate from the strategy $\vRyf{v_2'}$
            given that the first player plays $\vx^*$:
            \begin{align}
            \label{eqn:lpg:2p-fully-capability-aware-nash:R2}
                \forall v_2' \in T_1^2:\:
                \vRyf{v_2'} \in
                    \argmaxS_{\vy \in \simplex{\setS_2}}
                    \minPayoff{v_2', c_2, \vx^*, \vy}
            \end{align}
        \item Let $f_{\vRy}(\vx)$ be the first player's worst payoff when they
            play $\vx$ against $\vRy$. Then $\vx^*$ is a maximizer of $f_{
            \vRy}(\vx)$:
            \begin{align}
            \label{eqn:lpg:2p-fully-capability-aware-nash:f1}
            \begin{split}
                \vx^* &\in \argmaxS_{\vx \in \simplex{\setS_1}} f_{\vRy}(\vx) \\
                \text{where } f_{\vRy}(\vx) &\defeq
                    \inf_{v_2' \in T_1^2}
                    \minPayoff{u_1, c_1, \vx, \vRyf{v_2'}}
            \end{split}
            \end{align}
    \end{itemize}
    Given a Nash equilibrium $\sqty{\vx^*, \vRy}$, the first player sample a
    pure action according to $\vx^*$, and the second player samples a pure
    action according to $\vRyf{v_2}$. The values $f_{\vRy}(\vx^*)$ and
    $\minPayoff{ v_2, c_2, \vx^*, \vRyf{v_2}}$ are the \emph{Nash values} of the
    first and second player at the equilibrium, respectively.
\end{definition}
\begin{remark}
    In \cref{ def:lpg:2p-fully-capability-aware-nash}, since $T_1^2$ only
    depends on $v_1$, $c_1$ and $c_2$, and the second player can reconstruct the
    first player's perception (i.e., $u_1 = \ppfunc{u_2}{c_1}$ and $v_1 =
    \ppfunc{v_2}{c_1}$), both players can solve $\sqty{\vx^*, \vRy}$ using their
    private information. Both players have no incentive to deviate from the
    equilibrium.
\end{remark}
\begin{remark}
    In the special case of $c_1 = c_2$, both players have no uncertainty about
    their opponent; i.e., $T_1^2 = \cqty{v_1} = \cqty{v_2}$ (if $v_2' \in
    \ncSet{ v_1, c_1, c_2}$, then \cref{ eqn:lpg:narrow-concretization} requires
    $v_1 = \ppfunc{v_2'}{c_1}$ and $\ppfMaxC{v_2'} \leq c_1$, which implies
    $v_2' = \ppfunc{v_2'}{\ppfMaxC{v_2'}} = \ppfunc{v_2'}{c_1} = v_1 = v_2$).
    The equilibrium can be represented by a pair of mixed strategies
    $\sqty{\vx^*, \vy^*}$ so that $\vRyf{v_2} = \vy^*$, compatible with the
    standard Nash equilibrium.
\end{remark}

In the multiple-player case, we can define a response function for each player
and impose constraints analogous to \cref{
eqn:lpg:2p-fully-capability-aware-nash:R2}.

We briefly explain why the Nash equilibrium exists. Since $\minPayoff{ \cdot}$
is continuous and concave in each player's strategy, the correspondence (a
correspondence is a set-valued function) from other players' response functions
to the set of a player's best response functions is therefore upper
hemicontinuous and convex-valued. Thus by considering an infinite dimensional
space that contains the Cartesian product of all players' response function
spaces, the existence of the Nash equilibrium can be proved with the generalized
Kakutani fixed-point theorem \citep{ fan1952fixed}. The formal definition of the
Nash equilibrium in the multiple-player case and the existence proof is deferred
to the appendix (\cref{ apx:sec:nash-equil}).

\subsection{Hardness of equilibrium solving and representation}
\label{sec:lpg:fully-capability-aware-hardness}

One can expect that solving a Nash equilibrium in the form of \cref{
def:lpg:2p-fully-capability-aware-nash} is computationally hard. In fact,
verifying an equilibrium is hard, which can be \NP-hard or undecidable depending
on the assumptions. For a problem $P$ whose solutions can be certified, we can
encode its solution space into a set $S_P \subset U$, and define a computable
function $\ppfunc{u}{1} = u_0$ for some marker $u_0 \in U$ if and only if $u \in
S_P$. With carefully chosen payoff values of $u_0$ and $S_P$, we can check
whether $S_P$ is empty by checking the value of $f_{\vRy}(\vx_0)$. For example,
take $P$ as the halting problem. Set $u_0(1, 1) = 2$. Let $\ppfunc{u}{1} = u_0$
if and only if $u(1, 1) = \frac{1}{k}$ for some $k \in \posInt$ and $P$ halts
after $k$ steps; otherwise, set $\ppfunc{u}{1} = u$. Then $\ppfunc{u}{1}$ is
computable, but $\card{\ppfuncInv{u_0}{1}} > 1$ is undecidable. Whether
$f_{\vRy}(\vx_0) \leq 1$ is equivalent to whether $P$ halts. Of note, the
hardness still holds even if $\minPayoff{u, c, \vx, \vy}$ and $\max_{\vy \in
\simplex{\setS_2}} \minPayoff{u, c, \vx, \vy}$ are both computable in polynomial
time for any $u$ and $c$ since the payoff values can be engineered so that $S_P$
only affects $f_{\vRy}(\cdot)$ but not the lower payoff bounds. The formal
hardness theorem and proof are deferred to the appendix (\cref{
thm:lpg:fully-capability-aware-hardness}).

Let's turn our attention to the representation of the equilibrium. Assume there
is a trusted oracle that finds one Nash equilibrium. If both players are
Turing-complete, can the oracle communicate the equilibrium to them? The answer
is not obvious since representing $\vRyf{\cdot}$ may require an infinitely sized
lookup table. Representation size is also related to the hardness of equilibrium
solving. Assume $T_1^2 \defeq \ncSet{v_1, c_1, c_2}$ is finite, enumerable, but
exponentially large. Without an efficient equilibrium representation, an attempt
to solve the equilibrium is unlikely in \PSPACE.

The following theorem provides a positive result on representing $\epsilon$-Nash
equilibria. It shows that if there is a polynomial-time algorithm that solves a
convex program involving the payoff bounds (\cref{
eqn:lpg:fully-capability-aware-nash-repr:orcl} below), then there is an
approximate Nash equilibrium $\sqty{\vx^*, \vRy}$ whose representation size does
not depend on the size of $T_1^2$, and evaluating $\vRyf{u}$ for any $u \in
T_1^2$ is also in polynomial time. Consequently, finding an approximate Nash
equilibrium can be in \PSPACE{} assuming $T_1^2$ is finite and enumerable by
brute-force search of all possible formulations compatible with \cref{
eqn:lpg:fully-capability-aware-nash-repr} below.

\begingroup
\newcommand{\vRyfp}[1]{\vRyf[']{#1}}
\newcommand{\orcl}[1]{\mathfrak{O}\sqty{#1}}
\newcommand{\fval}{f^{*}}
\begin{restatable}[Compact representation of approximate Nash
    equilibria]{theorem}{ThmCompactRepr}
    \label{thm:lpg:fully-capability-aware-nash-repr}
    In an two-player \oslpGame{} instance (\cref{ def:lpg:gameplay}), assume
    without loss of generality $c_1 \leq c_2$. Let $\sqty{u_1, v_1}$ and
    $\sqty{u_2, v_2}$ be the perceived payoff functions of the players. Let
    $\sqty{\vx^*, \vRy}$ be an $\epsilon$-Nash equilibrium of this instance with
    $\epsilon \geq 0$ as defined in \cref{
    def:lpg:2p-fully-capability-aware-nash}. Define $T_1^2 \defeq \ncSet{v_1,
    c_1, c_2}$. For $u \in T_1^2$ and $t \in \real$, define an oracle $\orcl{u,
    t}$ that returns an arbitrary solution $\vy$ subject to the following convex
    constraints:
    \begin{align}
    \label{eqn:lpg:fully-capability-aware-nash-repr:orcl}
    \begin{split}
        \orcl{u, t} &\defeq \vy \\
        \text{subject to }\quad
        & \vy \in \simplex{\setS_2},\quad
            \minPayoff{u_1, c_1, \vx^*, \vy} \geq t, \\
        & \minPayoff{u, c_2, \vx^*, \vy} \geq
            \max_{\vy' \in \simplex{\setS_2}}
            \minPayoff{u, c_2, \vx^*, \vy'} - \epsilon - \epsilon' \\
    \end{split}
    \end{align}

    Take an arbitrary value $\epsilon' > 0$. Then there exists a function
    $\vRy': T_1^2 \mapsto \simplex{\setS_2}$ such that $\sqty{\vx^*, \vRy'}$ is
    an $\qty(\epsilon + \epsilon')$-Nash equilibrium of this instance, where
    $\vRyfp{\cdot}$ is defined as the following with game-dependent parameters
    $v_i^* \in T_1^2$ and $\vy_i^* \in \simplex{\setS_2}$ index by $i \in \setR$
    for $\setR \defeq \intInt{1}{\card{\setS_1}+1}$, and $\fval \in \real$:
    \begin{align}
        \label{eqn:lpg:fully-capability-aware-nash-repr}
        \vRyfp{u} \defeq \begin{dcases}
            \vy^*_i & \text{if } u = v_i^* \text{ for some } i \in \setR \\
            \orcl{u, f^*} & \text{if } u \in T_1^2 \setminus
                \condSet{v_i^*}{i \in \setR}
        \end{dcases}
    \end{align}
    Of note, the representation size of $\vRyfp{\cdot}$ can be implicitly
    affected by $\epsilon'$ since higher precision may be needed to represent
    $v_i^*$ for smaller $\epsilon'$. In particular, if $\ppfuncInv{u}{c}$ is
    always finite, then one can set $\epsilon' = 0$.
\end{restatable}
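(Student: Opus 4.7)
My plan is to construct $\vRy'$ via a Carathéodory sparsification of $\vRy$: although $f_{\vRy}(\vx) = \inf_{u \in T_1^2}\minPayoff{u_1, c_1, \vx, \vRy(u)}$ may be determined by infinitely many perceptions $u \in T_1^2$, only $k \leq \card{\setS_1}+1$ of them are needed to witness the KKT structure of the concave problem $\max_{\vx} f_{\vRy}(\vx)$. These serve as the hard-coded pairs $(v_i^*, \vy_i^*)$, while the oracle $\mathfrak{O}(\cdot, f^*)$ handles the rest.

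\textbf{Setup.} Set $f^* \defeq f_{\vRy}(\vx^*)$; by the $\epsilon$-equilibrium hypothesis, $V \defeq \max_{\vx} f_{\vRy}(\vx) \leq f^* + \epsilon$. Writing $\phi_{\vy}(\vx) \defeq \minPayoff{u_1, c_1, \vx, \vy}$, \cref{thm:lpg:payoff-bounds-convexity} makes each $\phi_{\vy}$ continuous and concave in $\vx$, so $f_{\vRy} = \inf_{\vy \in Y}\phi_{\vy}$, with $Y \defeq \condSet{\vRy(u)}{u \in T_1^2}$, is continuous and concave on the compact $\simplex{\setS_1}$ and attains $V$ at some $\vx^{**}$. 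Continuity of $\phi_{\vy}$ in $\vy$ upgrades the infimum to a minimum over the compact closure $\overline{Y} \subseteq \simplex{\setS_2}$, attained on the nonempty active set $\overline{A} \defeq \condSet{\vy \in \overline{Y}}{\phi_{\vy}(\vx^{**}) = V}$.

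\textbf{Carathéodory sparsification.} Optimality of $\vx^{**}$ for concave maximization over $\simplex{\setS_1}$ yields some $\vg \in \partial f_{\vRy}(\vx^{**})$ and a scalar $t$ with $g_j = t$ for $j \in \supp(\vx^{**})$ and $g_j \leq t$ otherwise. By the standard infimum-subdifferential calculus, $\vg$ lies in $\overline{\setConv}\bigcup_{\vy \in \overline{A}}\partial \phi_{\vy}(\vx^{**}) \subseteq \real^{\card{\setS_1}}$, so Carathéodory extracts active points $\vy^{(1)}, \ldots, \vy^{(k)} \in \overline{A}$ with $k \leq \card{\setS_1}+1$ whose subgradients convex-combine to $\vg$. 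The sparse concave function $h(\vx) \defeq \min_i \phi_{\vy^{(i)}}(\vx)$ then inherits $\vg$ in its superdifferential at $\vx^{**}$ and satisfies $h(\vx^{**}) = V$, so $\max_{\vx} h(\vx) = V$. If some $\vy^{(i)} \notin Y$, replace it by $\vRy(u^{(i)})$ for $u^{(i)} \in T_1^2$ with $\vRy(u^{(i)})$ close to $\vy^{(i)}$ via density of $Y$ in $\overline{Y}$; uniform continuity of $\phi$ on the compact product of simplices makes the perturbed function $h'(\vx) \defeq \min_i \phi_{\vRy(u^{(i)})}(\vx)$ satisfy $\max_{\vx} h'(\vx) \leq V + \epsilon'$. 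Set $v_i^* \defeq u^{(i)}$ and $\vy_i^* \defeq \vRy(v_i^*)$, padding with duplicates if fewer than $\card{\setS_1}+1$ witnesses arise.

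\textbf{Verification and main obstacle.} The oracle $\mathfrak{O}(u, f^*)$ is feasible at every $u \in T_1^2$ because $\vRy(u)$ itself meets both constraints: $\minPayoff{u_1, c_1, \vx^*, \vRy(u)} \geq f^*$ by the defining infimum, and $\vRy(u)$ is an $\epsilon$-best response for player 2, satisfying (c) with the looser slack $\epsilon + \epsilon'$. This gives the $(\epsilon + \epsilon')$-best-response condition for player 2 at every $u$. For player 1, each branch of $\vRy'$ keeps $\minPayoff{u_1, c_1, \vx^*, \vRy'(u)} \geq f^*$, so $f_{\vRy'}(\vx^*) \geq f^*$; pointwise $f_{\vRy'}(\vx) \leq h'(\vx)$ yields $\max_{\vx}f_{\vRy'}(\vx) \leq V + \epsilon' \leq f^* + \epsilon + \epsilon'$, establishing $(\epsilon + \epsilon')$-maximality of $\vx^*$. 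The hard part will be the Carathéodory step together with the perturbation from $\overline{Y}$ to $Y$: one must justify the subdifferential formula for the infimum over the possibly uncountable compact active set $\overline{A}$, and uniformly bound the substitution error so it fits inside the $\epsilon'$ budget. When $\ppfuncInv{\cdot}{\cdot}$ is always finite, the infimum is attained over $Y$ itself, the perturbation is vacuous, and $\epsilon' = 0$ suffices.
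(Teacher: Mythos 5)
Your proposal is correct and arrives at the same compact representation, but by a genuinely different route. The paper linearizes: since $\minPayoff{u_1, c_1, \vx, \vy} = \inf_{\vs} \T{\vs}\vx$ with $\vs$ ranging over the payoff vectors $u'\sqty{\cdot, \vy}$ for $u' \in \ppfuncInv{u_1}{c_1}$, the objective $f_{\vRy}(\vx)$ equals $\min_{\vs \in S}\T{\vs}\vx$ for a single compact convex set $S \subseteq \real^{\card{\setS_1}}$; the minimax theorem then yields one certificate $\vs^* \in S$ with $\max_{\vx}\T[*]{\vs}\vx = \max_{\vx} f_{\vRy}(\vx)$, and Carath\'eodory is applied directly to $\vs^*$ as a convex combination of $\card{\setS_1}+1$ approximate extreme payoff vectors, each tagged by some $v_i^*$. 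You instead keep the concave functions $\phi_{\vy}$ nonlinear and apply Carath\'eodory to supergradients, via the KKT certificate at the maximizer $\vx^{**}$ and a Valadier-type inclusion of $\partial f_{\vRy}(\vx^{**})$ in the closed convex hull of the active supergradients. Your downstream verification (oracle feasibility from the $\epsilon$-best-response property, $f_{\vRy'}(\vx^*) \geq f^*$ from both branches, and $\max_{\vx} f_{\vRy'}(\vx) \leq V + \epsilon'$ from the sparse upper envelope) matches the paper's almost line for line. The one ingredient you explicitly defer --- the subdifferential formula for an infimum over the uncountable compact index set $\overline{Y}$ --- is the genuine crux of your route. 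It does hold here (the index set is compact, $\minPayoff{u_1,c_1,\vx,\vy}$ is jointly continuous, and bounded concretization makes the supergradients uniformly bounded, so the union of active supergradients is compact and its convex hull closed, letting Carath\'eodory apply without closure headaches), and it is citable as the Ioffe--Tikhomirov/Valadier formula; but it is precisely the machinery the paper's linearization is designed to avoid, since once $f_{\vRy}$ is a minimum of linear functions the minimax theorem hands over the convex-combination certificate with no subdifferential calculus at all. So your argument is sound once that formula is cited or proved; the paper's version is the more elementary and self-contained of the two.
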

\newcommand{\SY}{S_{\vRy}}
\newcommand{\fY}[1]{f_{\vRy}\sqty{#1}}
\begin{proofidea}
    The first player's Nash objective function \cref{
    eqn:lpg:2p-fully-capability-aware-nash:f1} can be rewritten as $\fY{\vx} =
    \min_{\vs \in \SY}\T{\vs}\vx$ where $\SY$ is a convex, compact set derived
    from the second player's response function $\vRyf{\cdot}$. With the minimax
    theorem, we can find $\vs^* \in \SY$ so that $\max_{\vx \in
    \simplex{\setS_1}} \T[*]{\vs}\vx = \max_{\vx \in \simplex{\setS_1}}
    \fY{\vx}$. Carath\'{e}odory's theorem allows us to rewrite $\vs^*$ as a
    convex combination of at most $\card{\setS_1} + 1$ vertices of $\SY$. Since
    the infimum over $T_1^2$ is involved in $\fY{\vx}$, we need to use
    approximate vertices that correspond to $v_i^*$ and $\vy_i^*$ in \cref{
    eqn:lpg:fully-capability-aware-nash-repr}. The complete proof is deferred to
    the appendix (\cref{ apx:sec:repr-proof}).
\end{proofidea}
\endgroup

\subsection{The zero-sum case}
This section defines zero-sum games in the one-shot limited-perception setting.
For an arbitrary payoff perception function, even if the true game is zero-sum,
the perceived game may be nonzero-sum. To address this issue, we require that
the payoff perception function be \emph{odd} so that the perceived game is also
zero-sum.
\begin{definition}[Odd payoff perception functions]
    \label{def:lpg:odd}
    A payoff perception function $\ppfuncSym: U \times \extPosInt \to U$ is called
    \emph{odd} if and only if
    \begin{align}
        \label{eqn:lpg:odd}
        \forall u \in U,\quad \forall c \in \posInt,\quad
        \ppfunc{-u}{c} = -\ppfunc{u}{c}
    \end{align}
\end{definition}

It is straightforward to verify that the payoff perception functions of the
masked payoff game (\cref{ex:lpg:masked-payoff}) and the quantized payoff game
(\cref{ex:lpg:quant-payoff}) are odd. It's trivial to verify the following
properties of an odd $\ppfuncSym(\cdot)$:
\begin{itemize}
    \item $\ppfMaxC{-u} = \ppfMaxC{u}$ for any $u \in U$.
    \item $\ppfuncInv{-u}{c} = \condSet{-v}{v \in \ppfuncInv{u}{c}}$ for any
        $u \in U$ and $c \in \posInt$.
    \item For $v \in U$, $c \in \posInt$, and $\vsig_k \in
        \simplex{\setS_k}$, the payoff bounds satisfy
        \begin{align*}
        \begin{split}
            \minPayoff{-v, c, \qty(\vsig_k)_{k\in\setN}} &=
                -\maxPayoff{v, c, \qty(\vsig_k)_{k\in\setN}} \\
            \maxPayoff{-v, c, \qty(\vsig_k)_{k\in\setN}} &=
                -\minPayoff{v, c, \qty(\vsig_k)_{k\in\setN}}
        \end{split}
        \end{align*}
\end{itemize}

\begin{definition}[\tfzGame{} and its notation]
    \label{def:lpg:tfzg}
    In a two-player zero-sum one-shot limited-perception game (\tfzGame{} for
    short), besides requirements given in \cref{def:lpg:gameplay}, there are
    three additional conditions:
    \begin{itemize}
        \item The payoff perception function is odd (\cref{def:lpg:odd}).
        \item Players know that the true game is zero-sum.
        \item The above two conditions are common knowledge.
    \end{itemize}
    In a \tfzGame{} instance, we use $\sqty{u_1, -u_1}$ and $\sqty{u_2, -u_2}$
    to denote the perceived payoff functions of the first and the second player,
    respectively. The first player maximizes $\minPayoff{u_1, c_1, \vx, \vy}$.
    The second player's goal is  maximizing $\minPayoff{-u_2, c_2, \vx, \vy}$,
    which is equivalent to minimizing $\maxPayoff{u_2, c_2, \vx, \vy}$.
\end{definition}

Similar to the general-sum Nash equilibrium (\cref{
def:lpg:2p-fully-capability-aware-nash}), in a zero-sum game, the
lower-capability player reasons over all possible true payoff functions and the
opponent's corresponding best response to determine their strategy. It differs
from the general-sum case in that a possible perception of the higher-capability
player is tied to the lower-capability player's true payoff function. The
following definition formalizes the zero-sum Nash equilibrium.

\begin{definition}[Nash equilibrium of \tfzGame{}]
    \label{def:lpg:tfzg-nash}
    In a \tfzGame{} instance (\cref{ def:lpg:tfzg}), assume without loss of
    generality that $c_1 \leq c_2$. Let $\sqty{u_1, -u_1}$ and $\sqty{u_2,
    -u_2}$ be the perceived payoff functions of the first and the second player,
    respectively. Define $T_1^2 \defeq \ncSet{u_1, c_1, c_2}$ where
    $\ncSet{\cdot}$ is the narrow concretization set defined in \cref{
    def:lpg:narrow-concretization}.

    A pair $\sqty{\vx^*, \vRy}$ is a Nash equilibrium of this instance if and
    only if the following conditions hold:
    \begin{itemize}
        \item $\vx^* \in \simplex{\setS_1}$ is a mixed strategy for the first
            player, and $\vRy: T_1^2 \mapsto \simplex{\setS_2}$ is the response
            function of the second player.
        \item The second player has no incentive to deviate from $\vRyf{\cdot}$
            given $\vx^*$:
            \begin{align}
            \label{eqn:lpg:tfzg-nash:R2}
                \forall u_2' \in T_1^2:\:
                \vRyf{u_2'} \in \argminS_{\vy \in \simplex{\setS_2}}
                    \maxPayoff{u_2', c_2, \vx^*, \vy}
            \end{align}
        \item Let $f_{\vRy}(\vx)$ be the first player's worst payoff when they
            play $\vx$ against $\vRy$. Then $\vx^*$ is a maximizer of $f_{
            \vRy}(\vx)$:
            \begin{align}
            \label{eqn:lpg:tfzg-nash:f1}
            \begin{split}
                \vx^* &\in \argmaxS_{\vx \in \simplex{\setS_1}} f_{\vRy}(\vx) \\
                \text{where } f_{\vRy}(\vx) &\defeq
                    \inf_{u_2' \in T_1^2}
                    \minPayoff{u_2', c_2, \vx, \vRyf{u_2'}}
            \end{split}
            \end{align}
    \end{itemize}
    Given a Nash equilibrium $\sqty{\vx^*, \vRy}$, the first player samples an
    action according to $\vx^*$, and the second player samples an action
    according to $\vRyf{u_2}$. The values $f_{\vRy}\sqty{\vx^*}$ and
    $\maxPayoff{u_2, c_2, \vx^*, \vRyf{u_2}}$ are the \emph{Nash values} of the
    first and second player at the equilibrium, respectively. Of note, the
    second player tries to minimize their Nash value.
\end{definition}

In sharp contrast to the conventional normal-form games where the zero-sum case
can be solved efficiently, finding a Nash equilibrium of a \tfzGame{} instance
is not easier than solving a general-sum game. The following theorem presents a
reduction from a general-sum instance to a \tfzGame{} instance. The key idea is
that in a \tfzGame{} instance, players are optimizing weakly coupled objectives
(one maximizing the lower bound, the other minimizing the upper bound of a
different perception). Since any game's equilibrium is invariant with respect to
affine transformations of the payoff functions, given a two-player general-sum
instance, we can separate the ranges of the players' payoffs by applying affine
transformation so that they are optimizing equivalent objectives in a
corresponding \tfzGame{} instance.
\begingroup
\newcommand{\uAz}{\bar{u}_1}
\newcommand{\uBz}{\bar{u}_2}
\newcommand{\ppfuncSymz}{\bar{\ppfuncSym}}
\newcommand{\ppfuncz}[3][]{\ppfuncSymz_{#1}\sqty{#2, #3}}
\newcommand{\vxnz}{\overbar{\vx}^*}
\newcommand{\vRynz}{\overbar{\vRy}}
\newcommand{\vRynzf}[1]{\overbar{\vRy}\sqty{#1}}
\begin{theorem}[Reducing a general-sum game to \tfzGame]
    \label{thm:lpg:tfzg-gsum-reduction}
    Let $\sqty{u_1, v_1}$ and $\sqty{u_2, v_2}$ be the perceived payoff
    functions of the players in a two-player \oslpGame{} instance with
    capability levels $c_1 \leq c_2$ and $\ppfuncSym(\cdot)$ as the payoff
    perception function. Let $ m \defeq \card{\setS_1}$ and $n \defeq
    \card{\setS_2}$ be the numbers of pure actions available to each player. Let
    $T_1^2 \defeq \ncSet{v_1, c_1, c_2}$.

    Then there exist constants $k_n, b_n \in \real$ with $k_n \neq 0$, and a
    \tfzGame{} instance where $\bar{\setS}_1 = \intInt{1}{2m}$, $\bar{\setS}_2 =
    \setS_2$, $\ppfuncSymz(\cdot)$ is the payoff perception function, and
    players have capability levels $\bar{c_1} = 1$ and $\bar{c_2} = 2$ with
    perceptions as $\sqty{\uAz, -\uAz}$ and $\sqty{\uBz, -\uBz}$, such that for
    any Nash equilibrium $\sqty{\vxnz, \vRynz}$ of the \tfzGame{} instance,
    $\sqty{\vx^*, \vRy}$ is a Nash equilibrium of the original game where
    \begin{align}
        \label{eqn:lpg:tfzg-gsum-reduction}
        \vx^* \defeq \vxnz_{:m},\quad
        \vRyf{v} \defeq \vRynzf{\mqty[\V{1} \\ k_n \vv + b_n]}
        \text{ for } v \in T_1^2
    \end{align}
    Note that we treat a payoff function and a payoff matrix interchangeably,
    with the first player being the row player. For a (row) vector or a matrix
    $\vA$, $\vA_{:m}$ denotes the first $m$ rows of $\vA$, and $\vA_{m+1:}$
    denotes the remaining rows.
\end{theorem}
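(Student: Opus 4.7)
The plan is to construct a \tfzGame{} instance whose Nash equilibria correspond to those of the given general-sum \oslpGame{}, with the map in \cref{eqn:lpg:tfzg-gsum-reduction} realizing this correspondence. I double player~1's action space to $\bar{\setS}_1 = \intInt{1}{2m}$, keep $\bar{\setS}_2 = \setS_2$, and use capability levels $\bar{c}_1 = 1$, $\bar{c}_2 = 2$. The new payoff perception function $\bar{\ppfuncSym}$ and true payoff matrix $\bar{U}^{\mathrm{true}}$ are designed so that for every $v \in T_1^2$, the concretization $\bar{\ppfuncSym}^{-1}\sqty{\phi(v), 2}$, with $\phi(v) \defeq \mqty[\V{1} \\ k_n v + b_n]$, decomposes as a union of two components: component~(i) has top half varying in $\ppfuncInv{u_1}{c_1}$ (the original P1 concretization at level $c_1$), and component~(ii) has top half equal to the affine image $-k_n \hat{v} - b_n J$ for $\hat{v}$ varying in $\ppfuncInv{v}{c_2}$. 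The scalars $k_n > 0$ and $\abs{b_n}$ large are then chosen so that (a) the entries in every bottom row of every matrix in every relevant concretization are strictly less than the entries in every top row, and (b) component~(i) dominates the infimum while component~(ii) dominates the supremum of $\bar{x}^\intercal \hat{U} \vy$ whenever $\bar{x}$ is supported on the first $m$ rows. The shift $b_n$ is absorbed into $\bar{U}^{\mathrm{true}}$ rather than into $\bar{\ppfuncSym}$ itself, so that $\bar{\ppfuncSym}$ is negation-equivariant and hence odd.

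First, I verify the axioms of $\bar{\ppfuncSym}$: capability path independence, perfect perception with infinite capability, and bounded concretization reduce to the corresponding properties of $\ppfuncSym$ applied on each $m \times n$ slot, and oddness follows since no additive shift sits inside $\bar{\ppfuncSym}$. By \cref{thm:lpg:payoff-bounds-convexity} and the bounded concretization of $\ppfuncSym$, scalars $k_n, b_n$ with the required separation exist. Property~(a) then forces, in every TZolp Nash equilibrium $\sqty{\overbar{\vx}^*, \overbar{\vRy}}$, $\overbar{\vx}^*$ to be supported on the first $m$ indices: otherwise moving $\epsilon$ mass from any bottom row to any top row strictly increases $\minPayoff{\bar{u}_2', 2, \bar{x}, \vy}$ uniformly in $\bar{u}_2' \in \ncSet{\bar{u}_1, 1, 2}$ and in $\vy$, hence strictly increases $f_{\overbar{\vRy}}$. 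Therefore $\vx^* \defeq \overbar{\vx}^*_{:m} \in \simplex{\setS_1}$ is a well-defined mixed strategy.

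Next I verify the equilibrium correspondence. Setting $\vRyf{v} \defeq \overbar{\vRy}\sqty{\phi(v)}$ for $v \in T_1^2$, property~(b) yields $\minPayoff{\phi(v), 2, \overbar{\vx}^*, \vy} = \minPayoff{u_1, c_1, \vx^*, \vy}$, which is independent of $v$, and $\maxPayoff{\phi(v), 2, \overbar{\vx}^*, \vy} = -k_n \minPayoff{v, c_2, \vx^*, \vy} - b_n$ for every $\vy$. The TZolp second-player best-response condition \cref{eqn:lpg:tfzg-nash:R2} (minimizing $\maxPayoff$) therefore becomes maximizing $\minPayoff{v, c_2, \vx^*, \cdot}$, which is exactly \cref{eqn:lpg:2p-fully-capability-aware-nash:R2}; and the TZolp first-player optimality \cref{eqn:lpg:tfzg-nash:f1} reduces to $\vx^*$ maximizing $\inf_v \minPayoff{u_1, c_1, \vx, \vRy(v)}$, which is \cref{eqn:lpg:2p-fully-capability-aware-nash:f1}. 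I also check that $\phi$ is a bijection between $T_1^2$ and $\ncSet{\bar{u}_1, 1, 2}$, so that the inf over $\bar{T}_1^2$ reduces via $\phi^{-1}$ to an inf over $T_1^2$ without extraneous elements.

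The main obstacle will be designing $\bar{\ppfuncSym}$ so that $\bar{\ppfuncSym}^{-1}\sqty{\phi(v), 2}$ has the two-component structure described above while simultaneously satisfying bounded concretization, capability path independence, and oddness. Intuitively, the two components encode two ``sub-games'' inside a single TZolp instance, and the ``weak coupling'' between the first player's lower-bound objective and the second player's upper-bound objective is what allows these two sub-games to simulate the two separate objectives of a general-sum game. The bookkeeping is delicate because both components must be present in the concretization of every $\phi(v)$ for $v \in T_1^2$, yet $\bar{\ppfuncSym}$ applied to any matrix in either component must yield $\phi(v)$ for the right $v$; I anticipate this is arranged by tagging the two components via distinct bottom-half patterns so that $\bar{\ppfuncSym}$ can recover the right $v$ from any matrix in either component without breaking negation-equivariance.
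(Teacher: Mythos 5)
Your proposal is essentially the paper's own construction: doubled row space, capability levels $1$ and $2$, fibers $\phi(v) = \mqty[\V{1} \\ k_n v + b_n]$ whose level-$2$ concretizations split into a component carrying $\ppfuncInv{u_1}{c_1}$ (driving the infimum, i.e., player 1's objective) and an orientation-reversed affine image of $\ppfuncInv{v}{c_2}$ (driving the supremum, i.e., player 2's objective), with bottom rows made uniformly worse so that $\overbar{\vx}^*$ is pinned to the top $m$ rows. The ``tagging via distinct bottom-half patterns'' you anticipate is exactly what the paper does: the first component gets bottom half $-(k_n v + b_n)$, which simultaneously identifies the fiber $v$ and penalizes the bottom rows, while the second component gets bottom half $\V{0}$ and is self-identifying because $v = \ppfunc{\hat{v}}{c_2}$. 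Two details in your sketch need repair. First, $\phi$ is not a bijection onto $\ncSet{\bar{u}_1, 1, 2}$: the function $\bar{u}_1$ always belongs to its own narrow concretization set, so the first player's infimum ranges over one extra fiber whose value must be shown never to bind (the paper arranges this by making that fiber's payoff identically $1$, strictly above the range $(0,1)$ of the component realizing player 1's objective). Second, keeping component~(i)'s top half as the \emph{raw} set $\ppfuncInv{u_1}{c_1}$ is incompatible with simultaneously placing it below the $\bar{u}_1$ fiber, below the element $\phi(v)$ itself (which also lies in the concretization), and below the second component, since $\ppfuncInv{u_1}{c_1}$ can sit anywhere on the real line; you need the additional affine normalization $k_p(\cdot)+b_p$ of that component, as the paper uses, so that all four pieces occupy disjoint, correctly ordered ranges.
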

\begin{proof}
    Let $U \defeq \real^{m \times n}$ and $\bar{U} \defeq \real^{2m \times n}$
    be the sets of all possible payoff functions in the general-sum and zero-sum
    games, respectively. For a set $C \subsetneq U$ of payoff matrices, use
    $\min C$ to denote the minimum payoff among the items in $C$, defined as
    $\min C \defeq \inf_{\vA \in C} \min_{i,\, j} A_{ij}$. Define $\max C$
    analogously. For $k, b \in \real$, use $kC + b$ to denote the set
    $\condSet{k\vA + b}{\vA \in S}$.

    Let $C_u \defeq \ppfuncInv{u_1}{c_1}$ and $C_v \defeq \ppfuncInv{v_1}{c_1}$.
    Note that $T_1^2 \subseteq C_v$. Due to the bounded concretization property,
    there are constants $k_n < 0 < k_p$ (where ``n'' stands for negative and
    ``p'' for positive) and $b_n, b_p \in \real$ such that
    \begin{gather*}
        0 < \min \qty(k_p C_u + b_p) \leq \max \qty(k_p C_u + b_p) < 1 \\
        1 < \min \qty(k_n C_v + b_n) \leq \max \qty(k_n C_v + b_n) < 2
    \end{gather*}
    Define the function $\uAz(\cdot)$ as the all-one function, i.e.,
    $\uAz\sqty{i, j} = 1$ for $i \in \bar{\setS}_1$ and $j \in \bar{\setS}_2$.
    Define a function $\ppfuncSymz_{0}: \bar{U} \times \extPosInt \to \bar{U}$
    as $\ppfuncz[0]{\uAz}{c} = \uAz$, and $\ppfuncz[0]{u}{c} = u$ except for the
    following cases:
    \begin{alignat*}{3}
        \forall \vV \in T_1^2:&&
        \ppfuncz[0]{\mqty[ k_n \vV' + b_n \\ \V{0} ]}{2} &=
        \mqty[\V{1} \\ k_n \vV + b_n] &\quad&
            \text{for } \vV' \in \ppfuncInv{\vV}{c_2} \\
        \forall \vV \in T_1^2:&&
        \ppfuncz[0]{\mqty[ k_p \vU + b_p \\ -(k_n \vV + b_n) ]}{2} &=
        \mqty[\V{1} \\ k_n \vV + b_n] &&
            \text{for } \vU \in C_u \\
        \forall \vV \in T_1^2:&&
        \ppfuncz[0]{\mqty[ \V{1} \\ k_n \vV + b_n ]}{2} &= \mqty[
            \V{1} \\ k_n \vV + b_n ],\:
        \mathrlap{\ppfuncz[0]{\mqty[ \V{1} \\ k_n \vV + b_n ]}{1} = \uAz} &&
    \end{alignat*}
    Define $\ppfuncSymz: \bar{U} \times \extPosInt \to \bar{U}$ as
    $\ppfuncz{-u}{c} \defeq -\ppfuncz[0]{u}{c}$ if $u$ corresponds to any
    special case above; otherwise, $\ppfuncz{u}{c} \defeq \ppfuncz[0]{u}{c}$.
    Then $\ppfuncSymz(\cdot)$ is an odd payoff perception function. Let
    $\bar{T}_1^2 \defeq \ncSet{\uAz, 1, 2}$ be the narrow concretization set
    with respect to $\ppfuncSymz( \cdot)$. We have $\bar{T}_1^2 = \cqty{\uAz}
    \cup \condSet{\mqty[\V{1} \\ k_n \vV + b_n]}{\vV \in T_1^2}$. One can verify
    that for any $u' \in \bar{T}_1^2 \setminus \cqty{\uAz}$, $\vx \in
    \simplex{2m}$, and $\vy \in \simplex{n}$, the payoff bounds satisfy:
    \newcommand{\condEq}{\overset{?}{=}}
    {\small
    \begin{gather*}
    \begin{aligned}
        \minPayoff[\ppfuncSymz]{u', 2, \vx, \vy} &=
            \inf_{\vU \in C_u}
            \T{\vx} \mqty[k_p \vU + b_p \\ -(k_n \vV_0 + b_n)] \vy
            \condEq k_p \minPayoff[\ppfuncSym]{u_1, c_1, \vx_{:m}, \vy} + b_p \\
        \maxPayoff[\ppfuncSymz]{u', 2, \vx, \vy} &\condEq
            \sup_{\vV' \in \ppfuncInv{\vV_0}{c_2}}
            \T{\vx} \mqty[k_n \vV' + b_n \\ \V{0}] \vy
            = k_n \minPayoff[\ppfuncSym]{\vV_0, c_2, \vx_{:m}, \vy} + b_n
    \end{aligned} \\
        \text{where } u' =  \mqty[\V{1} \\ k_n \vV_0 + b_n]
        \text{ for some } \vV_0 \in T_1^2, \text{ and }
        \condEq \text{ holds if } \vx_{m+1:} = 0
    \end{gather*}}%
    We also have $\minPayoff[\ppfuncSymz]{\uAz, 2, \vx, \vy} =
    \maxPayoff[\ppfuncSymz]{\uAz, 2, \vx, \vy} = 1$ for any $\vx \in
    \simplex{2m}$ and $\vy \in \simplex{n}$. Since $-(k_n \vV_0 + b_n) < 0 < k_p
    \vU + b_p < 1$ for $\vV_0 \in T_1^2$ and $\vU \in C_u$, and $\vxnz$
    maximizes $\vx \mapsto \inf_{u' \in \bar{T}_1^2} \minPayoff[\ppfuncSymz]{
    u', 2, \vx, \vRynzf{u'}}$, we have $\vxnz_{m+1:}=\V{0}$. Given \cref{
    eqn:lpg:tfzg-nash:f1, eqn:lpg:tfzg-nash:R2}, the equilibrium condition for
    $\sqty{\vxnz, \vRynz}$ is:
    {\small
    \begin{gather*}
        \vxnz \in
            \argmaxS_{\vx \in \simplex{2m}}
                \inf_{u' \in \bar{T}_1^2}
                \minPayoff[\ppfuncSymz]{u', 2, \vx, \vRynzf{u'}}
            = \argmaxS_{\substack{\vx \in \simplex{2m} \\
                    \vx_{m+1:} = \V{0}}}
                \inf_{\substack{u' \in \bar{T}_1^2 \\
                        u' \neq \uAz}}
                \minPayoff[\ppfuncSym]{u_1, c_1, \vx_{:m}, \vRynzf{u'}} \\
        \forall u' \in \bar{T}_1^2 \setminus \cqty{\uAz}:\:
        \vRynzf{u'} \in
            \argminS_{\vy \in \simplex{n}}
                \maxPayoff[\ppfuncSymz]{u', 2, \vxnz, \vy}
            = \argmaxS_{\vy \in \simplex{n}}
                \minPayoff[\ppfuncSym]{\vV_{u'}, c_2, \vxnz_{:m}, \vy} \\
        \text{where } \vV_{u'} \defeq \frac{\vu'_{m+1:} - b_n}{k_n}
        \text{ so that }
        \mqty[\V{1} \\ k_n \vV_{u'} + b_n] = u'
        \text{ always holds}
    \end{gather*}}%
    Define $\sqty{\vx^*, \vRy}$ as in \cref{ eqn:lpg:tfzg-gsum-reduction}.
    Then the above conditions are equivalent to \cref{
    eqn:lpg:2p-fully-capability-aware-nash:f1,
    eqn:lpg:2p-fully-capability-aware-nash:R2} for $\sqty{\vx^*, \vRy}$ in the
    original game.
    \qed
\end{proof}
\endgroup

\section{Efficiently solvable zero-sum games}
\label{sec:lpg:efficient-zero-sum}

\subsection{The maximin-attainable \tfzGame{} instances}
A conventional zero-sum game can be efficiently solved due to the maximin
theorem. We generalize the condition to \tfzGame{}s by defining
\emph{maximin-attainable} instances. In all \tfzGame{} instances (\cref{
def:lpg:tfzg}), the lower-capability player's Nash value lies between their
maximin value and the Stackelberg value. If the maximin value equals the
Stackelberg value, the instance is called maximin-attainable.

\begin{definition}[Stackelberg objective in \tfzGame{}]
    \label{def:lpg:tfzg-stackelberg}
    In a \tfzGame{} instance as defined in \cref{ def:lpg:tfzg} with $c_1 \leq
    c_2$, define $T_1^2 \defeq \ncSet{u_1, c_1, c_2}$. For a second player's
    response function $\vRy: T_1^2 \times \simplex{\setS_1} \mapsto
    \simplex{\setS_2}$, define the first player's \emph{Stackelberg objective
    function} $\gRy: \simplex{\setS_1} \mapsto \real$ as
    \begin{align}
    \label{eqn:lpg:tfzg-stackelberg:g}
    \begin{split}
        \gRyf{\vx} &\defeq \inf_{u_2' \in T_1^2} \minPayoff{u_2', c_2, \vx,
            \vRyf{u_2', \vx}}
    \end{split}
    \end{align}
    A response function $\vRyf{\cdot}$ is called a \emph{best response function}
    if and only if:
    \begin{align}
    \label{eqn:lpg:tfzg-stackelberg:R2}
        \forall \sqty{u_2', \vx} \in T_1^2 \times \simplex{\setS_1}:\:
        \Ryf{u_2', \vx} \in \argminS_{\vy \in \simplex{\setS_2}}
            \maxPayoff{u_2', c_2, \vx, \vy}
    \end{align}
    If $\gRyf{\cdot}$ is a best response function, the corresponding
    \emph{Stackelberg value} of the first player is defined as $V_s \defeq
    \sup_{\vx \in \simplex{\setS_1}} \gRyf{\vx}$.
\end{definition}

\begin{lemma}
    \label{thm:lpg:tfzg-p1-for-all}
    In a two-player \oslpGame{} instance where $c_1 \leq c_2$, let $\Ry: U
    \times \simplex{\setS_1} \mapsto 2^{\simplex{\setS_2}}$ be a
    non-empty-valued set-valued function. Define $T_1^2 \defeq \ncSet{u_1, c_1,
    c_2}$.

    For any $\vx \in \simplex{\setS_1}$, it holds that
    \begin{align}
    \label{eqn:lpg:tfzg-p1-for-all}
        \inf_{u' \in T_1^2} \inf_{\vy \in \Ryf{u', \vx}}
        \minPayoff{u', c_2, \vx, \vy}
        = \inf_{u' \in \ppfuncInv{u_1}{c_1}}
        \inf_{\vy \in \Ryf{ \ppfunc{u'}{c_2}, \vx}} u'\sqty{\vx, \vy}
    \end{align}
\end{lemma}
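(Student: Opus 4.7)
The plan is to unfold the definition of $\minPayoffSym$ on the left-hand side and then fuse the two outer infima via a disjoint-union reindexing of $\ppfuncInv{u_1}{c_1}$ over $T_1^2$.

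First I would expand, via \cref{def:lpg:payoff-bounds}, the inner quantity as $\minPayoff{u', c_2, \vx, \vy} = \inf_{u'' \in \ppfuncInv{u'}{c_2}} u''\sqty{\vx, \vy}$, then swap the resulting innermost infimum over $u''$ with the existing infimum over $\vy \in \Ryf{u', \vx}$; this swap is unconditional because the two index sets are independent of each other. At this point the left-hand side is a triple infimum over $u' \in T_1^2$, $u'' \in \ppfuncInv{u'}{c_2}$, and $\vy \in \Ryf{u', \vx}$.

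The crux of the proof is the set identity
$$\ppfuncInv{u_1}{c_1} \;=\; \bigsqcup_{u' \in T_1^2} \ppfuncInv{u'}{c_2},$$
with the fiber containing $u''$ labelled by $u' \defeq \ppfunc{u''}{c_2}$. For the forward inclusion, if $u'' \in \ppfuncInv{u'}{c_2}$ for some $u' \in T_1^2$, then capability path independence gives $\ppfunc{u''}{c_1} = \ppfunc{\ppfunc{u''}{c_2}}{c_1} = \ppfunc{u'}{c_1}$, and the definition of $\ncSet{u_1, c_1, c_2}$ together with $c_1 \le c_2$ and $\ppfunc{u_1}{c_1} = u_1$ (``perfect perception of an abstraction'' applied to $u_1$) forces $\ppfunc{u'}{c_1} = u_1$; hence $u'' \in \ppfuncInv{u_1}{c_1}$. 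Conversely, given $u'' \in \ppfuncInv{u_1}{c_1}$, set $u' \defeq \ppfunc{u''}{c_2}$; the same two axioms yield $\ppfMaxC{u'} \le c_2$ and $\ppfunc{u'}{c_1} = u_1$, so $u' \in T_1^2$ and $u'' \in \ppfuncInv{u'}{c_2}$. Disjointness is immediate because the label $\ppfunc{u''}{c_2}$ is a function of $u''$.

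Applying this partition collapses the two outer infima on the left-hand side into a single infimum over $u'' \in \ppfuncInv{u_1}{c_1}$, and inside $\Ryf{\cdot, \vx}$ the first argument becomes $\ppfunc{u''}{c_2}$; renaming $u''$ back to $u'$ yields exactly the right-hand side. I expect the main obstacle to be the bookkeeping in the set identity: keeping track of which axiom (path independence, perfect perception of an abstraction, $\ppfMaxC{\cdot}$ minimality) is invoked at each step is where slips are easy, though once one recognizes $u_1 = \ppfunc{u_1}{c_1}$ the capability-level arithmetic is routine.
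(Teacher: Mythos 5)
Your proposal is correct and follows essentially the same route as the paper: unfold $\minPayoffSym$ via \cref{def:lpg:payoff-bounds}, swap the independent infima, and show that $\ppfuncInv{u_1}{c_1}$ is exactly the union of the fibers $\ppfuncInv{u'}{c_2}$ over $u' \in T_1^2$ with label $u' = \ppfunc{u''}{c_2}$, which is the paper's set $S$. The only cosmetic difference is that you additionally verify disjointness of the fibers, which the infimum identity does not require.
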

\begin{proof}
    Recall that $\minPayoff{u_1, c_1, \vx, \vy} = \inf_{u' \in
    \ppfuncInv{u_1}{c_1}}u'\sqty{\vx, \vy}$ (\cref{ def:lpg:payoff-bounds}) and
    $T_1^2 = \condSet{u' \in U}{ \ppfMaxC{u'} \leq c_2
    \text{ and } \ppfunc{u'}{c_1} = u_1 }$ (\cref{
    def:lpg:narrow-concretization}). Define a set
    $
        S \defeq \condSet{u'' \in U}{ \exists u' \in
        T_1^2:\: u'' \in \ppfuncInv{u'}{c_2}}
    $. We have
    \begin{align*}
        &\inf_{u' \in T_1^2} \inf_{\vy \in \Ryf{u', \vx}}
            \minPayoff{u', c_2, \vx, \vy} =
        \inf_{u' \in T_1^2} \inf_{\vy \in \Ryf{u', \vx}}
            \inf_{u'' \in \ppfuncInv{u'}{c_2}} u''\sqty{\vx, \vy} \\
        &\quad =
        \inf_{u' \in T_1^2} \inf_{u'' \in \ppfuncInv{u'}{c_2}}
        \inf_{\vy \in \Ryf{\ppfunc{u''}{c_2}, \vx}} u''\sqty{\vx, \vy}
        = \inf_{u'' \in S} \inf_{\vy \in \Ryf{\ppfunc{u''}{c_2}, \vx}}
            u''\sqty{\vx, \vy}
    \end{align*}
    To prove \cref{eqn:lpg:tfzg-p1-for-all}, it suffices to show that $S =
    \ppfuncInv{u_1}{c_1}$. By \cref{def:lpg:ppf}, $\ppfuncInv{u_1}{c_1} =
    \condSet{u \in U}{\ppfunc{u}{c_1} = u_1}$. For $u' \in \ppfuncInv{u_1}{
    c_1}$, if $\ppfMaxC{u'} \leq c_2$, then $u' \in T_1^2$ and $u' \in
    \ppfuncInv{u'}{c_2}$; otherwise, let $u'' \defeq \ppfunc{u'}{c_2}$; then
    $u'' \in T_1^2$ and $u' \in \ppfuncInv{u''}{c_2}$. Thus $u' \in S$ in both
    cases. Conversely, for $u' \in S$, there exists $u''$ such that $u_1 =
    \ppfunc{u''}{c_1}$ and $u'' = \ppfunc{u'}{c_2}$, which implies $u_1 =
    \ppfunc{u'}{c_1}$ and thus $u' \in \ppfuncInv{u_1}{c_1}$. Therefore, $S =
    \ppfuncInv{u_1}{c_1}$.
\end{proof}

\begingroup
\newcommand{\vRyp}{\vRy^*}
\newcommand{\vRypf}[1]{\vRyf[^*]{#1}}
\newcommand{\vRynf}[1]{\vRyf[_n]{#1}}
\begin{lemma}[Order of \tfzGame{} equilibrium values]
    \label{thm:lpg:tfzg-equil-order}
    In a \tfzGame{} instance as defined in \cref{ def:lpg:tfzg} with $c_1 \leq
    c_2$, let $T_1^2 \defeq \ncSet{u_1, c_1, c_2}$.

    Define $h_1: \simplex{\setS_1} \mapsto \real$ as the first player's
    \emph{maximin objective function}:
    \begin{align}
        \label{eqn:lpg:tfzg-h1}
        h_1(\vx) \defeq \inf_{\vy \in \simplex{\setS_2}}
            \minPayoff{u_1, c_1, \vx, \vy}
    \end{align}
    Define $V_h \defeq \sup_{\vx \in \simplex{\setS_1}} h_1(\vx)$ as the
    \emph{maximin value}.

    Let $\sqty{\vx^*_n, \vRy_n}$ be a Nash equilibrium, and $V_n$ be the
    corresponding Nash value of the first player (\cref{ def:lpg:tfzg-nash}).

    Given $\vRynf{\cdot}$, define a response function $\vRyp: T_1^2 \times
    \simplex{\setS_1} \mapsto \simplex{\setS_2}$ as:
    \begin{align*}
        \vRypf{u, \vx} &\defeq \begin{dcases}
            \vRynf{u} & \text{if } \vx = \vx^*_n \\
            \argmin_{\vy \in \simplex{\setS_2}} \maxPayoff{u, c_2, \vx, \vy}
                & \text{otherwise}
        \end{dcases}
    \end{align*}
    Then $\vRypf{\cdot}$ is a best response function; let $V_s$ be the
    corresponding Stackelberg value (\cref{ def:lpg:tfzg-stackelberg}). Then the
    following inequality holds:
    \begin{align}
    \label{eqn:lpg:tfzg-equil-order}
        V_h \leq V_n \leq V_s
    \end{align}

    Moreover, if $V_h = V_s$, then $\sqty{\vx^*_n, \vRynf{u_2}}$ is a
    Stackelberg equilibrium.
\end{lemma}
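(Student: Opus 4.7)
The plan is to establish the chain $V_h \leq V_n \leq V_s$ by two short comparisons of the relevant infima/suprema, and then to derive the Stackelberg equilibrium clause as an immediate sandwich corollary.

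First, I would prove $V_n \leq V_s$ by evaluating the Stackelberg objective at $\vx = \vx^*_n$. The construction of $\vRypf{\cdot,\cdot}$ is tailored so that the best-response condition \cref{eqn:lpg:tfzg-stackelberg:R2} is satisfied by definition when $\vx \neq \vx^*_n$, and by the Nash condition \cref{eqn:lpg:tfzg-nash:R2} when $\vx = \vx^*_n$; hence $\vRyp$ is a legitimate best response function. Substituting $\vx^*_n$ into \cref{eqn:lpg:tfzg-stackelberg:g} collapses $g_{\vRyp}\sqty{\vx^*_n}$ to $\inf_{u' \in T_1^2}\minPayoff{u', c_2, \vx^*_n, \vRynf{u'}} = f_{\vRy_n}\sqty{\vx^*_n} = V_n$, so $V_s = \sup_\vx g_{\vRyp}(\vx) \geq V_n$.

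Next, I would prove $V_h \leq V_n$ via the pointwise bound $h_1(\vx) \leq f_{\vRy_n}(\vx)$. The key step is to apply \cref{thm:lpg:tfzg-p1-for-all} with the constant-in-$\vx$ set-valued response $\Ryf{u', \vx} \defeq \cqty{\vRynf{u'}}$, which rewrites $f_{\vRy_n}(\vx) = \inf_{u' \in \ppfuncInv{u_1}{c_1}} u'\sqty{\vx, \vRynf{\ppfunc{u'}{c_2}}}$. In parallel, unfolding $\minPayoffSym$ in the definition of $h_1$ and exchanging the two infima gives $h_1(\vx) = \inf_{u' \in \ppfuncInv{u_1}{c_1}} \inf_{\vy \in \simplex{\setS_2}} u'\sqty{\vx, \vy}$. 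Termwise, $\inf_\vy u'\sqty{\vx, \vy}$ is at most $u'\sqty{\vx, \vRynf{\ppfunc{u'}{c_2}}}$, so taking the infimum over $u'$ yields $h_1(\vx) \leq f_{\vRy_n}(\vx)$; taking the supremum over $\vx$ and invoking the Nash optimality \cref{eqn:lpg:tfzg-nash:f1} then gives $V_h \leq \sup_\vx f_{\vRy_n}(\vx) = f_{\vRy_n}\sqty{\vx^*_n} = V_n$.

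For the Stackelberg equilibrium clause, assuming $V_h = V_s$ forces \cref{eqn:lpg:tfzg-equil-order} to collapse into $V_n = V_s$; the identity $g_{\vRyp}\sqty{\vx^*_n} = V_n$ established in the first step then shows $\vx^*_n$ attains the Stackelberg supremum against the best response function $\vRyp$. Since $\vRypf{u, \vx^*_n} = \vRynf{u}$ for every $u \in T_1^2$, this is exactly the statement that $\sqty{\vx^*_n, \vRynf{\cdot}}$ is a Stackelberg equilibrium.

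The main obstacle is the careful invocation of \cref{thm:lpg:tfzg-p1-for-all} in the second step, which is what enables $f_{\vRy_n}(\vx)$ and $h_1(\vx)$ to be expressed as infima over the common index set $\ppfuncInv{u_1}{c_1}$. Once this common form is in place, the termwise bound $\inf_\vy u'\sqty{\vx, \vy} \leq u'\sqty{\vx, \vRynf{\ppfunc{u'}{c_2}}}$ finishes the comparison, and the first and third steps reduce to substitutions.
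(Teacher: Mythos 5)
Your proposal is correct and follows essentially the same route as the paper: the same decomposition into $V_h \leq V_n$ (via \cref{thm:lpg:tfzg-p1-for-all} to place $h_1$ and the Nash objective over a common index set, then a termwise bound) and $V_n \leq V_s$ (by evaluating the Stackelberg objective of $\vRyp$ at $\vx^*_n$), with the final clause following by the same sandwich argument. The only cosmetic difference is that you push both quantities to infima over $\ppfuncInv{u_1}{c_1}$ whereas the paper compares over $T_1^2$; these are interchangeable given the lemma.
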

\begin{proof}
    \Cref{ eqn:lpg:tfzg-nash:R2} implies that $\vRypf{\cdot}$ satisfies \cref{
    eqn:lpg:tfzg-stackelberg:R2}. Define $f_1: \simplex{\setS_1}^2 \mapsto
    \real$ as:
    \[
        f_1\sqty{\vx, \vx'} \defeq
            \inf_{u' \in T_1^2} \minPayoff{u', c_2, \vx, \vRypf{u', \vx'}}
    \]
    Plugging \cref{ def:lpg:payoff-bounds} and \cref{ thm:lpg:tfzg-p1-for-all}
    into $h_1(\cdot)$ yields
    \begin{align*}
        \forall \vx' \in \simplex{\setS_1}:\:
        h_1(\vx) = \inf_{u' \in T_1^2} \inf_{\vy \in \simplex{\setS_2}}
                \minPayoff{u', c_2, \vx, \vy} \leq
                f_1\sqty{\vx, \vx'}
    \end{align*}
    It's easy to verify that the Nash value of the first player as defined in
    \cref{ def:lpg:tfzg-nash} can be redefined as
    $
        V_n =  f_1\sqty{\vx^*_n, \vx^*_n} = \sup_{\vx \in \simplex{\setS_1}}
        f_1\sqty{\vx, \vx^*_n}
    $.
    Then $V_h = \sup_{\vx \in \simplex{\setS_1}} h_1(\vx) \leq \sup_{\vx \in
    \simplex{\setS_1}} f_1\sqty{\vx, \vx^*_n} = V_n$.

    We have $\gRyf{\vx} = \inf_{u' \in T_1^2} \minPayoff{u', c_2, \vx, \vRyf{u',
    \vx}} = f_1\sqty{\vx, \vx}$, which yields $V_s = \sup_{\vx} \gRyf{\vx} =
    \sup_{\vx} f_1\sqty{\vx, \vx} \geq f_1\sqty{\vx^*_n, \vx^*_n} = V_n$.
    Therefore, \cref{ eqn:lpg:tfzg-equil-order} holds. It also follows that
    $\sqty{\vx^*_n, \vRynf{u_2}}$ is a Stackelberg equilibrium when $V_h = V_s$.
    \qed
\end{proof}
\endgroup

Given \cref{thm:lpg:tfzg-equil-order}, it is natural to consider games with $V_h
= V_s$ so that the lower-capability player's Nash value equals their maximin
value. The following definition captures this concept.
\begin{definition}[Maximin-attainable \tfzGame{} instances]
    \label{def:lpg:maximin-attainable}
    In a \tfzGame{} instance as defined in \cref{ def:lpg:tfzg} with $c_1 \leq
    c_2$, let $T_1^2 \defeq \ncSet{u_1, c_1, c_2}$. Given any best response
    function $\vRy: T_1^2 \times \simplex{\setS_1} \mapsto \simplex{\setS_2}$
    satisfying \cref{ eqn:lpg:tfzg-stackelberg:R2}, let $\gRyf{\cdot}$ and $V_s$
    be the corresponding Stackelberg objective function and value as defined in
    \cref{ def:lpg:tfzg-stackelberg}. Let $V_h$ and $h_1(\cdot)$ be the first
    player's maximin value and maximin objective function, respectively, as
    defined in \cref{ thm:lpg:tfzg-equil-order}. The instance is
    \emph{maximin-attainable} for the first player if and only if $V_h =
    \sup_{\vx \in \simplex{ \setS_1}} \gRyf{\vx}$ for any $\vRyf{\cdot}$
    satisfying~\cref{ eqn:lpg:tfzg-stackelberg:R2}.
\end{definition}

\begingroup
\newcommand{\vxapx}{\vx_{\operatorname{approx}}}
\renewcommand{\fs}[2]{\mathfrak{f}_{#1}\sqty{#2}}
Since $\minPayoff{\cdot}$ is a concave function, the infimum in the maximin
objective function can be replaced by finite minimization over the extreme
points of $\simplex{\setS_2}$, which allows tractable computation:
\begin{theorem}[Equilibrium of maximin-attainable \tfzGame{} instances]
    \label{thm:lpg:maximin-equil}
    Assume a \tfzGame{} instance with $c_1 \leq c_2$ is maximin-attainable for
    the first player as defined in \cref{ def:lpg:maximin-attainable}. There
    is a strategy profile $\sqty{\vx^*, \vy^*}$ such that:
    \begin{align}
    \label{eqn:lpg:maximin-equil}
    \begin{gathered}
        \vx^* \in \argmaxS_{\vx \in \simplex{\setS_1}} \fs1{\vx},\quad
        \vy^* \in \argminS_{\vy \in \simplex{\setS_2}} \fs2{\vy} \\
        \text{where }
            \fs1{\vx} \defeq
                \min_{1 \leq k \leq \card{\setS_2}}
                \minPayoff{u_1, c_1, \vx, \ve_k},\quad
            \fs2{\vy} \defeq \maxPayoff{u_2, c_2, \vx^*, \vy}
    \end{gathered}
    \end{align}
    Any strategy profile $\sqty{\vx^*, \vy^*}$ satisfying \cref{
    eqn:lpg:maximin-equil} is a Stackelberg equilibrium when the first player is
    the leader. For the first player, all Nash values are the same as the
    Stackelberg value. Furthermore, there exists $\vRy_n: T_1^2 \mapsto
    \simplex{\setS_2}$ such that $\sqty{\vx^*, \vRy_n}$ is a Nash equilibrium as
    defined in \cref{ def:lpg:tfzg-nash}.
\end{theorem}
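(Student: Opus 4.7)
The plan divides into three stages.

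\textbf{Existence of $(\vx^*, \vy^*)$ and the identity $h_1 = \fs1{\cdot}$.}
By \cref{thm:lpg:payoff-bounds-convexity}, each $\minPayoff{u_1, c_1, \cdot, \ve_k}$ is continuous and concave on $\simplex{\setS_1}$, so $\fs1{\vx}$, being a pointwise minimum of finitely many such functions, is continuous and concave in $\vx$; on the compact simplex $\argmaxS_{\vx} \fs1{\vx}$ is therefore nonempty and I pick any $\vx^*$ in it. Likewise $\fs2{\vy}$ is continuous and convex in $\vy$ on the compact simplex, so $\argminS_{\vy} \fs2{\vy}$ is nonempty and I pick any $\vy^*$. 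Because $\minPayoff{u_1, c_1, \vx, \cdot}$ is an infimum of linear functions in $\vy$, it is concave in $\vy$, and its minimum over $\simplex{\setS_2}$ is attained at an extreme point of the simplex; hence $h_1(\vx) = \fs1{\vx}$ for every $\vx$, and in particular $V_h = \fs1{\vx^*}$.

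\textbf{$(\vx^*, \vy^*)$ is a Stackelberg equilibrium and every Nash value equals $V_h$.}
Maximin-attainability (\cref{def:lpg:maximin-attainable}) gives $V_s = V_h$. The derivation inside the proof of \cref{thm:lpg:tfzg-equil-order} shows $h_1(\vx) \leq \gRyf{\vx}$ for every best-response function $\vRy$; evaluating at $\vx^*$ yields $V_h = h_1(\vx^*) \leq \gRyf{\vx^*} \leq V_s = V_h$, which forces $\gRyf{\vx^*} = V_s$, making $\vx^*$ a Stackelberg leader optimum. Since $\vy^*$ is, by construction, a best response of the second player to $\vx^*$ under perception $u_2$, the pair $(\vx^*, \vy^*)$ realizes a Stackelberg equilibrium with the first player as leader. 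Combining $V_h \leq V_n \leq V_s$ from \cref{thm:lpg:tfzg-equil-order} with $V_h = V_s$ yields $V_n = V_h$ for every Nash equilibrium.

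\textbf{Existence of $\vRy_n$ with $(\vx^*, \vRy_n)$ Nash --- the main obstacle.}
I define $\vRy_n(u') \in M(u') \defeq \argminS_{\vy} \maxPayoff{u', c_2, \vx^*, \vy}$ for each $u' \in T_1^2$, setting in particular $\vRy_n(u_2) \defeq \vy^*$; this secures \cref{eqn:lpg:tfzg-nash:R2}. \Cref{thm:lpg:tfzg-p1-for-all} rewrites $f_{\vRy_n}(\vx)$ as an infimum over $\ppfuncInv{u_1}{c_1}$ of functions linear in $\vx$, so $f_{\vRy_n}$ is concave in $\vx$ and satisfies $f_{\vRy_n}(\vx^*) \geq h_1(\vx^*) = V_h$. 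Extending $\vRy_n$ to a best-response function $\vRy^*$ that equals $\vRy_n$ at $\vx^*$ and selects elements of $\argminS_{\vy} \maxPayoff{u', c_2, \vx, \vy}$ elsewhere, maximin-attainability gives $g_{\vRy^*}(\vx^*) \leq V_h$, and because $g_{\vRy^*}(\vx^*) = f_{\vRy_n}(\vx^*)$, I get $f_{\vRy_n}(\vx^*) = V_h$. The remaining --- and hardest --- step is to ensure $f_{\vRy_n}(\vx) \leq V_h$ for \emph{every} $\vx$, so that $\vx^* \in \argmaxS f_{\vRy_n}$ and \cref{eqn:lpg:tfzg-nash:f1} holds; this does not follow directly from maximin-attainability, which bounds $g_{\vRy^*}(\vx)$ (an $\vx$-dependent best response) rather than $f_{\vRy_n}(\vx)$ (a response fixed at $\vx^*$). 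My plan is to refine the selection within the compact convex product $\prod_{u' \in T_1^2} M(u')$: the map $\vRy_n \mapsto \sup_{\vx} f_{\vRy_n}(\vx)$ is lower semicontinuous on this domain so a minimizer exists, and I intend to show the minimum equals $V_h$ via a Sion-type minimax argument on the concave-in-$\vx$, concave-in-$\vRy_n$ function $(\vx, \vRy_n) \mapsto f_{\vRy_n}(\vx)$, combined with the Stackelberg-value bound from maximin-attainability. A fallback is to invoke the general Nash existence from the appendix (via Kakutani's fixed-point theorem) to obtain some Nash $(\vx_n^*, \vRy_n^*)$ with $V_n = V_h$ and then argue, using the convex structure of the response-function space together with the equality $V_n = V_h = \max h_1$, that any $\vx^* \in \argmaxS h_1$ can replace $\vx_n^*$ in the equilibrium.
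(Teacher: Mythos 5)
Your first two stages are correct and essentially reproduce the paper's own argument: compactness and continuity give existence of $\vx^*$ and $\vy^*$; the extreme-point (Bauer) argument gives $h_1(\vx)=\min_k \minPayoff{u_1,c_1,\vx,\ve_k}$; and the sandwich $V_h=h_1(\vx^*)\leq \gRyf{\vx^*}\leq V_s=V_h$ together with \cref{thm:lpg:tfzg-equil-order} yields the Stackelberg claim and $V_n=V_h$. The problem is the fourth claim, which you correctly identify as the main obstacle but do not actually prove, and neither of your proposed routes closes the gap. For the Sion route: writing $M(u')\defeq\argminS_{\vy}\maxPayoff{u',c_2,\vx^*,\vy}$, the map from the selection $\vRy_n\in\prod_{u'}M(u')$ to $f_{\vRy_n}(\vx)=\inf_{u'}\minPayoff{u',c_2,\vx,\vRy_n(u')}$ is an infimum of functions that are \emph{concave} in each coordinate $\vRy_n(u')$ (\cref{thm:lpg:payoff-bounds-convexity}), hence concave --- not quasi-convex --- in the minimization variable, so no Sion-type exchange of $\min_{\vRy_n}$ and $\sup_{\vx}$ is available. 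Even granting the exchange, the inner quantity $\inf_{u'}\inf_{\vy\in M(u')}\minPayoff{u',c_2,\vx,\vy}$ is not the Stackelberg objective of any best-response function, because the sets $M(u')$ consist of best responses to $\vx^*$ rather than to $\vx$; maximin-attainability therefore does not bound it by $V_h$ away from $\vx^*$. For the fallback: a Nash pair $\sqty{\vx_n^*,\vRy_n^*}$ from the fixed-point theorem satisfies \cref{eqn:lpg:tfzg-nash:R2} relative to $\vx_n^*$, and substituting $\vx^*$ destroys that condition since the second player's best-response set depends on the first player's strategy; the interchangeability of optimal strategies from classical zero-sum games has no proven analogue here, and you give no argument for it.

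The paper closes this gap (\cref{apx:thm:lpg:maximin-equil-nash}) with a device absent from your plan: it fixes a direction $\vq$, sets $\vx_n\defeq\vx^*+\tfrac{1}{n}\vq$, picks best responses $\vy_n^u$ to $\vx_n$, and defines $\vRy_n(u)$ as the limit $\lim_{n}\vy_n^u$, which lies in the best-response set at $\vx^*$ by upper hemicontinuity of the best-response correspondence (Berge). This specific selection makes the map $\lambda\mapsto F(\bar{\vx},\vx^*+\lambda\vd)$ right-continuous at $0$, where $F(\vx,\vx')$ denotes the first player's guaranteed value when playing $\vx$ against responses tailored to $\vx'$. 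A hypothetical profitable deviation $F(\bar{\vx},\vx^*)>V_h$ then transfers to a nearby off-diagonal point, and concavity of $F(\cdot,\vx')$ in its first argument combined with $F(\vx,\vx)\leq V_h$ (maximin-attainability) forces $F(\vx^*,\vx^*+\lambda_0\vd)<V_h=F(\vx^*,\vx^*)$, contradicting $F(\vx^*,\vx^*)\leq F(\vx^*,\vx')$, which follows from \cref{thm:lpg:tfzg-p1-for-all}. Without this limit-of-best-responses construction, or an equivalent selection argument showing that \emph{some} element of $\prod_{u'}M(u')$ leaves $\vx^*$ globally optimal for the first player, your proof of the final claim is incomplete.
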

\begin{proof}
    The existence of $\vx^*$ and $\vy^*$ is due to compactness of $\simplex{
    \setS_1}$ and $\simplex{\setS_2}$, and the continuity of $\fs1{\cdot}$ and
    $\fs2{\cdot}$, which follows from the continuity of payoff bounds (\cref{
    thm:lpg:payoff-bounds-convexity}).

    Let $h_1(\cdot)$ be the maximin objective function of the first player as
    defined in \cref{ eqn:lpg:tfzg-h1}. Since $\vy \mapsto \minPayoff{u_2, c_2,
    \vx, \vy}$ is concave and continuous, Bauer Maximum Principle \citep[Theorem
    7.69]{ aliprantis2013infinite} implies that
    \[
        \fs1{\vx} = \min_{1 \leq k \leq \card{\setS_2}}
            \minPayoff{u_1, c_1, \vx, \ve_k}
            = \inf_{\vy \in \simplex{\setS_2}} \minPayoff{u_1, c_1, \vx, \vy}
            = h_1(\vx)
    \]

    To prove the last paragraph of the theorem, note that the definition of the
    maximin-attainable \tfzGame{} requires that $h_1(\vx^*) = \sup_{\vx \in
    \simplex{\setS_1}} \gRyf{\vx}$ for any best response function
    $\vRyf{\cdot}$. For any $\vx' \in \simplex{\setS_1}$, we have
    \begin{align*}
        \gRyf{\vx'} &\leq \sup_{\vx \in \simplex{\setS_1}} \gRyf{\vx}
        = h_1(\vx^*)
        = \inf_{u' \in \ncSet{u_1,c_1,c_2}} \inf_{\vy \in \simplex{\setS_2}}
            \minPayoff{u', c_2, \vx^*, \vy} \\
        &\leq \inf_{u' \in \ncSet{u_1,c_1,c_2}}
            \minPayoff{u', c_2, \vx^*, \vRyf{u', \vx^*}}
        = \gRyf{\vx^*}
    \end{align*}
    Therefore, $\vx^* \in \argmaxS_{\vx \in \simplex{\setS_1}} \gRyf{\vx}$,
    which means $\sqty{\vx^*, \vy^*}$ is a Stackelberg equilibrium. \Cref{
    thm:lpg:tfzg-equil-order} implies that $V_h = V_n = V_s$ for any Nash value
    $V_n$. The proof of the existence of $\vRy_n$ is deferred to the appendix
    (\cref{ apx:thm:lpg:maximin-equil-nash}).
    \qed
\end{proof}

In a maximin-attainable \tfzGame{} instance, the lower-capability player should
play their maximin strategy $\vx^*$. The higher-capability player solves $\vx^*$
from their private information and plays a best response to it. Although the
complete response function of the higher-capability player that forms an
equilibrium with $\vx^*$ is not necessarily solvable, the lower-capability
player has no reason to deviate given \cref{ thm:lpg:maximin-equil}.
Furthermore, if computing $\minPayoff{u, c, \vx, \vy}$ and its subgradient is in
polynomial time, then the objective function $\fs1{\cdot}$ defined in \cref{
thm:lpg:maximin-equil} and its subgradient is computable in polynomial time.
Therefore, an approximate maximin strategy can be efficiently computed using
nonsmooth convex optimization methods~\citep{ bubeck2015convex, jia2024trafs}.

\subsection{Some maximin-attainable games}
\label{sec:lpg:maximin-attainable-examples}

A simple example of a maximin-attainable game is the \emph{constant-gap games}.
If there exists a function $\delta: \posInt \to \real$ such that $\maxPayoff{u,
c, \vx, \vy} = \minPayoff{u, c, \vx, \vy} + \delta(c)$ for all $u \in U$, $c \in
\posInt$, $\vx \in \simplex{\setS_1}$, and $\vy \in \simplex{\setS_2}$, then the
game is called a constant-gap game. It is easy to verify that the quantized
payoff game (\cref{ ex:lpg:quant-payoff}) is a constant-gap game with $\delta(c)
= 10^{-c}$.

\begin{theorem}
    \label{thm:lpg:constant-gap-maximin}
    A constant-gap \tfzGame{} instance is maximin-attainable.
\end{theorem}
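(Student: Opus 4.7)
The plan is to unfold the definition of maximin-attainability and show that in a constant-gap game the Stackelberg objective $\gRyf{\vx}$ collapses to the maximin objective $h_1(\vx)$ pointwise in $\vx$, for any best response function $\vRyf{\cdot}$ satisfying \eqref{eqn:lpg:tfzg-stackelberg:R2}. Once this is established, taking $\sup_{\vx \in \simplex{\setS_1}}$ on both sides yields $V_s = V_h$, which is exactly the condition in \cref{def:lpg:maximin-attainable}.

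The key observation is that the constant gap $\delta(c_2)$ is independent of $\vy$, so for every $u' \in T_1^2$ and every $\vx \in \simplex{\setS_1}$,
\[
\argminS_{\vy \in \simplex{\setS_2}} \maxPayoff{u', c_2, \vx, \vy}
 = \argminS_{\vy \in \simplex{\setS_2}} \minPayoff{u', c_2, \vx, \vy}.
\]
Consequently, any best response $\vRyf{u',\vx}$ satisfying \eqref{eqn:lpg:tfzg-stackelberg:R2} attains the minimum of $\minPayoffSym$ over $\vy$ as well, giving $\minPayoff{u', c_2, \vx, \vRyf{u',\vx}} = \inf_{\vy \in \simplex{\setS_2}} \minPayoff{u', c_2, \vx, \vy}$.

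Plugging this into the definition \eqref{eqn:lpg:tfzg-stackelberg:g} of $\gRyf{\vx}$ and swapping the two infima, I would write
\[
\gRyf{\vx} \;=\; \inf_{u' \in T_1^2} \inf_{\vy \in \simplex{\setS_2}} \minPayoff{u', c_2, \vx, \vy}.
\]
Now I apply \cref{thm:lpg:tfzg-p1-for-all} with the constant set-valued response $\Ryf{u',\vx} \equiv \simplex{\setS_2}$. The right-hand side of \eqref{eqn:lpg:tfzg-p1-for-all} then becomes $\inf_{u' \in \ppfuncInv{u_1}{c_1}} \inf_{\vy \in \simplex{\setS_2}} u'\sqty{\vx, \vy}$, and swapping infima once more and recalling \cref{def:lpg:payoff-bounds} gives $\inf_{\vy \in \simplex{\setS_2}} \minPayoff{u_1, c_1, \vx, \vy} = h_1(\vx)$.

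Taking the supremum over $\vx \in \simplex{\setS_1}$ yields $\sup_{\vx} \gRyf{\vx} = V_h$, and since this holds for every best response function, the instance is maximin-attainable by \cref{def:lpg:maximin-attainable}. I expect no real obstacle: the proof is essentially an unfolding of definitions glued together by the constant-gap identity (which guarantees the argmin interchange) and by \cref{thm:lpg:tfzg-p1-for-all} (which converts the inner double infimum into $\minPayoff{u_1, c_1, \cdot}$). The only small care needed is to make sure that \cref{thm:lpg:tfzg-p1-for-all} is applied with a nonempty-valued set-valued $\Ry$, which is trivially satisfied by taking $\Ry \equiv \simplex{\setS_2}$.
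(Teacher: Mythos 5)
Your proposal is correct and follows essentially the same route as the paper: the constant gap makes the minimizers of $\maxPayoffSym$ and $\minPayoffSym$ over $\vy$ coincide, so the Stackelberg objective collapses to $\inf_{u'}\inf_{\vy}\minPayoffSym$, which equals $h_1(\vx)$ via \cref{thm:lpg:tfzg-p1-for-all}. The paper phrases the first step as a chain of equalities through $\maxPayoffSym$ rather than via the argmin sets, and leaves the final identification with $h_1$ implicit, but these are only presentational differences.
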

\begin{proof}
    Let $\vRyf{\cdot}$ be a best response function defined in \cref{
    eqn:lpg:tfzg-stackelberg:R2}.
    \begin{align*}
    \begin{gathered}
        \forall u \in \ncSet{u_1, c_1, c_2}:\:
        \forall \vx \in \simplex{\setS_1}: \\
        \begin{aligned}
        &\minPayoff{u, c_2, \vx, \vRyf{u, \vx}}
            = \maxPayoff{u, c_2, \vx, \vRyf{u, \vx}} - \delta(c_2) \\
            &\qquad= \min_{\vy \in \simplex{\setS_2}}
                \maxPayoff{u, c_2, \vx, \vy} - \delta(c_2)
            = \min_{\vy \in \simplex{\setS_2}} \minPayoff{u, c_2, \vx, \vy}
        \end{aligned}
    \end{gathered}
    \end{align*}
    Consequently, $h_1(\vx) = \gRyf{\vx}$ holds for all $\vx \in
    \simplex{\setS_1}$ with $h_1(\cdot)$ and $\gRyf{\cdot}$ defined in \cref{
    eqn:lpg:tfzg-h1} and \cref{eqn:lpg:tfzg-stackelberg:g}, respectively.
\end{proof}

A more interesting example is the \emph{narrowly reversible games}. Note that
for all games, $\minPayoff{u, c, \vx, \vy} = \inf_{u' \in \ppfuncInv{u}{c}}
\maxPayoff{u, c, \vx, \vy}$. A game is narrowly reversible if the analogous
equality holds for $\ncSet{u, c, c'}$ instead of $\ppfuncInv{u}{c}$.

\begin{definition}[Narrowly reversible payoff games]
    \label{def:lpg:narrowly-reversible}
    In a \tfzGame{} instance as defined in \cref{ def:lpg:tfzg} with $c_1 \leq
    c_2$, let $T_1^2 \defeq \ncSet{u_1, c_1, c_2}$. The game is \emph{narrowly
    reversible} for the first player if and only if the following holds:
    \begin{align}
    \label{eqn:lpg:lower-narrowly-reversible}
    \forall \vx \in \simplex{ \setS_1}:\:
    \forall k \in \card{\setS_2}:\:
        \minPayoff{u_1, c_1, \vx, \ve_k} = \inf_{u' \in T_1^2}
        \maxPayoff{u', c_2, \vx, \ve_k}
    \end{align}
\end{definition}
\begin{theorem}
    \label{thm:lpg:narrowly-reversible}
    A narrowly reversible \tfzGame{} instance is maximin-attainable.
\end{theorem}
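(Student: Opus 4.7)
The plan is to show $\gRyf{\vx} = h_1(\vx)$ pointwise on $\simplex{\setS_1}$ for every best response function $\vRyf{\cdot}$ satisfying \cref{eqn:lpg:tfzg-stackelberg:R2}. Taking the supremum over $\vx$ then yields $\sup_{\vx} \gRyf{\vx} = V_h$, which is exactly the maximin-attainability condition of \cref{def:lpg:maximin-attainable}.

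First I would rewrite the maximin objective through narrow reversibility. The function $\vy \mapsto \minPayoff{u_1, c_1, \vx, \vy}$ is concave and continuous on the simplex (\cref{thm:lpg:payoff-bounds-convexity}), so Bauer's Maximum Principle, as already used in the proof of \cref{thm:lpg:maximin-equil}, reduces the infimum in \cref{eqn:lpg:tfzg-h1} to the extreme points: $h_1(\vx) = \min_{1 \leq k \leq \card{\setS_2}} \minPayoff{u_1, c_1, \vx, \ve_k}$. Narrow reversibility \cref{eqn:lpg:lower-narrowly-reversible} then rewrites each coordinate-wise lower bound as an infimum of upper bounds, giving $h_1(\vx) = \min_k \inf_{u' \in T_1^2} \maxPayoff{u', c_2, \vx, \ve_k}$.

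Next I would establish the two matching inequalities $h_1(\vx) \leq \gRyf{\vx} \leq h_1(\vx)$. For the upper bound, since $\vRyf{u', \vx}$ minimizes $\vy \mapsto \maxPayoff{u', c_2, \vx, \vy}$, for every $k$ we have $\maxPayoff{u', c_2, \vx, \vRyf{u', \vx}} \leq \maxPayoff{u', c_2, \vx, \ve_k}$; combining with $\minPayoff \leq \maxPayoff$ and taking $\inf_{u'\in T_1^2}$ followed by $\min_k$ yields $\gRyf{\vx} \leq h_1(\vx)$ through the rewriting above. For the lower bound, the trivial inequality $\gRyf{\vx} \geq \inf_{u' \in T_1^2} \inf_{\vy \in \simplex{\setS_2}} \minPayoff{u', c_2, \vx, \vy}$ holds; applying \cref{thm:lpg:tfzg-p1-for-all} with the constant set-valued map $\Ryf{u', \vx} \equiv \simplex{\setS_2}$ converts the right-hand side into $\inf_{u' \in \ppfuncInv{u_1}{c_1}} \inf_{\vy \in \simplex{\setS_2}} u'\sqty{\vx, \vy}$, and swapping the two infima identifies it with $\inf_{\vy} \minPayoff{u_1, c_1, \vx, \vy} = h_1(\vx)$.

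The main obstacle is the bookkeeping between the narrow concretization set $T_1^2$, over which $\gRyf{\cdot}$ and the narrow reversibility axiom are stated, and the broader concretization $\ppfuncInv{u_1}{c_1}$ that appears inside $\minPayoff{u_1, c_1, \cdot, \cdot}$ in the definition of $h_1$. \Cref{thm:lpg:tfzg-p1-for-all} is precisely the bridge that makes the lower bound go through by absorbing the extra perception degrees of freedom into the inner infimum; narrow reversibility is then the axiom tailor-made so that the coordinate-wise upper bound at the extreme points of $\simplex{\setS_2}$ is sharp against the wider infimum over $T_1^2$.
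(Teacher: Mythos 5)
Your proposal is correct and follows essentially the same route as the paper's proof: reduce $h_1$ to the extreme points $\ve_k$ via concavity, use narrow reversibility plus the best-response property of $\vRyf{\cdot}$ and the chain $\minPayoffSym \leq \maxPayoffSym$ to get $\gRyf{\vx} \leq h_1(\vx)$, and obtain the reverse inequality from \cref{thm:lpg:tfzg-p1-for-all} as in the proof of \cref{thm:lpg:tfzg-equil-order}. The only cosmetic difference is that you take infima over $u'$ and $k$ directly where the paper fixes a minimizing index $i$ and an $\epsilon$-approximate witness $u'$ and lets $\epsilon \to 0$; both are valid.
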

\begin{proof}
    Let $V_h$ and $h_1(\cdot)$ be the first player's maximin value and objective
    function as defined in \cref{ thm:lpg:tfzg-equil-order}. Let $\vRyf{\cdot}$
    be a best response function satisfying \cref{ eqn:lpg:tfzg-stackelberg:R2}.
    Let $\gRyf{\cdot}$ be the corresponding Stackelberg objective function.
    Take any $\vx \in \simplex{\setS_1}$. Let $i \in \intInt{1}{\card{\setS_2}}$
    such that $h_1(\vx) = \minPayoff{u_1, c_1, \vx, \ve_i}$. Take an arbitrary
    value $\epsilon \in \posReal$. Due to \cref{
    eqn:lpg:lower-narrowly-reversible}, there
    exists $u' \in T_1^2$ such that $\maxPayoff{u', c_2, \vx,
    \ve_i} \leq \minPayoff{u_1, c_1, \vx, \ve_i} + \epsilon = h_1(\vx) +
    \epsilon$. Let $\vy_{u'} = \vRyf{u', \vx}$. It holds that
    \begin{align*}
        \gRyf{\vx} &\leqN1 \minPayoff{u', c_2, \vx, \vy_{u'}}
            \leq \maxPayoff{u', c_2, \vx, \vy_{u'}}
            \leqN2 \maxPayoff{u', c_2, \vx, \ve_i} \\
            &\leq h_1(\vx) + \epsilon,
    \end{align*}
    where $\leqN1$ follows from taking $u'$ in the infimum in the definition of
    $\gRyf{\cdot}$ (see \cref{ eqn:lpg:tfzg-stackelberg:g}), and $\leqN2$ is due
    to $\vRyf{u', \vx}$ minimizing $\vy \mapsto \maxPayoff{u', c_2, \vx, \vy}$.
    Taking $\epsilon \to 0$ yields $\gRyf{\vx} \leq h_1(\vx)$. Since $h_1(\vx)
    \leq \gRyf{\vx}$ (see the proof of \cref{ thm:lpg:tfzg-equil-order}), it
    follows that $h_1(\vx) = \gRyf{ \vx}$.
\end{proof}

In an instance of the masked payoff game (\cref{ ex:lpg:masked-payoff}), let
$\vA \in \real^{m\times n}$ be the perceived matrix of the first (i.e., row)
player. If for each column of $\vA$, there are at most $(c_2 - c_1)$ masked
entries, then the game is narrowly reversible since $\minPayoff{\vA, c_1, \vx,
\ve_k} = \maxPayoff{\vA', c_2, \vx, \ve_k}$ where the $k$-th column of $\vA'$ is
the $k$-th column of $\vA$ with the masked entries filled with the minimum
possible value.

The limited-rank game outlined in \cref{ ex:lpg:limited-rank} is always narrowly
reversible. We state the following theorem. The proof is deferred to \cref{
apx:sec:limited-rank-games}. Of note, \cref{ eqn:lpg:lrk-payoff-bounds} implies
that the maximin strategy and the best response in a limited-rank game can both
be solved by Second-Order Cone Programs (SOCPs). An $\epsilon$-approximate
solution to a SOCP can be found in $O(-\log \epsilon)$ iterations using
interior-point methods \citep[Chapter 11]{ boyd2004convex}.
\begin{theorem}[Payoff bounds of limited-rank games]
    Let $\vA \in \real^{m \times n}$ be the perceived matrix of a capability-$c$
    player in a limited-rank game. Then for any $\vx \in \simplex{\setS_1}$ and
    $\vy \in \simplex{\setS_2}$, the payoff bounds are given by:
    \begin{align}
    \label{eqn:lpg:lrk-payoff-bounds}
    \begin{split}
        \minPayoff{\vA, c, \vx, \vy} &= \T{\vx} \vA \vy
            - \uncPayoff{\vA, c, \vx, \vy} \\
        \maxPayoff{\vA, c, \vx, \vy} &= \T{\vx} \vA \vy
            + \uncPayoff{\vA, c, \vx, \vy} \\
        \text{where } \uncPayoff{\vA, c, \vx, \vy}
            &\defeq \sigma_c(\vA) \cdot
            \norm{\leftNullT{\vA}\vx}_2 \cdot \norm{\NullT{\vA}\vy}_2
    \end{split}
    \end{align}
    Moreover, the limited-rank game is narrowly reversible. Of note,
    $\Null{\vM}$ is the null space of a matrix $\vM$, and $\sigma_c(\vM)$ is the
    $c$-th largest singular value of $\vM$.
\end{theorem}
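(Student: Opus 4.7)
The plan is to characterize the concretization set $\ppfuncInv{\vA}{c}$ via singular value decomposition (SVD), reduce the payoff-bound computation to a spectral-norm estimate attained by a rank-one perturbation, and then establish narrow reversibility by exhibiting that same perturbation (slightly rescaled) as an explicit near-minimizer in $T_1^2$.

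First I would show that $\vU \in \ppfuncInv{\vA}{c}$ if and only if $\vU = \vA + \vE$, where $\col{\vE} \subseteq \col{\vA}^{\perp}$, $\col{\T{\vE}} \subseteq \Null{\vA}$, and $\sigma_1(\vE) \leq \sigma_c(\vA)$ (which forces $\vE = \V{0}$ when $\vA$ is rank-deficient so that $\sigma_c(\vA) = 0$). This description drops out of matching the top-$c$ SVD components of $\vU$ with $\vA$ and bounding the residual. Writing $\T{\vx}\vU\vy = \T{\vx}\vA\vy + \T{\vx}\vE\vy$ and using the orthogonality, $\T{\vx}\vE\vy = \T{(\V{P}_c\vx)}\vE(\V{P}_r\vy)$, where $\V{P}_c$ and $\V{P}_r$ are the orthogonal projections onto $\col{\vA}^{\perp}$ and $\Null{\vA}$, so $\norm{\V{P}_c\vx}_2 = \norm{\leftNullT{\vA}\vx}_2$ and $\norm{\V{P}_r\vy}_2 = \norm{\NullT{\vA}\vy}_2$. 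The spectral-norm inequality then gives $\abs{\T{\vx}\vE\vy} \leq \sigma_c(\vA)\norm{\leftNullT{\vA}\vx}_2\norm{\NullT{\vA}\vy}_2 = \uncPayoff{\vA,c,\vx,\vy}$, and both signs are attained by the rank-one $\vE^{\pm} = \pm \sigma_c(\vA) \frac{\V{P}_c\vx}{\norm{\V{P}_c\vx}_2} \frac{\T{(\V{P}_r\vy)}}{\norm{\V{P}_r\vy}_2}$ (interpreted as $\V{0}$ when either projection vanishes).

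For narrow reversibility I would focus on $c_1 < c_2$. One direction is immediate: for any $\vU' \in T_1^2 \subseteq \ppfuncInv{u_1}{c_1}$, the value $\T{\vx}\vU'\ve_k$ is sandwiched between $\minPayoff{u_1, c_1, \vx, \ve_k}$ and $\maxPayoff{\vU', c_2, \vx, \ve_k}$ (the latter because $\vU' \in \ppfuncInv{\vU'}{c_2}$ trivially). For the reverse, specialize the construction above with $\vy = \ve_k$, $c = c_1$, and sign $-$, but shrink the singular value to $\sigma_{c_1}(u_1) - \epsilon$ to avoid SVD ties, producing $\vU^* = u_1 + \vE^-_\epsilon$. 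This $\vU^*$ has rank-$c_1$ truncation exactly $u_1$ and total rank at most $c_1 + 1 \leq c_2$, so $\vU^* \in T_1^2$. The uncertainty term in $\maxPayoff{\vU^*, c_2, \vx, \ve_k}$ vanishes in either subcase: $\sigma_{c_2}(\vU^*) = 0$ when $c_2 > c_1 + 1$, and when $c_2 = c_1 + 1$, the vector $\V{P}_c\vx$ is absorbed into $\col{\vU^*}$ by construction, forcing $\leftNullT{\vU^*}\vx = \V{0}$. Hence $\maxPayoff{\vU^*, c_2, \vx, \ve_k} = \T{\vx}\vU^*\ve_k \to \minPayoff{u_1, c_1, \vx, \ve_k}$ as $\epsilon \to 0$, giving the matching infimum.

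The main obstacle I expect is the SVD bookkeeping in degenerate configurations: a rank-deficient $\vA$, vanishing projections $\V{P}_c\vx$ or $\V{P}_r\vy$, and especially the tie-breaking convention at the boundary singular value, which the $\epsilon$-shrinkage is designed to bypass. Verifying the projection identity $\leftNullT{\vU^*}\vx = \V{0}$ in the borderline case $c_2 = c_1 + 1$ is the cleanest technical step but requires careful attention to how $\col{\vE^-_\epsilon}$ sits inside $\col{u_1}^{\perp}$.
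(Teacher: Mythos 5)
Your proposal is correct and follows essentially the same route as the paper: characterize $\ppfuncInv{\vA}{c}$ via the orthogonally supported SVD residual, bound $\T{\vx}\vE\vy$ by a spectral-norm/Cauchy--Schwarz estimate, realize the extremes with the rank-one perturbation $\pm\sigma\,\frac{\V{P}_c\vx}{\norm{\V{P}_c\vx}_2}\frac{\T{(\V{P}_r\vy)}}{\norm{\V{P}_r\vy}_2}$ scaled by $(1-\epsilon)$ to avoid singular-value ties, and obtain narrow reversibility from the fact that this perturbation has rank at most $c_1+1\leq c_2$ and absorbs $\V{P}_c\vx$ into its column space so the capability-$c_2$ uncertainty vanishes --- exactly the paper's $\vA_q$ construction. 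The only nit is that the extremes at full magnitude $\sigma_c(\vA)$ are only approached as an infimum/supremum rather than attained (the tie at the $c$-th singular value can eject $\vA+\vE^{\pm}$ from $\ppfuncInv{\vA}{c}$ depending on the choice function), but you flag this yourself and the $\epsilon$-shrinkage resolves it.
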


\renewcommand{\bibname}{References}
\bibliographystyle{splncs04}
\bibliography{refs.bib}

\clearpage
\appendix

\section{Properties of the payoff perception function}
\label{apx:sec:ppf-property}

\begin{theorem}[Properties of the payoff perception function]
    \label{thm:lpg:ppf-property}
    With the notation in \cref{def:lpg:ppf}, let $u \in U$ be a payoff
    perception function. Then the following properties hold:
    \begin{itemize}
        \item \emph{Perfect perception beyond intrinsic capability}:
            For any $c \geq \ppfMaxC{u}$, we have $\ppfunc{u}{c} = u$.
        \item \emph{Perfect perception of an abstraction}:
            \\ For any $c \in \posInt$, let $v = \ppfunc{u}{c}$.
            Then $\ppfunc{v}{c} = v$. Equivalently, $\ppfMaxC{v} \leq c$.
        \item \emph{First form of information loss}:
            \\ If $\ppfunc{u}{c} \neq u$ for some $c \in \posInt$, then
            $\ppfunc{u}{c'} \neq u$ for any $c' \in \intInt{1}{c}$.
        \item \emph{Second form of information loss}:
            \\ For any $c \in \posInt$, let $v = \ppfunc{u}{c}$. If $v \neq u$,
            then $\ppfunc{v}{c'} \neq u$ for any $c' \in \posInt$.
    \end{itemize}
\end{theorem}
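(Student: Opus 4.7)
All four properties follow essentially directly from the capability path independence axiom $\ppfunc{\ppfunc{u}{c_1}}{c_2} = \ppfunc{u}{\min(c_1,c_2)}$, combined with the definition $\ppfMaxC{u} = \min\condSet{c \in \extPosInt}{\ppfunc{u}{c} = u}$ and the axiom $\ppfunc{u}{\infty} = u$. My plan is to prove property~2 first (it is essentially one line), then property~1 (which invokes property~2 implicitly), then properties~3 and~4 as near-immediate contrapositives of property~1 combined with one application of path independence.

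\textbf{Properties 2 and 1.} For property~2, applying capability path independence with $c_1 = c_2 = c$ gives $\ppfunc{v}{c} = \ppfunc{\ppfunc{u}{c}}{c} = \ppfunc{u}{\min(c,c)} = \ppfunc{u}{c} = v$; hence $c$ lies in the set $\condSet{c' \in \extPosInt}{\ppfunc{v}{c'} = v}$ and so $\ppfMaxC{v} \leq c$. For property~1, set $c^* \defeq \ppfMaxC{u}$. If $c^* = \infty$, then $c \geq c^*$ forces $c = \infty$, and $\ppfunc{u}{\infty} = u$ by axiom. If $c^* \in \posInt$, then $\ppfunc{u}{c^*} = u$ by definition of $\ppfMaxC{\cdot}$; for any $c \in \posInt$ with $c \geq c^*$, path independence yields $\ppfunc{u}{c} = \ppfunc{\ppfunc{u}{c^*}}{c} = \ppfunc{u}{\min(c^*,c)} = \ppfunc{u}{c^*} = u$, and the sub-case $c = \infty$ is again handled directly by the axiom.

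\textbf{Properties 3 and 4.} Property~3 is the contrapositive of property~1: if $\ppfunc{u}{c'} = u$ for some $c' \in \intInt{1}{c}$, then $\ppfMaxC{u} \leq c'$, and property~1 forces $\ppfunc{u}{c} = u$, contradicting the hypothesis. Property~4 proceeds similarly by contradiction. Suppose $\ppfunc{v}{c'} = u$ for some $c' \in \posInt$; using $v = \ppfunc{u}{c}$ and capability path independence, $u = \ppfunc{v}{c'} = \ppfunc{\ppfunc{u}{c}}{c'} = \ppfunc{u}{\min(c,c')}$, so with $c'' \defeq \min(c,c')$ we have $\ppfunc{u}{c''} = u$, hence $\ppfMaxC{u} \leq c'' \leq c$. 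Property~1 then gives $\ppfunc{u}{c} = u$, i.e., $v = u$, contradicting the assumption $v \neq u$.

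\textbf{Expected obstacle.} There is no substantive obstacle here; the arguments are short syntactic manipulations of the two relevant axioms, and only property~4 requires a small trick (observing that $\ppfunc{v}{c'} = u$ together with path independence forces $u$ to be a fixed point of $\ppfuncSym(\cdot, c'')$ for some $c'' \leq c$). The only mild subtlety worth flagging is that capability path independence is stated only for $c_1, c_2 \in \posInt$ and not on the extended range $\extPosInt$, so the corner cases $c = \infty$ and $\ppfMaxC{u} = \infty$ must be handled separately via the perfect perception with infinite capability axiom rather than by an application of path independence.
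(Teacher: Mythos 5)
Your proof is correct and follows essentially the same route as the paper: property 2 via path independence with $c_1=c_2=c$, property 1 by substituting $u = \ppfunc{u}{\ppfMaxC{u}}$ and applying path independence, property 3 as the contrapositive of property 1, and property 4 via the identity $\ppfunc{v}{c'} = \ppfunc{u}{\minq{c,c'}}$ (you phrase it as a contradiction, the paper argues directly, but the content is identical). Your explicit handling of the corner cases $c = \infty$ and $\ppfMaxC{u} = \infty$, where path independence as stated does not apply and one must fall back on the perfect-perception axiom, is a small point of added rigor over the paper's proof, which silently assumes $\ppfMaxC{u} \in \posInt$ in its first display.
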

\begin{proof}
    To prove the first two properties, one can substitute $u$ (resp. $v$) with
    $\ppfunc{u}{\ppfMaxC{u}}$ (resp. $\ppfunc{v}{c}$) and apply the capability
    path independence property:
    \begin{align*}
        \ppfunc{u}{c} &= \ppfunc{\ppfunc{u}{\ppfMaxC{u}}}{c} =
        \ppfunc{u}{\minq{\ppfMaxC{u}, c}} = \ppfunc{u}{\ppfMaxC{u}} = u \\
        \ppfunc{v}{c} &= \ppfunc{\ppfunc{u}{c}}{c} =
        \ppfunc{u}{\minq{c, c}} = \ppfunc{u}{c} = v
    \end{align*}

    The first form of information loss is a direct consequence of the first
    property, since $\ppfunc{u}{c'} = u$ implies $\ppfunc{u}{c} = u$ for $c \geq
    c'$.

    To prove the second form of information loss, note that $\ppfunc{v}{c'} =
    \ppfunc{u}{\minq{c, c'}}$. We have $\ppfunc{v}{c'} = \ppfunc{u}{c} = v \neq
    u$ when $c' > c$, and $\ppfunc{v}{c'} = \ppfunc{u}{c'} \neq u$ when $c' \leq
    c$ (due to the first form of information loss).
\end{proof}

\begin{theorem}[Abstraction reduces perception accuracy]
    \label{thm:lpg:abs-hierarchy}
    With the notation in \cref{def:lpg:ppf,def:lpg:ppf-inv}, for any $v \in U$
    and $c \in \integer$, it holds that
    \begin{align}
        \ppfuncInv{v}{c+1} \subseteq \ppfuncInv{v}{c}
        \label{eqn:lpg:abs-hierarchy:1}
    \end{align}

    Let $v' = \ppfunc{v}{c}$, then for $c' \geq c$, it holds that
    \begin{align}
        \ppfuncInv{v}{c'} \subseteq \ppfuncInv{v'}{c}
        \label{eqn:lpg:abs-hierarchy:2}
    \end{align}
\end{theorem}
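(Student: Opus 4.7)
The plan is to unpack the definition of $\ppfuncInv{v}{c}$ directly and apply capability path independence together with the ``perfect perception beyond intrinsic capability'' property (established in \cref{thm:lpg:ppf-property}). Both inclusions follow by chasing the single integer parameter through the $\max\{\ppfMaxC{v}, \cdot\}$ cutoff that appears in \cref{def:lpg:ppf-inv}.

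For \cref{eqn:lpg:abs-hierarchy:1}, I would fix $u \in \ppfuncInv{v}{c+1}$ and split on how $\ppfMaxC{v}$ compares with $c$. When $\ppfMaxC{v} \geq c+1$, both $\maxq{\ppfMaxC{v}, c+1}$ and $\maxq{\ppfMaxC{v}, c}$ reduce to $\ppfMaxC{v}$, so membership is immediate. The only substantive case is $\ppfMaxC{v} \leq c$, where the hypothesis reads $\ppfunc{u}{c+1} = v$. Applying capability path independence with $c_1 = c+1$ and $c_2 = c$ gives
\begin{align*}
\ppfunc{u}{c} = \ppfunc{\ppfunc{u}{c+1}}{c} = \ppfunc{v}{c},
\end{align*}
and since $c \geq \ppfMaxC{v}$, perfect perception beyond intrinsic capability yields $\ppfunc{v}{c} = v$. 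Hence $\ppfunc{u}{c} = v$, so $u \in \ppfuncInv{v}{c}$.

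For \cref{eqn:lpg:abs-hierarchy:2}, let $v' = \ppfunc{v}{c}$. Perfect perception of an abstraction (also from \cref{thm:lpg:ppf-property}) gives $\ppfMaxC{v'} \leq c$, so membership $u \in \ppfuncInv{v'}{c}$ is equivalent to $\ppfunc{u}{c} = v'$. Now pick $u \in \ppfuncInv{v}{c'}$ and set $c_* \defeq \maxq{\ppfMaxC{v}, c'}$, so $\ppfunc{u}{c_*} = v$ and $c_* \geq c' \geq c$. Applying capability path independence once more,
\begin{align*}
\ppfunc{u}{c} = \ppfunc{u}{\minq{c_*, c}} = \ppfunc{\ppfunc{u}{c_*}}{c} = \ppfunc{v}{c} = v',
\end{align*}
giving the desired inclusion.

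There is no real obstacle here; the only subtlety is the case analysis on $\ppfMaxC{v}$ in part 1, which is needed because the cutoff $\maxq{\ppfMaxC{v}, c}$ may coincide with $\maxq{\ppfMaxC{v}, c+1}$ and thereby make the statement trivial in some regimes. The whole argument is a short bookkeeping exercise on top of the two properties already proved.
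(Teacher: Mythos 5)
Your proof is correct and follows essentially the same route as the paper: part one uses the identical case split on $\ppfMaxC{v}$ versus $c$ and the same combination of capability path independence with perfect perception beyond intrinsic capability. The only divergence is in part two, where your single computation $\ppfunc{u}{c} = \ppfunc{\ppfunc{u}{\maxq{\ppfMaxC{v}, c'}}}{c} = \ppfunc{v}{c} = v'$ (valid because $\maxq{\ppfMaxC{v}, c'} \geq c$ always) handles both regimes at once, whereas the paper splits into $c \geq \ppfMaxC{v}$ (handled by induction on $c'$ via the first inclusion) and $c < \ppfMaxC{v}$ (handled by the same computation you use); your unified version is a mild simplification, not a different argument.
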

\begin{proof}
    To prove \cref{eqn:lpg:abs-hierarchy:1}, let $u \in \ppfuncInv{v}{c+1}$,
    which means $\ppfunc{u}{\maxq{\ppfMaxC{v}, c+1}} = v$. If $c \geq
    \ppfMaxC{v}$, then $\ppfunc{u}{c+1} = v$. Applying the properties of
    capability path independence property and perfect perception beyond
    intrinsic capability yields $\ppfunc{u}{c} = \ppfunc{\ppfunc{u}{c+1}}{c} =
    \ppfunc{v}{c} = v$. If $c < \ppfMaxC{v}$, then $ \ppfunc{u}{\maxq{
    \ppfMaxC{v}, c}} = \ppfunc{u}{\maxq{\ppfMaxC{v}, c+1}} = v$. In both
    cases, we have $\ppfunc{u}{\maxq{\ppfMaxC{v}, c}} = v$, which means $u \in
    \ppfuncInv{v}{c}$.

    To prove \cref{eqn:lpg:abs-hierarchy:2}, first consider the case when $c
    \geq \ppfMaxC{v}$, which implies $v'= v$ due to the property of perfect
    perception beyond intrinsic capability. Then \cref{eqn:lpg:abs-hierarchy:2}
    follows from \cref{eqn:lpg:abs-hierarchy:1} by induction on $c'$.
    Now assume $c < \ppfMaxC{v}$. Let $u \in \ppfuncInv{v}{c'}$,
    which means $\ppfunc{u}{\maxq{\ppfMaxC{v}, c'}} = v$. We also have
    $\ppfMaxC{v'} \leq c$ due to the property of perfect perception of an
    abstraction. Then $v' = \ppfunc{v}{c} =
    \ppfunc{\ppfunc{u}{\maxq{\ppfMaxC{v}, c'}}}{c} =
    \ppfunc{u}{\minq{\maxq{\ppfMaxC{v}, c'}, c}} = \ppfunc{u}{c}$. We thus have
    $\ppfunc{u}{\maxq{c, \ppfMaxC{v'}}} = \ppfunc{u}{c} = v'$, which means $u
    \in \ppfuncInv{v'}{c}$.
\end{proof}

An interpretation of \cref{ eqn:lpg:abs-hierarchy:1} is that given the same
perceived payoff function, if a player has a higher capability, then they can
have a more precise set of possible true payoff functions. \Cref{
eqn:lpg:abs-hierarchy:2} can be understood as another form of information loss,
as the set of possible true payoff functions becomes larger when an abstraction
is applied to a perceived payoff function.

\begin{theorem}[Property of the narrow concretization set]
    \label{thm:lpg:narrow-concretization}
    The narrow concretization set $\ncSet{v, c_1, c_2}$ satisfies the following
    properties:
    \begin{itemize}
        \item If $\ppfMaxC{v} \leq c_1 < c_2$, then $\ncSet{v, c_1, c_2}
            \subseteq \ppfuncInv{v}{c_1}$.
        \item If $c_1 \geq c_2$, then $\ncSet{v, c_1, c_2} = \cqty{
            \ppfunc{v}{c_2}}$.
        \item If $c_1' \leq c_1$, then $\ncSet{v, c_1, c_2} \subseteq
            \ncSet{\ppfunc{v}{c_1'}, c_1', c_2}$.
    \end{itemize}
\end{theorem}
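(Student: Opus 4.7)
The plan is to unpack the definition of $\ncSet{v, c_1, c_2}$ in each case and then reduce to the axioms of the payoff perception function, chiefly capability path independence combined with the two properties of perfect perception already derived in \cref{thm:lpg:ppf-property}. Throughout, I will treat $\minq{c_1, c_2}$ as a case split variable and compare it with $\ppfMaxC{v}$ to convert between $\ppfunc{v}{\cdot}$ and $v$ itself.

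For the first claim, when $\ppfMaxC{v} \leq c_1 < c_2$ we have $\minq{c_1, c_2} = c_1$, so membership in $\ncSet{v, c_1, c_2}$ means $\ppfunc{u}{c_1} = \ppfunc{v}{c_1}$. Because $c_1 \geq \ppfMaxC{v}$, the perfect-perception-beyond-intrinsic-capability property gives $\ppfunc{v}{c_1} = v$, whence $\ppfunc{u}{c_1} = v$; comparing with $\maxq{\ppfMaxC{v}, c_1} = c_1$ in the definition of $\ppfuncInv{v}{c_1}$ yields $u \in \ppfuncInv{v}{c_1}$. For the second claim, with $c_1 \geq c_2$ we get $\minq{c_1, c_2} = c_2$, and the membership condition becomes $\ppfMaxC{u} \leq c_2$ and $\ppfunc{u}{c_2} = \ppfunc{v}{c_2}$. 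Again by perfect perception beyond intrinsic capability, $\ppfunc{u}{c_2} = u$, so $u = \ppfunc{v}{c_2}$. Conversely, taking $u = \ppfunc{v}{c_2}$ yields $\ppfMaxC{u} \leq c_2$ by perfect perception of an abstraction, and $\ppfunc{u}{c_2} = \ppfunc{v}{c_2}$ trivially, so this $u$ indeed lies in the set.

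The third claim is the only one requiring a little algebra with $\min$. Let $u \in \ncSet{v, c_1, c_2}$, so $\ppfMaxC{u} \leq c_2$ and $\ppfunc{u}{\minq{c_1, c_2}} = \ppfunc{v}{\minq{c_1, c_2}}$. Applying $\ppfunc{\cdot}{\minq{c_1', c_2}}$ to both sides and using capability path independence on each side gives
\begin{align*}
\ppfunc{u}{\minq{c_1', c_2}} &= \ppfunc{u}{\minq{\minq{c_1,c_2}, \minq{c_1',c_2}}} = \ppfunc{v}{\minq{\minq{c_1,c_2}, \minq{c_1',c_2}}} \\
&= \ppfunc{v}{\minq{c_1', c_2}},
\end{align*}
where the last equality uses $c_1' \leq c_1$ to conclude $\minq{c_1', c_2} \leq \minq{c_1, c_2}$. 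A second application of capability path independence gives $\ppfunc{\ppfunc{v}{c_1'}}{\minq{c_1', c_2}} = \ppfunc{v}{\minq{c_1', \minq{c_1', c_2}}} = \ppfunc{v}{\minq{c_1', c_2}}$, which together with $\ppfMaxC{u} \leq c_2$ shows $u \in \ncSet{\ppfunc{v}{c_1'}, c_1', c_2}$.

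The main obstacle, such as it is, is bookkeeping: being careful about which of the two operands of $\min$ dominates in each invocation of capability path independence, and making sure the threshold $\ppfMaxC{v}$ appearing inside $\ppfuncInv{v}{\cdot}$ is correctly handled relative to $c_1$ in the first claim. No new ingredient beyond the three axioms of \cref{def:lpg:ppf} (plus their immediate corollaries in \cref{thm:lpg:ppf-property}) is needed.
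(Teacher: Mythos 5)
Your proof is correct. For the first two bullets it follows essentially the same route as the paper: unpack the definition with $\minq{c_1,c_2}$ resolved, then invoke perfect perception beyond intrinsic capability and perfect perception of an abstraction; you are in fact slightly more complete on the second bullet, where you verify both inclusions of the set equality while the paper only argues that every member of $\ncSet{v, c_1, c_2}$ equals $\ppfunc{v}{c_2}$. The genuine difference is in the third bullet. The paper splits into three cases according to the relative order of $c_1'$, $c_2$, and $c_1$ (namely $c_1 \leq c_2$; $c_1 > c_2$ with $c_2 \leq c_1'$; and $c_1' < c_2 < c_1$), handling each with a separate application of capability path independence. You instead give a single uniform computation: apply $\ppfuncSym\qty(\cdot,\,\minq{c_1', c_2})$ to the defining identity $\ppfunc{u}{\minq{c_1,c_2}} = \ppfunc{v}{\minq{c_1,c_2}}$ and collapse the nested minima using $\minq{c_1', c_2} \leq \minq{c_1, c_2}$. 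This buys a shorter, case-free argument whose only ingredient is the $\min$-algebra of capability path independence; the paper's case split buys nothing extra here beyond perhaps making the degenerate situations (e.g., $\ncSet{v, c_1, c_2}$ a singleton when $c_1 > c_2$) more visible. One shared implicit step worth noting in both proofs: capability path independence is stated in \cref{def:lpg:ppf} only for finite levels, so when $c_1$, $c_1'$, or $c_2$ is $\infty$ one must fall back on the perfect-perception-with-infinite-capability axiom to justify the composition identity; this is immediate but unstated in both your argument and the paper's.
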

\begin{proof}
    If $\ppfMaxC{v} \leq c_1 < c_2$ and $u \in \ncSet{v, c_1, c_2}$, then
    $\ppfunc{u}{c_1} = \ppfunc{v}{c_1} = v$, which means $u \in
    \ppfuncInv{v}{c_1}$.

    If $c_1 \geq c_2$ and $u \in \ncSet{v, c_1, c_2}$, then $ u \eqN1
    \ppfunc{u}{c_2} = \ppfunc{v}{c_2}$, where $\eqN1$ is due to $\ppfMaxC{u}
    \leq c_2$ and \cref{thm:lpg:ppf-property}.

    To prove the third property, consider a few cases. If $c_1 \leq c_2$, let $u
    \in \ncSet{v, c_1, c_2}$. Then $\ppfunc{u}{c_1} = \ppfunc{v}{c_1}$ and
    $\ppfMaxC{u} \leq c_2$. Given $c_1' \leq c_1$, the capability path
    independence implies that $\ppfunc{u}{c_1'} = \ppfunc{\ppfunc{u}{c_1}}{c_1'}
    = \ppfunc{\ppfunc{v}{c_1}}{c_1'} = \ppfunc{\ppfunc{v}{c_1'}}{c_1'}$, which
    yields $u \in \ncSet{\ppfunc{v}{ c_1'}, c_1', c_2}$. If $c_1 > c_2$, then
    $\ncSet{v, c_1, c_2} = \cqty{w}$ where $w \defeq \ppfunc{v}{c_2}$. If $c_2
    \leq c_1' \leq c_1$, then $\ncSet{\ppfunc{v}{c_1'}, c_1', c_2} = \cqty{
    \ppfunc{ \ppfunc{v}{c_1'} }{ c_2}} = \cqty{w}$. If $c_1' < c_2 < c_1$, then
    $\ppfunc{w}{\minq{c_1', c_2}} = \ppfunc{v}{c_1'} = \ppfunc{\ppfunc{v}{
    c_1'}}{ \minq{c_1', c_2}}$, implying $w \in \ncSet{\ppfunc{v}{c_1'}, c_1',
    c_2}$. Combining all cases yields $\ncSet{v, c_1, c_2} \subseteq \ncSet{
    \ppfunc{ v}{ c_1'}, c_1', c_2}$.
\end{proof}


\section{Nash equilibrium in multiple-player \oslpGame{}}
\label{apx:sec:nash-equil}
\begin{definition}[Nash equilibrium of games]
    \label{def:lpg:fully-capability-aware-nash}
    In an instance of \oslpGame{} as defined in \cref{ def:lpg:gameplay}, define
    $T_i^j \defeq \ncSet{v_j^i, c_i, c_j}$ as the set of possible perceived
    payoff functions of player $j$ from the perspective of player $i$. Let $m =
    \argmin_{i \in \setN} c_i$ be a player with the lowest capability level. For
    $i \in \setN$, let $\vR_i: T_m^i \mapsto \simplex{\setS_i}$ be the response
    function of player $i$. The tuple $\sqty{\vR_1, \ldots, \vR_n}$ is called a
    \emph{response profile}.

    A response profile is a Nash equilibrium of the instance if and only if the
    following condition holds for each $i \in \setN$:
    \begin{align}
    \label{eqn:lpg:fully-capability-aware-nash:obj}
    \begin{split}
        \forall u \in T_m^i:\:
        \vR_i(u) &\in \argmaxS_{\vsig \in \simplex{\setS_i}}
            f_i\lrmid{u, \vsig}{\vR_{-i}} \\
        \text{where } f_i\lrmid{u, \vsig}{\vR_{-i}} &\defeq
            \inf_{v'_j \in T_i^j \text{ for } j \in N_i}
            \minPayoffSym\lrmid{u, c_i, \vsig}{
                \vR_j\sqty{v'_j} \text{ for } j \in N_i} \\
            N_i &\defeq \setN \setminus \cqty{i}
    \end{split}
    \end{align}
    At the Nash equilibrium, player $i$ samples an action from the distribution
    defined by $\vR_i(v_i^i)$. The value $f_i\lrmid{v_i^i, \vR_i(v_i^i)}{
    \vR_{-i}}$ is called the \emph{Nash value} of player $i$ at the equilibrium.
\end{definition}
\begin{remark}
    For any $i \in \setN$, since $T_j^i \subseteq T_m^i$ for all $j \in \setN$
    (third property in \cref{ thm:lpg:narrow-concretization}), $T_m^i$ is
    sufficient to contain the values of $u$ used in the infimum of other
    players' $f_j$ for $j \in \setN \setminus \cqty{i}$. Therefore, \cref{
    eqn:lpg:fully-capability-aware-nash:obj} only needs to ensure that
    $\vR_i(u)$ is a best response to $u$ and $\vR_{-i}$ for $u \in T_m^i$. Of
    note, $T_m^i$ is bounded due to the bounded concretization property (\cref{
    def:lpg:ppf}).

    In the two-player case with $c_1 \leq c_2$, the case of $i=1$ in~\cref{
    eqn:lpg:fully-capability-aware-nash:obj} corresponds to \cref{
    eqn:lpg:2p-fully-capability-aware-nash:f1}, and the case of $i=2$
    corresponds to the constraints on $\vRyf{v_2'}$ for $v_2' \in T_1^2$ in
    \cref{ eqn:lpg:2p-fully-capability-aware-nash:R2} (note that $T_1^1 = T_2^1
    = \cqty{u_1}$, $\vRy = \vR_2$, and $\vx^* = \vR_1(u_1)$). Therefore, the
    definition is consistent with \cref{ def:lpg:2p-fully-capability-aware-nash}
    in the two-player case.

    If $c_i = c_j$ for all $i, j \in \setN$, then $T_i^j = \cqty{v_j^j}$ for all
    $i, j \in \setN$. The definition is consistent with the conventional Nash
    equilibrium where players have no uncertainty about the other players'
    objective functions.
\end{remark}

\begin{theorem}[{\citep[Theorem 1]{fan1952fixed}}]
    \label{thm:fan-kakutani}
    Let $L$ be a locally convex topological vector space and $K$ a compact
    convex set in $L$. Let $\mathfrak{K}(K)$ be the family of all closed convex
    nonempty subsets of $K$. Then for any upper hemicontinuous point-to-set
    transformation $f$ from $K$ into $\mathfrak{K}(K)$, there exists a point
    $x_0 \in K$ such that $x_0 \in f(x_0)$.
\end{theorem}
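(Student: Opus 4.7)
The plan is to establish Fan's theorem by reducing the infinite-dimensional set-valued fixed-point problem to a directed family of finite-dimensional ones, each handled by the classical Kakutani fixed-point theorem in $\real^d$, and then to pass to a limit using compactness of $K$ and upper hemicontinuity of $f$. The locally convex structure is used to produce arbitrarily fine finite-dimensional approximations via continuous partitions of unity.

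First, I would fix a neighborhood basis $\mathcal{U}$ of the origin in $L$ consisting of open, convex, symmetric sets (which exists because $L$ is locally convex). For each $U \in \mathcal{U}$, compactness of $K$ yields a finite cover $\{x_i + U\}_{i=1}^{N(U)}$ with $x_i \in K$. Let $K_U \defeq \setConv\{x_1, \ldots, x_{N(U)}\}$, a finite-dimensional compact convex subset of $K$, and let $\{\lambda_i\}_{i=1}^{N(U)}$ be a continuous partition of unity on $K$ subordinate to this cover. Define the continuous ``rounding'' map $\pi_U: K \to K_U$ by $\pi_U(x) \defeq \sum_i \lambda_i(x)\, x_i$; by construction, $\pi_U(x) - x \in \setConv(U) = U$ for all $x \in K$.

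Next, I would define the approximating set-valued map $f_U: K_U \to \mathfrak{K}(K_U)$ by $f_U(x) \defeq \setConv\bigl(\pi_U(f(x))\bigr)$. One checks that $f_U$ inherits nonempty closed convex values and upper hemicontinuity from $f$ (closedness uses that $K_U$ is finite-dimensional, and continuity of $\pi_U$ preserves upper hemicontinuity under composition). The classical Kakutani fixed-point theorem applied in the Euclidean space containing $K_U$ then yields $x_U \in K_U$ with $x_U \in f_U(x_U)$. Unpacking, there exist $y_U \in f(x_U)$ and convex weights such that $x_U$ is a convex combination of points $\pi_U(y_U^{(j)})$ for some $y_U^{(j)} \in f(x_U)$, so after averaging (using convexity of $f(x_U)$) I obtain a single $\tilde y_U \in f(x_U)$ with $x_U - \tilde y_U \in U$.

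Finally, view $\mathcal{U}$ as a directed set under reverse inclusion, so $\{x_U\}_{U \in \mathcal{U}}$ and $\{\tilde y_U\}_{U \in \mathcal{U}}$ are nets in the compact set $K$. Extract a subnet along which both nets converge; since $x_U - \tilde y_U \to 0$, the two limits coincide at some $x_0 \in K$. Upper hemicontinuity of $f$ together with closedness of $f(x_0)$ (equivalently, closedness of the graph of $f$ when values are compact in Hausdorff $K$) forces $x_0 \in f(x_0)$, which is the desired fixed point.

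The main obstacle will be the final limiting step in the full generality of a locally convex, possibly non-metrizable $L$: one cannot use sequential arguments, so upper hemicontinuity must be exploited via nets and the characterization ``for every open $V \supseteq f(x_0)$, eventually $f(x_U) \subseteq V$.'' Combining this with $\tilde y_U \in f(x_U)$ and $\tilde y_U \to x_0$ requires care to conclude $x_0 \in f(x_0)$ rather than merely $x_0$ lying in every open neighborhood of $f(x_0)$; this is where closedness of the values $f(x_0)$ and the Hausdorff property of $L$ become essential.
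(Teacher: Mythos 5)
The paper does not actually prove this statement: it is quoted verbatim from Fan's 1952 paper and used as a black box (via \verb|\citep|), so there is no internal proof to compare against. Your argument is essentially Fan's own proof, and the standard one: reduce to finite dimensions by covering the compact set $K$ with finitely many translates $x_i + U$ of a convex symmetric neighborhood, round onto the polytope $\setConv\{x_1,\dots,x_N\}$ via a subordinate partition of unity, apply the classical Kakutani theorem there, and pass to the limit along the net of neighborhoods. The outline is correct. Two points should be made explicit rather than deferred. First, the continuous partition of unity on $K$ exists because $K$ is compact Hausdorff, hence normal; so the Hausdorff hypothesis (implicit in the statement, explicit in Fan's paper) is already needed at the construction stage, not only in your closing remark. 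Second, the final step is exactly the standard fact that an upper hemicontinuous correspondence with closed values into a compact Hausdorff (hence regular) space has closed graph; invoking that fact turns the subnet convergence $(x_U, \tilde{y}_U) \to (x_0, x_0)$ with $\tilde{y}_U \in f(x_U)$ directly into $x_0 \in f(x_0)$, resolving the worry you raise about only landing in every neighborhood of $f(x_0)$. With those two standard lemmas cited, the proof is complete.
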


\begingroup
\newcommand{\bounded}[1]{\mathcal{B}\sqty{#1}}
\begin{theorem}[Existence of Nash equilibrium in \oslpGame{}s]
    \label{thm:lpg:fully-capability-aware-nash-exist}
    The Nash equilibrium of \aoslpGame{} instance as defined in \cref{
    def:lpg:fully-capability-aware-nash} always exists.
\end{theorem}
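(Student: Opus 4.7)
The plan is to prove existence via Fan's generalized Kakutani fixed-point theorem (\cref{thm:fan-kakutani}) applied to a best-response correspondence on an appropriately chosen compact convex set of response profiles. First I would set up the ambient locally convex space. For each player $i \in \setN$, let $K_i \defeq (\simplex{\setS_i})^{T_m^i}$ be the set of all response functions $\vR_i: T_m^i \to \simplex{\setS_i}$, viewed as a subset of the product vector space $(\real^{\setS_i})^{T_m^i}$ with the product (pointwise convergence) topology. Each $K_i$ is compact by Tychonoff's theorem and convex as a product of simplices, so $K \defeq \prod_{i \in \setN} K_i$ is a nonempty compact convex subset of the locally convex space $L \defeq \prod_{i \in \setN} (\real^{\setS_i})^{T_m^i}$.

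Next I would define the best-response correspondence $F : K \to 2^K$ by $F(\vR) \defeq \prod_{i \in \setN} F_i(\vR_{-i})$, where $F_i(\vR_{-i})$ consists of all $\vS_i \in K_i$ satisfying $\vS_i(u) \in \argmaxS_{\vsig \in \simplex{\setS_i}} f_i\lrmid{u, \vsig}{\vR_{-i}}$ for every $u \in T_m^i$. A fixed point $\vR \in F(\vR)$ is precisely a Nash equilibrium in the sense of \cref{def:lpg:fully-capability-aware-nash}, so it suffices to verify the hypotheses of \cref{thm:fan-kakutani}. Nonemptiness, closedness, and convex-valuedness of $F_i(\vR_{-i})$ follow from \cref{thm:lpg:payoff-bounds-convexity}: for each $u$, the map $\vsig \mapsto f_i\lrmid{u, \vsig}{\vR_{-i}}$ is an infimum of concave continuous functions bounded below by the bounded concretization property, hence concave and upper semicontinuous on the compact simplex $\simplex{\setS_i}$, so it attains its maximum and has a nonempty compact convex argmax set. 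Pointwise selection via the axiom of choice produces a member of $F_i(\vR_{-i})$, and convex-valuedness of the pointwise argmax lifts to the full product.

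The principal obstacle is upper hemicontinuity of $F$. Because $K$ is compact Hausdorff, UHC is equivalent to the graph of $F$ being closed in $K \times K$: whenever $(\vR^{(k)}, \vS^{(k)}) \to (\vR, \vS)$ pointwise with $\vS^{(k)} \in F(\vR^{(k)})$, one must have $\vS \in F(\vR)$. Pointwise convergence gives $\vS^{(k)}_i(u) \to \vS_i(u)$ and $\vR^{(k)}_j(v'_j) \to \vR_j(v'_j)$ for every $j \in N_i$ and $v'_j \in T_i^j$, while the best-response inequality $f_i\lrmid{u, \vS^{(k)}_i(u)}{\vR^{(k)}_{-i}} \geq f_i\lrmid{u, \vsig}{\vR^{(k)}_{-i}}$ holds for every $\vsig \in \simplex{\setS_i}$. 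The subtlety is that the outer infimum over $v'_j \in T_i^j$ in the definition of $f_i$ yields only joint upper semicontinuity, not continuity, of $f_i$ in $(\vsig, \vR_{-i})$, so Berge's maximum theorem does not apply off the shelf. I plan to close this gap with a careful limit argument that combines the upper-semicontinuity bound $\limsup_k f_i\lrmid{u, \vS^{(k)}_i(u)}{\vR^{(k)}_{-i}} \leq f_i\lrmid{u, \vS_i(u)}{\vR_{-i}}$ with $\epsilon$-approximate witnesses for the infimum defining $f_i\lrmid{u, \vsig^*}{\vR_{-i}}$ for arbitrary challengers $\vsig^*$; continuity of $\minPayoffSym$ together with pointwise convergence of the evaluation maps $\vR_j \mapsto \vR_j(v'^*_j)$ in the product topology then allows the best-response inequality to be propagated across the limit, yielding $\vS_i(u) \in \argmaxS_\vsig f_i\lrmid{u, \vsig}{\vR_{-i}}$ for every $u$. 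With the hypotheses of \cref{thm:fan-kakutani} verified, the theorem delivers a fixed point of $F$, which is the desired Nash equilibrium.
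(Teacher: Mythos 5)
Your proposal follows the same overall architecture as the paper---embed response profiles as points of a compact convex subset of a locally convex space and apply \cref{thm:fan-kakutani} to the best-response correspondence---but you equip $(\simplex{\setS_i})^{T_m^i}$ with the product (pointwise-convergence) topology, whereas the paper works in the space of bounded functions under the uniform-convergence norm. This is not a cosmetic difference, and it is exactly where your argument breaks. The map $\vR_{-i}\mapsto f_i\lrmid{u, \vsig}{\vR_{-i}}$ is an infimum over the (possibly infinite) index set $\prod_{j\neq i}T_i^j$ of functions, each of which is continuous in the product topology because it depends on $\vR_{-i}$ only through finitely many evaluations $\vR_j(v_j')$. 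Such an infimum is upper semicontinuous but in general \emph{not} lower semicontinuous under pointwise convergence. To pass the inequality $f_i\lrmid{u, \vS_i^{(k)}(u)}{\vR_{-i}^{(k)}}\geq f_i\lrmid{u, \vsig^*}{\vR_{-i}^{(k)}}$ to the limit you need upper semicontinuity on the left (which you have) and \emph{lower} semicontinuity on the right, i.e.\ $\liminf_k f_i\lrmid{u, \vsig^*}{\vR_{-i}^{(k)}}\geq f_i\lrmid{u, \vsig^*}{\vR_{-i}}$. Your $\epsilon$-witness argument delivers the opposite bound: fixing a near-minimizing tuple $v'^*$ for the \emph{limit} profile and using continuity of the evaluation maps only shows $\limsup_k f_i\lrmid{u, \vsig^*}{\vR_{-i}^{(k)}}\leq f_i\lrmid{u, \vsig^*}{\vR_{-i}}+\epsilon$, an upper bound on the challenger's value along the sequence, which cannot be chained with the best-response inequality to give what you want.

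The failure is genuine rather than an artifact of bookkeeping. Take $T_i^j$ countably infinite and let $\vR_j^{(k)}$ map the first $k$ elements to $\vy_0$ and the rest to $\vy_1$, so $\vR_j^{(k)}\to\vR_j\equiv\vy_0$ pointwise; writing $\phi(\vsig,\vy)\defeq\minPayoff{u, c_i, \vsig, \vy}$, one has $f_i\lrmid{u, \vsig}{\vR_j^{(k)}}=\min\{\phi(\vsig,\vy_0),\phi(\vsig,\vy_1)\}$ for every $k$ but $f_i\lrmid{u, \vsig}{\vR_j}=\phi(\vsig,\vy_0)$. Choosing payoffs so that a maximizer of the minimum of the two bilinear forms fails to maximize $\phi(\cdot,\vy_0)$ produces a sequence in the graph of your correspondence converging pointwise to a point outside it, so the closed-graph route to upper hemicontinuity is not available in the product topology. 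The paper sidesteps this by using the uniform norm, under which the family indexed by $v'_j$ is equi-Lipschitz (via bounded concretization) and the infimum is genuinely continuous in $\vR_{-i}$; the trade-off is that compactness of the strategy-function space, which Tychonoff hands you for free, becomes the delicate point there. To rescue your version you would need either to assume every $T_i^j$ is finite, or to exploit the fact that $f_i$ depends on $\vR_j$ only through the closure of the range $\condSet{\vR_j(v_j')}{v_j'\in T_i^j}$ and rebuild the fixed-point argument on a space where that dependence is continuous; your sketch supplies neither.
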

\begin{proof}
    Define the vector space $X \defeq \prod_{i\in \setN} \bounded{U \times
    \setS_i}$ over $\real$ where $\bounded{A}$ denotes the set of bounded
    real-valued functions on $A$ and the addition and scalar multiplication are
    defined point-wise on the functions. For any point $p \in X$, use the tuple
    $p = \sqty{\vR_1, \ldots, \vR_n}$ to denote the components of $p$ where
    $\vR_i \in \bounded{U \times \setS_i}$; $\vR_i$ is interpreted as a function
    from $U$ to $\real^{ \setS_i}$. Extend $X$ to a metric space by defining the
    distance function using the uniform convergence norm defined as
    $\norm{\sqty{\vR_1, \ldots, \vR_n}}_\infty \defeq \max_{i \in \intInt{1}{n}}
    \sup_{u \in U} \max_{j \in \intInt{1}{\card{\setS_i}}} \abs{R_{ij}(u)}$
    where $R_{ij}(u)$ is the $j$-th component of $\vR_i(u)$. Of note, the
    supremum in the norm's definition is finite since $\vR_i(\cdot)$ is a
    bounded function. Since any norm is also a seminorm, $X$ is a locally convex
    topological vector space~\citep[Theorem 5.73]{aliprantis2013infinite}.
    Define the subset $S \defeq \prod_{i\in \setN} \simplex{\setS_i}^U \subseteq
    X$. It is easy to verify that $S$ is compact, convex, and nonempty.

    Given the capability profile $\sqty{c_1, \ldots, c_n}$ and the perceived
    payoff functions $\sqty{v_i^j}_{\sqty{i, j} \in \setN^2}$, with $f_i\lrmid{
    \cdot, \cdot}{\cdot}$ and $T_m^i$ defined as in \cref{
    def:lpg:fully-capability-aware-nash}, define the function $g: S^2
    \mapsto \real$ as:
    \begin{align}
    \begin{split}
        &g\sqty{\qty(\vR_i)_{i\in \setN}, \qty(\vR'_i)_{i\in \setN}} \defeq \\
        &\qquad
            \sum_{i \in \setN}
            \inf_{u \in T_m^i}
            \qty(
                f_i\lrmid{u, \vR'_i\qty(u)}{\vR_{-i}}  -
                \max_{\vsig \in \simplex{\setS_i}}
                f_i\lrmid{u, \vsig}{\vR_{-i}}
            )
    \end{split}
    \end{align}

    The function $g\sqty{\cdot, \cdot}$ is bounded since the bounded
    concretization property (\cref{ def:lpg:ppf}) implies that $T_m^i$ is
    bounded. Since $\minPayoffSym\lrmid{u, c_i, \vsig}{\vR_{j}(u'_j)}$ is
    continuous in both $\vsig$ and $\vR_{-i}$ and concave in $\vsig$ (\cref{
    thm:lpg:payoff-bounds-convexity}), $f_i\lrmid{u, \vsig}{\vR_{-i}}$ is
    continuous in both $\vsig$ and $\vR_{-i}$ and concave in $\vsig$. Therefore,
    the supremum of $\vsig \mapsto f_i\lrmid{u, \vsig}{\vR_{-i}}$ is always
    attainable. It also follows that $g(p,\, q)$ is continuous in both $p$ and
    $q$ and concave in $q$.

    Define a set-valued function $\Phi: S \mapsto 2^S$ as:
    \begin{align}
        \Phi(p) \defeq \condSet{q \in S}{g(p,\, q) = 0}
    \end{align}
    The definition of $g(\cdot,\, \cdot)$ implies that $g(p,\, q) \leq 0$. By
    setting $\vR'_i(u)$ as an arbitrary maximizer of $\vsig \mapsto f_i\lrmid{u,
    \vsig}{\vR_{-i}}$ over $\simplex{\setS_i}$, $\Phi(p)$ is nonempty. Since
    $g(p,\, q)$ is concave and continuous in $q$, $\Phi(p)$ is compact and
    convex. Moreover, since $g(p,\, q)$ is continuous in both $p$ and $q$,
    $\fGr \Phi \defeq \condSet{\sqty{p, q} \in S^2}{q \in \Phi(p)}$
    is closed, and $\Phi(\cdot)$ is thus upper hemicontinuous \citep[Theorem
    17.11]{ aliprantis2013infinite}. \Cref{thm:fan-kakutani} implies that there
    exists $p_0 \in S$ such that $p_0 \in \Phi(p_0)$. Let $p_0 = \sqty{\vR_1^*,
    \ldots, \vR_n^*}$. One can easily verify that $\sqty{\vR_1^*, \ldots,
    \vR_n^*}$ is a Nash equilibrium of the \oslpGame{} instance.
\end{proof}
\endgroup


\section{Hardness results about Nash equilibrium in two-player \oslpGame{}}
\label{apx:sec:nash-hardness}

\subsection{Hardness of computing a Nash equilibrium}
\begin{theorem}[Hardness of checking the best response]
    \label{thm:lpg:fully-capability-aware-hardness}
    Define the following decision problem in the context of an instance of a
    two-player \oslpGame{} as defined in \cref{ def:lpg:gameplay}. Assume all
    input values of the problem and all input/output values of intermediate
    Turing machines can be exactly represented as rational numbers of polynomial
    size. The problem is \NP-complete.
    \begin{itemize}
        \item \textbf{Input}:
            \begin{enuminline}
                \item $m \in \posInt$
                \item $\epsilon \in \posReal$
                \item A Turing machine that defines a payoff perception function
                    $\ppfunc{\cdot}{\cdot}$.
            \end{enuminline}
        \item \textbf{Constraints on input}:
            \begin{enuminline}
                \item The strategy spaces are $\setS_1 = \intInt{1}{m}$ and
                    $\setS_2 = \cqty{1, 2}$
                \item For any $u \in U$, the payoff perception function
                    satisfies $\ppfunc{u}{2} = u$, $\ppfunc{u}{1}$ is computable
                    in $\poly(m)$ time, $\ppfunc{ \cdot}{ \cdot}$ is
                    representable by a Turing machine of size $\poly(m)$,
                    $\card{ \ppfuncInv{ u}{ 1}}$ is finite, and
                    $\ppfuncInv{u}{1}$ is enumerable by a Turing machine of size
                    $\poly(m)$
                \item Given $u \in U$, $\vx \in \simplex{\setS_1}$, and $c \in
                    \posInt$, both $\minPayoff{u, c, \vx, \vy}$ for any $\vy \in
                    \simplex{\setS_2}$ and $\argmax_{\vy \in \simplex{\setS_2}}
                    \minPayoff{u, c, \vx, \vy}$ are computable in $\poly(m)$
                    time.
            \end{enuminline}
        \item \textbf{Constants}:
            The first player's perceived payoff function of the second player,
            denoted as $v_1$, is defined as $v_1\sqty{x, y} = -y$ for all
            $\sqty{x, y} \in \setS_1 \times \setS_2$. Define $\vy_0 = \T{\mqty[0
            & 1]}$, $c_1 = 1$, and $c_2 = 2$.
        \item \textbf{Arbitrary values}:
            \begin{enuminline}
                \item $\vx_0 \in \simplex{\setS_1}$ is arbitrarily chosen
                \item an arbitrary Turing machine that defines a best response
                    function $\vRy: U \mapsto \simplex{\setS_2}$ such that for
                    all $u \in U$, $\vRyf{u} \in \argmaxS_{\vy \in
                    \simplex{\setS_2}} \minPayoff{u, c_2, \vx_0, \vy}$, and
                    $\vRyf{u}$ is computable in $\poly(m)$ time.
            \end{enuminline}
            Of note, $\vx_0$ and $\vRyf{\cdot}$ can be chosen depending on the
            input; there is no choice of $\vx_0$ or $\vRyf{\cdot}$ that makes
            the problem easier.
        \item \textbf{Output}: whether there exists $v_2' \in \ncSet{v_1, c_1,
            c_2}$ such that $\norm{\vRyf{v_2'} - \vy_0}_\infty \leq \epsilon$.
    \end{itemize}
\end{theorem}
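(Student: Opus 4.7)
The plan is to prove NP-completeness by separately establishing membership in NP and NP-hardness via reduction from 3-\textsc{Sat}. For NP-membership, I would guess a rational matrix $v_2'$ as a polynomial-size certificate (such a witness exists because $\ppfuncInv{v_1}{1}$ is enumerable by a Turing machine of size $\poly(m)$, whose outputs are polynomial-size), verify $v_2' \in \ncSet{v_1, c_1, c_2}$ by computing $\ppfunc{v_2'}{1}$ and $\ppfunc{v_1}{1}$ in polynomial time and comparing them (the constraint $\ppfMaxC{v_2'} \leq 2$ is automatic since $\ppfunc{u}{2} = u$), and finally invoke the given best-response oracle and test $\norm{\vRyf{v_2'} - \vy_0}_\infty \leq \epsilon$.

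For NP-hardness, I would reduce from 3-\textsc{Sat}. Given a 3-CNF formula $\phi$ over variables $x_1, \ldots, x_n$, set $m \defeq n$, $\epsilon \defeq 1/2$, let $\vx_0 \in \simplex{\setS_1}$ be uniform, and for each assignment $A \in \cqty{0, 1}^n$ define $u_A \in \real^{n \times 2}$ by $u_A(i, 1) \defeq -1 + A_i$ and $u_A(i, 2) \defeq 1$. Let $\ppfunc{u}{c} \defeq u$ for all $c \geq 2$ and
\begin{align*}
    \ppfunc{u}{1} \defeq \begin{dcases}
        v_1 & \text{if } u = v_1, \text{ or } u = u_A \text{ for some } A \text{ satisfying } \phi \\
        u & \text{otherwise}
    \end{dcases}
\end{align*}
Take $\vRyf{u}$ as an arbitrary element of $\argmaxS_{\vy \in \simplex{\setS_2}} u(\vx_0, \vy)$ via a fixed tie-breaking rule, which is polynomial-time computable and meets the problem's best-response requirement because $\ppfuncInv{u}{2} = \cqty{u}$ forces $\minPayoff{u, 2, \vx_0, \vy} = u(\vx_0, \vy)$.

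Correctness of the reduction rests on three claims. First, $\ppfuncSym$ is a valid payoff perception function: capability path independence at level $1$ amounts to idempotency of $\ppfunc{\cdot}{1}$, which holds because $\ppfunc{v_1}{1} = v_1$ by the first branch and the second branch is trivially idempotent; bounded concretization holds because $\ppfuncInv{v_1}{1} \subseteq \cqty{v_1} \cup \condSet{u_A}{A \in \cqty{0, 1}^n}$ has entries in $\cqty{-2, -1, 0, 1}$, and $\ppfuncInv{v}{1} = \cqty{v}$ for every $v \neq v_1$. Second, the input-validity requirement that $\minPayoff{u, c, \vx, \vy}$ be polynomial-time computable is met: for $v \neq v_1$ the concretization set is a singleton; and for $v_1$, every $u_A$ pointwise dominates $v_1$, so $\inf_{u' \in \ppfuncInv{v_1}{1}} u'(\vx, \vy) = v_1(\vx, \vy)$ whenever $\vx, \vy \geq \V{0}$. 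Third, the decision answer equals satisfiability of $\phi$: $\vRyf{v_1} = (1, 0)$ lies at $\infty$-distance $1 > \epsilon$ from $\vy_0 = (0, 1)$, while for any satisfying $A$ we have $u_A(\vx_0, 2) = 1 > -1 + \vx_0 \cdot A = u_A(\vx_0, 1)$, forcing $\vRyf{u_A} = \vy_0$ at distance $0$.

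The main obstacle I expect is designing the identity branch of $\ppfunc{\cdot}{1}$ so that idempotency, bounded concretization, finite enumerability, and polynomial-time payoff-bound computability all hold simultaneously. The key trick is the pointwise-domination structure of the SAT witnesses $u_A \geq v_1$, which collapses the infimum over an exponentially large concretization set to a trivial closed form and thus sidesteps what would otherwise be an NP-hard inner optimization, keeping the constructed instance within the problem's computational constraints.
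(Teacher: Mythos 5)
Your proposal is correct and follows essentially the same route as the paper's proof: a reduction from SAT in which each satisfying assignment is encoded as a level-$2$ payoff function whose level-$1$ abstraction is $v_1$, the payoffs are rigged so that the (unique) best response to $v_1$ is $\T{\mqty[1 & 0]}$ while the best response to any assignment-encoding function is $\vy_0$, and NP membership follows from guessing $v_2'$ and verifying it in polynomial time; your pointwise-domination trick plays the same role as the paper's explicit $\lambda \in [2,3]$ computation in keeping $\minPayoffSym$ tractable. The one nit is that you fix $\vx_0$ to be uniform even though the theorem requires hardness for an arbitrary $\vx_0$, but this is cosmetic since your coefficient $-1 + \sum_i x_{0,i} A_i \leq 0 < 1$ for every $\vx_0 \in \simplex{\setS_1}$, so the argmax sets remain singletons and the argument goes through unchanged.
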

\begin{proof}
    The proof uses a reduction from the Boolean satisfiability problem (SAT), a
    well-known \NP-complete problem \citep{garey1979computers}. Let $\phi$ be a
    Boolean formula with $n$ variables and $\poly(n)$ clauses. Denote the
    variables in $\phi$ as $\cqty{t_1, \ldots, t_n}$. Let $m = 2n$. Define a set
    $S_\phi \subseteq U$ as a set of payoff functions that encodes the solution
    to the SAT problem $\phi$. Given a payoff function $u \in U$, $u \in S_\phi$
    if and only if all of the following conditions hold:
    \begin{itemize}
        \item $u\sqty{x, 1} = 1$ for all $x \in \intInt{1}{2n}$.
        \item $\cqty{u\sqty{2k-1, 2}, u\sqty{2k, 2}} = \cqty{2, 3}$ for all $k
            \in \intInt{1}{n}$.
        \item Let $t_k$ be true if and only if $u\sqty{2k, 2} = 3$. The
            assignment $\cqty{t_1, \ldots, t_n}$ satisfies the formula~$\phi$.
    \end{itemize}

    Let $v_1\sqty{x, y} = -y$ as defined above. Define a function $\ppfuncSym: U
    \times \posInt \to U$ as
    \begin{align}
    \label{eqn:lpg:fully-capability-aware-hardness-ppf}
    \begin{split}
        \ppfunc{u}{c} &=
            \begin{dcases}
                v_1 & \text{if } u \in S_\phi \cup \cqty{v_1}
                    \text{ and } c = 1 \\
                u & \text{otherwise}
            \end{dcases} \\
    \end{split}
    \end{align}
    It is straightforward to verify that $\ppfunc{\cdot}{\cdot}$ satisfies
    \cref{def:lpg:ppf} and is thus a payoff perception function. One can also
    easily verify that $\ppfunc{u}{c}$ is computable in $\poly(n)$ time with a
    Turing machine of size $\poly(n)$.

    The construction above further ensures that $\phi$ is satisfiable if and
    only if there exists $u \neq v_1$ such that $\ppfunc{u}{1} = v_1$.
    Thus it holds that
    \begin{align}
        \label{eqn:lpg:fully-capability-aware-hardness0}
        \ppfuncInv{v_1}{1} = S_\phi \cup \cqty{v_1}
    \end{align}

    Define $\vy(a) \defeq \T{\mqty[a & 1 - a]}$ where $0 \leq a \leq 1$. For any
    $\vx \in \simplex{\setS_1}$, we have
    \begin{align}
    \label{eqn:lpg:fully-capability-aware-hardness1}
    \begin{split}
        v_1\sqty{\vx, \vy(a)} &= (-1) \cdot a + (-2) \cdot (1-a) = a - 2 < 0 \\
        u\sqty{\vx, \vy(a)} &= a + \lambda(1-a) = (1-\lambda)a + \lambda > 0
            \quad\text{for }u \in S_\phi \\
        \text{where } \lambda &\defeq
            \sum_{k=1}^n\qty(u\sqty{2k-1, 2}x_{2k-1} + u\sqty{2k, 2}x_{2k})
    \end{split}
    \end{align}
    The value of $\lambda$ in \cref{eqn:lpg:fully-capability-aware-hardness1}
    satisfies $2 \leq \lambda \leq 3$ due to the definition of $u\sqty{\cdot,
    2}$ for $u \in S_\phi$. It follows that for $c \in \extPosInt$ and $u \in
    S_\phi$, $\minPayoff{v_1, c, \vx, \vy(a)} = a - 2$ and $\minPayoff{u, c,
    \vx, \vy(a)} = (1-\lambda)a + \lambda$. Thus
    \begin{align}
    \label{eqn:lpg:fully-capability-aware-hardness2}
    \begin{split}
        \forall c \in \extPosInt:\quad
        \argmaxS_{a \in [0,\, 1]}
            \minPayoff{v_1, c, \vx, \vy(a)} &= \cqty{1} \\
        \forall c \in \extPosInt,\, \forall u \in S_\phi:\quad
        \argmaxS_{a \in [0,\, 1]}
            \minPayoff{u, c, \vx, \vy(a)} &= \cqty{0}
    \end{split}
    \end{align}
    Therefore, $\argmax_{\vy \in \simplex{\setS_2}} \minPayoff{u, c, \vx, \vy}$
    is computable in $\poly(n)$ time for any $u \in U$, and thus the payoff
    perception function defined by \cref{
    eqn:lpg:fully-capability-aware-hardness-ppf} satisfies the constraints
    specified in \cref{ thm:lpg:fully-capability-aware-hardness}.

    Define a set $J \defeq \condSet{\vRyf{v_2'}}{v_2' \in \ncSet{v_1, c_1,
    c_2}}$. Combining \cref{ eqn:lpg:fully-capability-aware-hardness0,
    eqn:lpg:fully-capability-aware-hardness2} with the definition of $J$ yields
    \begin{align}
        \label{eqn:lpg:fully-capability-aware-hardness3}
        J = \begin{dcases}
            \cqty{\T{\mqty[1 & 0]}, \T{\mqty[0 & 1]}}
                & \text{if } \phi \text{ is satisfiable} \\
            \cqty{\T{\mqty[1 & 0]}} & \text{otherwise}
        \end{dcases}
    \end{align}
    As a result, by setting $\epsilon < 1$, whether $\vy' \in J$ for $\norm{\vy'
    - \vy_0}_\infty \leq \epsilon$ is equivalent to whether $\phi$ is
    satisfiable, which makes the problem \NP-hard. On the other hand, since we
    assume $\ppfuncInv{u}{1}$ is finite and enumerable, the problem is in \NP.
    Therefore, the problem is \NP-complete.
    \qed
\end{proof}

A consequence of \cref{ thm:lpg:fully-capability-aware-hardness} is that
verifying an $\epsilon$-Nash equilibrium or an $\epsilon$-Stackelberg
equilibrium of \aoslpGame{} instance is both \NP-hard and \coNP-hard. With the
setup in the proof of \cref{ thm:lpg:fully-capability-aware-hardness}, one can
choose a payoff function $u_1$, a strategy $\vx_0$, and a value $\epsilon \in
\posReal$ such that $\sqty{\vx_0, \vRyf{\cdot}}$ is an $\epsilon$-Nash
equilibrium if and only if $\vy_0 \in J$. Similarly, one can choose values such
that $\sqty{\vx_0, \vRyf{\cdot}}$ is an $\epsilon$-Nash equilibrium if and only
if $\vy_0 \notin J$.

\subsection{Proof to \cref{thm:lpg:fully-capability-aware-nash-repr}}
\label{apx:sec:repr-proof}

\begingroup
\newcommand{\vRyfp}[1]{\vRyf[']{#1}}
\newcommand{\orcl}[1]{\mathfrak{O}\sqty{#1}}
\newcommand{\fval}{f^{*}}
\newcommand{\vsY}{\vs_{\vRy}}
\newcommand{\SY}{S_{\vRy}}
\newcommand{\SYcore}{\SY^{\circ}}
\newcommand{\SYp}{S_{\vRy'}}
\newcommand{\fY}[1]{f_{\vRy}\sqty{#1}}
\newcommand{\fYp}[1]{f_{\vRy'}\sqty{#1}}
\newcommand{\ftv}{\bar{f}}  
\ThmCompactRepr*
\begin{proof}
    Let $u'\sqty{\cdot, \vy} \in \real^{\card{\setS_1}}$ denote the vector
    containing the expected payoffs of the first player for each pure action
    given $\vy$. Note that
    \begin{align}
    \label{eqn:lpg:fully-capability-aware-nash-repr:payoff-bounds}
    \begin{split}
        \minPayoff{u, c, \vx, \vy} &=
            \inf_{u' \in \ppfuncInv{u}{c}} u'\sqty{\vx, \vy}
            = \inf_{\vs \in \lpgCharSetP{u, c}{\vy}} \T{\vs}\vx \\
        \text{where } \lpgCharSetP{u, c}{\vy} &\defeq
            \condSet{u'\sqty{\cdot, \vy}}{u' \in \ppfuncInv{u}{c}}
    \end{split}
    \end{align}
    Given $\vRy: T_1^2 \mapsto \simplex{\setS_2}$, define $f_{\vRy}:
    \simplex{\setS_1} \mapsto \real$ as in \cref{
    eqn:lpg:2p-fully-capability-aware-nash:f1} and apply \cref{
    eqn:lpg:fully-capability-aware-nash-repr:payoff-bounds}:
    \begin{align}
    \label{eqn:lpg:fully-capability-aware-nash-repr:SY}
    \begin{split}
        \fY{\vx} &= \inf_{v_2' \in T_1^2}
            \inf_{\vs \in \lpgCharSetP{u_1, c_1}{\vRyf{v_2'}}} \T{\vs}\vx
        = \min_{\vs \in \SY}\T{\vs}\vx \\
        \text{where } \SY &\defeq \setConv \setCl
            \bigcup_{v_2' \in T_1^2} \lpgCharSetP{u_1, c_1}{\vRyf{v_2'}}
    \end{split}
    \end{align}
    The closure in $\SY$ makes it compact so the infimum can be replaced with
    minimum. The convex hull preserves the extreme value of a linear objective
    over a compact set. The set $\SY$ is bounded due to the bounded
    concretization property.

    Define $\ftv \defeq \max_{\vx \in \simplex{\setS_1}} \fY{\vx}$ as the first
    player's exact Nash value given $\vRyf{\cdot}$. With the minimax theorem,
    $\ftv$ can be rewritten as:
    \begin{align*}
    \begin{gathered}
        \ftv = \max_{\vx \in \simplex{\setS_1}} \min_{\vs \in \SY}\T{\vs}\vx
        = \min_{\vs \in \SY} \max_{\vx \in \simplex{\setS_1}} \T{\vs}\vx
        = \max_{\vx \in \simplex{\setS_1}} \T[*]{\vsY}\vx \\
        \text{where } \T[*]{\vsY} \defeq
            \argmin_{\vs \in \SY} \max_{\vx \in \simplex{\setS_1}}
                \T{\vs}\vx
    \end{gathered}
    \end{align*}
    Carath\'{e}odory's theorem states that there are points $\vsY^{(i)} \in
    \real^{\card{\setS_1}}$ indexed by $i \in \setR$ with $\setR \defeq
    \intInt{1}{\card{\setS_1}+1}$ such that $\vsY^* = \sum_{i \in \setR}
    \alpha_i \vsY^{(i)}$ for $\V{\alpha} \in \simplex{\setR}$ and $\vsY^{(i)}
    \in \setCl \bigcup_{v_2' \in T_1^2} \lpgCharSetP{u_1, c_1}{\vRyf{v_2'}}$.

    Given $\epsilon' > 0$, there exists $v_i^* \in T_1^2$ and $\vsY^{(i)\prime}
    \in \lpgCharSetP{u_1, c_1}{ \vRyf{v_i^*}}$ for $i \in \setR$ such that
    $\norm{\vsY^{(i)\prime} - \vsY^{(i)}}_2 \leq \epsilon'$. Define $\vsY'
    \defeq \sum_{i \in \setR} \alpha_i \vsY^{(i)\prime}$. Then we have
    \begin{align*}
    \begin{split}
        \forall \vx \in \simplex{\setS_1}:
        \abs{\T[\prime]{\vsY}\vx - \T[*]{\vsY}\vx}
        & \leq \sum_{i \in \setR}
            \alpha_i \abs{\T[(i)\prime]{\vsY}\vx - \T[(i)]{\vsY}\vx} \\
        & \leq \sum_{i \in \setR}
            \alpha_i \norm{\vsY^{(i)\prime} - \vsY^{(i)}}_2 \norm{\vx}_2
        \leq \sum_{i \in \setR} \alpha_i \epsilon'
        = \epsilon'
    \end{split}
    \end{align*}
    Let $\SYcore \defeq \setConv \setCl \bigcup_{i \in \setR} \lpgCharSetP{u_1,
    c_1}{ \vRyf{v_i^*}}$. Then $\vsY' \in \SYcore$ and $\SYcore \subseteq \SY$.

    The response function $\vRyfp{\cdot}$ is constructed so that it agrees with
    $\vRyf{\cdot}$ on $\SYcore$. Let $\vy^*_i \defeq \vRyf{v_i^*}$ for $i \in
    \setR$. Let $\fval \defeq \fY{\vx^*}$. For $u \in T_1^2 \setminus
    \condSet{v_i^*}{ i \in \setR}$, since $\vRyf{u}$ is a solution to \cref{
    eqn:lpg:fully-capability-aware-nash-repr:orcl} given $t = \fval$, the oracle
    $\orcl{u, \fval}$ is well-defined. Define $\vRyfp{\cdot}$ as in \cref{
    eqn:lpg:fully-capability-aware-nash-repr}. Then $\vRyfp{v_i^*} =
    \vRyf{v_i^*}$ for $i \in \setR$.  Since $\sqty{\vx^*, \vRy}$ is an
    $\epsilon$-Nash equilibrium, we have $\ftv - \epsilon \leq \fval \leq \ftv
    $. Given $\vRyfp{\cdot}$, define $\fYp{\cdot}$ and $\SYp$ as in \cref{
    eqn:lpg:fully-capability-aware-nash-repr:SY}. We then have $\SYcore
    \subseteq \SYp$, which yields
    \begin{align}
    \label{eqn:lpg:fully-capability-aware-nash-repr:fYpmax}
    \begin{split}
        \max_{\vx \in \simplex{\setS_1}} \fYp{\vx}
        &= \max_{\vx \in \simplex{\setS_1}} \min_{\vs \in \SYp} \T{\vs}\vx
        \leq \max_{\vx \in \simplex{\setS_1}} \min_{\vs \in \SYcore} \T{\vs}\vx
        \\ &\leq \max_{\vx \in \simplex{\setS_1}} \T[\prime]{\vsY}\vx
        \leq \max_{\vx \in \simplex{\setS_1}} \T[*]{\vsY}\vx + \epsilon'
        = \ftv + \epsilon'
    \end{split}
    \end{align}

    Due to the definition of $\orcl{\cdot}$ in \cref{
    eqn:lpg:fully-capability-aware-nash-repr:orcl}, we have
    \begin{align}
        \label{eqn:lpg:fully-capability-aware-nash-repr:fYpmin1}
        \forall u \in T_1^2 \setminus \condSet{v_i^*}{i \in \setR}:\:
        \minPayoff{u_1, c_1, \vx^*, \vRyfp{u}} \geq \fval \geq \ftv - \epsilon
    \end{align}
    For $i \in \setR$, let $\lpgCharSetSym_i' \defeq \lpgCharSetP{u_1,
    c_1}{\vRyfp{v_i^*}}$. Then $ \lpgCharSetSym_i' \subseteq \SYcore \subseteq
    \SY$, which implies
    \begin{align}
        \label{eqn:lpg:fully-capability-aware-nash-repr:fYpmin2}
        \minPayoff{u_1, c_1, \vx^*, \vRyfp{v_i^*}}
        = \inf_{\vs \in \lpgCharSetSym_i'} \T{\vs}\vx^*
        \geq \min_{\vs \in \SY} \T{\vs}\vx^*
        = \fval \geq \ftv - \epsilon
    \end{align}
    Combining \cref{ eqn:lpg:fully-capability-aware-nash-repr:fYpmin1,
    eqn:lpg:fully-capability-aware-nash-repr:fYpmin2} yields $\fYp{\vx^*} \geq
    \ftv - \epsilon$, which further implies that $\fYp{\vx^*} \geq \max_{\vx \in
    \simplex{\setS_1}} \fYp{\vx} - (\epsilon + \epsilon')$ when combined with
    \cref{ eqn:lpg:fully-capability-aware-nash-repr:fYpmax}. Thus $\vx^*$ is an
    $(\epsilon + \epsilon')$-strategy for the first player. The definitions of
    $\vRyfp{\cdot}$ and $\orcl{\cdot}$ ensure that $\vRyfp{\cdot}$ is an
    $(\epsilon + \epsilon')$-strategy for the second player. Therefore,
    $\sqty{\vx^*, \vRy'}$ is an $\qty(\epsilon + \epsilon')$-Nash equilibrium.
    \qed
\end{proof}
\endgroup


\section{Nash equilibrium in maximin-attainable games}
\begingroup
\newcommand{\Af}[1]{A_{u}\sqty{#1}}
\newcommand{\cond}[1]{$\mathcal{C}_{#1}$}
\begin{theorem}[Maximin strategy as Nash in maximin-attainable \tfzGame{}]
    \label{apx:thm:lpg:maximin-equil-nash}
    In \aoslpGame{} instance as defined in \cref{ def:lpg:tfzg}, let $T_1^2
    \defeq \ncSet{u_1, c_1, c_2}$. Assume the instance is maximin-attainable as
    defined in \cref{ def:lpg:maximin-attainable}. For any maximin strategy
    $\vx^* \in \simplex{\setS_1}$ of the first player, there is a response
    function $\vRy_n: T_1^2 \mapsto \simplex{\setS_2}$ such that $\sqty{\vx^*,
    \vRy_n}$ is a Nash equilibrium as defined in \cref{ def:lpg:tfzg-nash}.
\end{theorem}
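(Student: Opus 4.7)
The plan is to combine the general Nash existence result (\cref{thm:lpg:fully-capability-aware-nash-exist}) with the maximin-attainability hypothesis in three stages. First, I would apply \cref{thm:lpg:fully-capability-aware-nash-exist} to the given \tfzGame{} instance, viewed as an \oslpGame{}, to obtain some Nash equilibrium $\sqty{\vx^{(0)}, \vRy^{(0)}}$. By \cref{thm:lpg:tfzg-equil-order} together with the maximin-attainability hypothesis (\cref{def:lpg:maximin-attainable}), the first player's Nash value satisfies $V_n = V_h$, hence $\sup_{\vx \in \simplex{\setS_1}} f_{\vRy^{(0)}}(\vx) = V_h$.

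Second, I would show that the given maximin strategy $\vx^*$ lies in $\argmaxS_{\vx} f_{\vRy^{(0)}}(\vx)$. Rewriting the infimum in the definition of $f_{\vRy^{(0)}}$ via \cref{thm:lpg:tfzg-p1-for-all} applied to the singleton-valued selection $\Ry(u', \vx) \defeq \cqty{\vRy^{(0)}(u')}$ yields
\[
    f_{\vRy^{(0)}}(\vx^*) \;=\; \inf_{u'' \in \ppfuncInv{u_1}{c_1}} u''\sqty{\vx^*, \vRy^{(0)}(\ppfunc{u''}{c_2})} \;\geq\; \inf_{u''} \inf_{\vy} u''\sqty{\vx^*, \vy} \;=\; h_1(\vx^*) \;=\; V_h,
\]
where the final equality uses that $\vx^*$ is maximin. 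Combined with $\sup_\vx f_{\vRy^{(0)}}(\vx) = V_h$ from the first stage, this forces $f_{\vRy^{(0)}}(\vx^*) = V_h$, and hence $\vx^* \in \argmaxS_\vx f_{\vRy^{(0)}}(\vx)$.

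Third, I would construct $\vRy_n: T_1^2 \to \simplex{\setS_2}$ satisfying \cref{eqn:lpg:tfzg-nash:R2} for $\vx^*$ while preserving that $\vx^*$ is a maximizer of $f_{\vRy_n}$. The set
\[
    \mathcal{R}^* \;\defeq\; \prod_{u' \in T_1^2} \argminS_{\vy \in \simplex{\setS_2}} \maxPayoff{u', c_2, \vx^*, \vy}
\]
of response functions that are best responses to $\vx^*$ is nonempty, compact, and convex in the uniform convergence topology used in the proof of \cref{thm:lpg:fully-capability-aware-nash-exist}. Any $\vRy_n \in \mathcal{R}^*$ trivially satisfies \cref{eqn:lpg:tfzg-nash:R2}, and by the Stackelberg chain underlying the proof of \cref{thm:lpg:maximin-equil}, any such $\vRy_n$ also achieves $f_{\vRy_n}(\vx^*) = V_h$. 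I would then apply a Kakutani--Fan fixed-point argument on $\mathcal{R}^*$, paralleling the construction in the proof of \cref{thm:lpg:fully-capability-aware-nash-exist}, to select a specific $\vRy_n \in \mathcal{R}^*$ enjoying the additional property $f_{\vRy_n}(\vx) \leq V_h$ for every $\vx$.

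The main obstacle lies in this third step. An arbitrary element of $\mathcal{R}^*$ is merely a best response to $\vx^*$---analogous to an arbitrary best response to the row-maximin strategy in a conventional zero-sum game, which is generally not a minimax column strategy---and can be exploited by a first-player deviation, giving $f_{\vRy_n}(\vx) > V_h$ for some $\vx \neq \vx^*$. The fixed-point construction must therefore select a \emph{minimax-robust} response function. I expect this to be handled by extending the selected $\vRy_n$ to a Stackelberg best response function agreeing with $\vRy_n$ at $\vx^*$ and acting as a best response for other $\vx$'s, then invoking the maximin-attainability identity $V_s = V_h$ on this extension to obtain the uniform upper bound on $f_{\vRy_n}$.
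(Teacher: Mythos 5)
Your Stages 1 and 2 are sound and give you a Nash pair $\sqty{\vx^{(0)}, \vRy^{(0)}}$ for which the prescribed maximin strategy $\vx^*$ also maximizes $f_{\vRy^{(0)}}$; but that pair is not the desired equilibrium, because $\vRy^{(0)}$ best-responds to $\vx^{(0)}$ rather than to $\vx^*$, so everything hinges on Stage 3 --- and Stage 3 does not close. The ``Kakutani--Fan fixed-point argument on $\mathcal{R}^*$'' names no correspondence whose fixed point would have the property you want (and, incidentally, $\mathcal{R}^*$ is generally not compact in the uniform-convergence topology when $T_1^2$ is infinite). More importantly, your fallback --- extend $\vRy_n$ to a Stackelberg best response function $\vRy^*$ and invoke $V_s = V_h$ --- bounds the wrong quantity. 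Writing $F(\vx,\vx') \defeq \inf_{u'\in T_1^2}\minPayoff{u',c_2,\vx,\vRyf[^*]{u',\vx'}}$, maximin-attainability controls only the diagonal $F(\vx,\vx) = g_{\vRy^*}(\vx) \leq V_h$, in which the second player re-optimizes against each deviation $\vx$; the Nash condition \cref{eqn:lpg:tfzg-nash:f1} instead requires the off-diagonal bound $f_{\vRy_n}(\vx) = F(\vx,\vx^*) \leq V_h$ for all $\vx$, in which the second player keeps playing its response to $\vx^*$. Since the diagonal uses a $\maxPayoffSym$-minimizing response and the off-diagonal uses a fixed one, neither dominates the other, so $V_s = V_h$ by itself cannot exclude a profitable first-player deviation against an arbitrary element of $\mathcal{R}^*$ --- exactly the exploitability you flag and then leave unresolved.

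The paper's proof supplies precisely the missing selection principle. It fixes a direction $\vq$, sets $\vx_k = \vx^* + \tfrac{1}{k}\vq$, picks best responses $\vy_k^u$ to each $\vx_k$, and defines $\vRy_n(u) \defeq \lim_k \vy_k^u$, which lies in the best-response set at $\vx^*$ by upper hemicontinuity of the Berge correspondence. A hypothetical profitable deviation is then refuted by combining
\begin{enumerate*}
\item $F(\bar{\vx}, \vx^* + \lambda\vd) \to F(\bar{\vx},\vx^*)$ as $\lambda \to 0^+$ along that same direction (this is where the limit construction is used);
\item concavity of $F(\cdot,\vx')$ in its first argument;
\item $F(\vx,\vx) \leq V_h$ from maximin-attainability; and
\item $F(\vx^*,\vx') \geq F(\vx^*,\vx^*) = V_h$ for all $\vx'$.
\end{enumerate*}
None of these ingredients (the directional sequence, the limit selection, the concavity chain) appears in your outline, so Stage 3 is a genuine gap rather than a deferrable technicality.
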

\begin{proof}
    For $u \in T_1^2$, define a correspondence $A_u: \simplex{\setS_1}
    \rightrightarrows \simplex{\setS_2}$ as the best response set of the second
    player: $\Af{\vx} \defeq \argmaxS_{\vy \in \simplex{\setS_2}} \minPayoff{u,
    c_2, \vx, \vy}$.  Since $\minPayoff{u, c_2, \vx, \vy}$ is continuous and
    concave in both $\vx$ and $\vy$ (\cref{ thm:lpg:payoff-bounds-convexity}),
    it follows that $\Af{\cdot}$ is an upper hemicontinuous correspondence
    (Berge Maximum Theorem \citep[Theorem 17.31]{ aliprantis2013infinite}), and
    $\Af{ \vx}$ is convex and compact for all $\vx \in \simplex{ \setS_1}$.

    Pick an arbitrary vector $\vq \in \real^{\card{\setS_1}}$ such that $\vq
    \neq \V{0}$ and $\vx^* + \vq \in \simplex{\setS_1}$. Define the sequence
    $\cqty{\vx_n}$ as $\vx_n \defeq \vx^* + \frac{1}{n} \vq$. For each $u \in
    T_1^2$, pick an arbitrary sequence $\cqty{\vy_n^u}$ such that $\vy_n^u \in
    \Af{\vx_n}$. Then the limit $\vy^{u*} \defeq \lim_{n \to \infty} \vy_n^u$
    exists and $\vy^{u*} \in \Af{\vx^*}$ (\citep[Theorem 17.20]{
    aliprantis2013infinite}).

    Define a response function $\vRy^*: T_1^2 \times \simplex{\setS_1} \mapsto
    \simplex{\setS_2}$ as:
    \begin{align*}
        \vRyf[^*]{u, \vx} \defeq \begin{dcases}
            \vy_n^u & \text{if } \vx = \vx_n \\
            \vy^{u*} & \text{if } \vx = \vx^* \\
            \text{an arbitrary element in } \Af{\vx} & \text{otherwise}
        \end{dcases}
    \end{align*}
    It is easy to verify that $\vRyf[^*]{\cdot}$ satisfies \cref{
    eqn:lpg:tfzg-stackelberg:R2}. Define the Nash response function as
    $\vRyf[_n]{u} \defeq \vRyf[^*]{u, \vx^*}$.

\newcommand{\Gf}[1]{G_{u}\sqty{#1}}
\newcommand{\Gpf}[1]{G'_{u}\sqty{#1}}
    For any $u \in T_1^2$, define $G_u: \simplex{\setS_1}^2 \mapsto \real$ as
    \[
        \Gf{\vx, \vx'}
            \defeq \minPayoff{u, c_2, \vx, \vRyf[^*]{u, \vx'}}
    \]
\newcommand{\Ff}[1]{F\sqty{#1}}
    Define a function $F: \simplex{\setS_1}^2 \mapsto \real$ as
    \[
        \Ff{\vx, \vx'} \defeq \inf_{u \in T_1^2} \Gf{\vx, \vx'}
    \]
    Let $V \defeq \sup_{\vx \in \simplex{\setS_1}} h_1(\vx)$. Then $V =
    h_1(\vx^*) = g_{\vRy^*}(\vx^*)$ according to \cref{ thm:lpg:maximin-equil},
    where $h_1(\cdot)$ is defined in \cref{ eqn:lpg:tfzg-h1}, and $g_{\vRy^*}(
    \cdot)$ is defined as in \cref{ eqn:lpg:tfzg-stackelberg:g}.
    It is easy to verify that $\Ff{\cdot}$ satisfies the following properties:
    \begin{enumerate}
        \item $\Ff{\vx, \vx'}$ is continuous and concave in $\vx$ for any $\vx'
            \in \simplex{\setS_1}$.
        \item $\Ff{\vx, \vx} = g_{\vRy^*}(\vx)$ for all $\vx \in \simplex{
            \setS_1}$.
        \item $\Ff{\vx^*, \vx^*} = g_{\vRy^*}(\vx^*) = V$ and $\Ff{\vx, \vx}
            \leq V$ for all $\vx \in \simplex{\setS_1}$.
        \item $\sqty{\vx^*, \vRy_n}$ is a Nash equilibrium if and only if
            $\vx^* \in \argmaxS_{\vx \in \simplex{\setS_1}} \Ff{\vx, \vx^*}$.
        \item Plugging \cref{ thm:lpg:tfzg-p1-for-all} into the definition of
            $h_1(\cdot)$ yields that:
            \begin{align*}
                \forall \vx' \in \simplex{\setS_1}:
                \Ff{\vx^*, \vx^*}
                &= h_1(\vx^*)
                = \inf_{u \in T_1^2} \inf_{\vy \in \simplex{\setS_2}}
                    \minPayoff{u, c_2, \vx^*, \vy} \\
                &\leq \inf_{u \in T_1^2} \Gf{\vx^*, \vx'}
                = \Ff{\vx^*, \vx'}
            \end{align*}
    \end{enumerate}

    \newcommand{\vxb}{\bar{\vx}}
    Now we prove by contradiction. Assume that $\sqty{\vx^*, \vRy_n}$ is not a
    Nash equilibrium. Then there exists $\delta > 0$ such that $\max_{\vx \in
    \simplex{\setS_1}} \Ff{\vx, \vx^*} \geq V + 2\delta$. Define the set $\setM
    \defeq \condSet{\vx \in \simplex{\setS_1}}{\Ff{\vx, \vx^*} \geq V +
    \delta}$. Since $\Ff{\vx, \vx^*}$ is continuous and concave in $\vx$,
    $\setM$ is compact, convex, and nonempty. There also exists $\epsilon > 0$
    such that $B_{\epsilon}[\vx^*] \subseteq \setM$, where $B_{\epsilon}[\vx^*]
    \defeq \condSet{\vx \in \simplex{\setS_1}}{\norm{\vx - \vx^*}_2 \leq
    \epsilon}$ is the closed $\epsilon$-ball centered at $\vx^*$.

    Let $\vd \defeq \frac{\epsilon}{\norm{\vq}_2} \vq$ and $\vxb \defeq \vx^* +
    \vd$. Then $\vxb \in B_{\epsilon}[\vx^*] \subseteq \setM$, which implies
    that $\Ff{\vxb, \vx^*} \geq V + \delta$. Define a function $w: [0, 1]
    \mapsto \real$ as
    \[
        w(\lambda) \defeq \Ff{\vxb, \vx^* + \lambda \vd}
    \]
    Our definition of $\vRyf[^*]{\cdot}$ implies that $\lim_{\lambda \to 0^+}
    w(\lambda) = \Ff{\vxb, \vx^*} \geq V + \delta$. Therefore, there exists
    $\lambda_0 \in (0, 1)$ such that $w(\lambda_0) \geq V + \frac{\delta}{2}$.
    It follows that
    \begin{align}
    \label{eqn:lpg:maximin-equil-nash:Fineq}
    \begin{split}
        \Ff{\vxb, \vx^* + \lambda_0 \vd}
        &\geq V + \frac{\delta}{2} > V = \Ff{\vx^*, \vx^*}
        \geqN{1} \Ff{\vx^* + \lambda_0 \vd, \vx^* + \lambda_0 \vd} \\
        &\geqN{2} (1-\lambda_0) \Ff{\vx^*, \vx^* + \lambda_0 \vd}
        + \lambda_0 \Ff{\vxb, \vx^* + \lambda_0 \vd} \\
        &> (1 - \lambda_0) \Ff{\vx^*, \vx^* + \lambda_0 \vd} + \lambda_0 V
    \end{split}
    \end{align}
    where $\geqN{1}$ follows from property 3 above, and $\geqN{2}$ is due to
    $\Ff{\vx, \vx'}$ being concave in $\vx$. \Cref{
    eqn:lpg:maximin-equil-nash:Fineq} implies that $\Ff{\vx^*, \vx^* +
    \lambda_0 \vd} < V = \Ff{\vx^*, \vx^*}$, which contradicts property 5 of
    $\Ff{\cdot}$.
\end{proof}
\endgroup


\section{Analysis of limited-rank games}
\label{apx:sec:limited-rank-games}

\subsection{Problem setup}
\label{sec:lpg:lrk-problem-setup}
This section defines the limited-rank games and analyzes their properties. The
truncated SVD is used to derive the payoff perception function in two-player
limited-rank games. We discuss how to deal with the non-uniqueness of the SVD.

Let $\vA \in \real^{m \times n}$ be the true payoff matrix. A player with
capability level $c$ perceives a matrix $\ppfunc{\vA}{c} \defeq
\lowrank{\vA}{\minq{c, m, n}}$. Given $\vA \in \real^{m \times n}$ and $r \in
\intInt{1}{\minq{m, n}}$, $\lowrank{\vA}{r}$ denotes the best rank-$r$
approximation in the sense that it minimizes $\norm{\vA - \lowrank{\vA}{r}}$
given some matrix norm operator $\norm{\cdot}$. A norm is \emph{unitarily
invariant} if it satisfies $\norm{\vA} = \norm{\vU \vA \vV}$ for any
appropriately sized orthogonal matrices $\vU$ and $\vV$ in addition to the
standard properties of matrix norms (triangle inequality $\norm{\vX+\vY} \leq
\norm{\vX} + \norm{\vY}$, absolute homogeneity $\norm{c\vX} = \abs{c}
\norm{X}$, and positive definiteness $\norm{\vX}=0 \implies \vX=\V{0}$).
This section assumes a fixed unitarily invariant norm. Typical examples of
unitarily invariant norms include the \emph{Frobenius norm} ($\norm{\vA}_F
\defeq \sqrt{\sum_{i,\,j} A_{i,\,j}^2}$) and the \emph{spectral norm}
($\norm{\vA}_2 \defeq \sup_{\norm{\vx}_2=1} \norm{\vA \vx}_2$).

For any unitarily invariant norm, the best limited-rank approximation is given
by the truncated SVD of $\vA$ \citep{mirsky1960symmetric}. For a matrix $\vA \in
\real^{m \times n}$, its SVD is given by $\vA = \vU \vSig \vV^T$ where $\vU \in
\real^{m \times m}$ and $\vV \in \real^{n \times n}$ are orthogonal matrices and
$\vSig \in \real^{m \times n}$ is a diagonal matrix with non-negative entries
$\sigma_1 \geq \sigma_2 \geq \cdots \geq \sigma_{\minq{m, n}} \geq 0$. We can
rewrite $\vA$ as $\vA = \sum_{i=1}^{\minq{m, n}} \sigma_i \mcol{\vU}{i}
\T{\mcol{\vV}{i}}$. Its rank-$r$ approximation by truncated SVD is
$\lowrank{\vA}{r} = \sum_{i=1}^{r} \sigma_i \mcol{\vU}{i} \T{\mcol{\vV}{i}}$.
The approximation error is $\norm{\vA - \lowrank{\vA}{r}} = \norm{\diag\sqty{
\sigma_{r+1}, \ldots, \sigma_{\minq{m, n}}}}$ \citep{mirsky1960symmetric}.

The SVD of a real-valued matrix always exists and is unique up to unitary
transformations of the subspace corresponding to repeated singular values
\citep[Theorem 4.1]{ trefethen1997numerical}. More precisely, let $i$ and $j$ be
indices of a group of repeated singular values such that $\sigma_i =
\sigma_{i+1} = \cdots = \sigma_j$. Let $\vU' \defeq \mcol{\vU}{i:j}$ and $\vV'
\defeq \mcol{\vV}{i:j}$ be the corresponding singular vectors. Then any
decomposition that replaces $\vU'$ and $\vV'$ with $\vU' \vQ$ and $\vV' \vQ$ for
appropriately sized orthogonal matrix $\vQ$ is also a valid SVD of $\vA$, and
all SVDs of $\vA$ can be obtained in this way. Of note, the singular values are
unique when ordered in non-increasing order.

Let $\lowrankSet{\vA}{r} \subseteq \real^{m \times n \times 3}$ denote the set
of all possible truncated SVDs of $\vA$ that keep the largest $r$ singular
values with corresponding singular vectors. Formally, $\sqty{\vU, \vSig, \vV}
\in \lowrankSet{\vA}{r}$ if and only if $\xTx{\vU} = \vI_m$, $\xTx{\vV} =
\vI_n$, $\vSig = \diag\sqty{\sigma_1, \ldots, \sigma_r, 0, \ldots, 0}$ with
$\sigma_i \geq \sigma_{i+1} \geq 0$ for $i \in \intInt{1}{r-1}$, and there
exists non-negative numbers $\cqty{\sigma_{r+1}, \ldots, \sigma_{\minq{m, n}}}$
such that $\sigma_k \geq \sigma_{k+1} \geq 0$ for $k \in \intInt{r}{\minq{m,
n}}$ and $\vA = \vU \diag\sqty{\sigma_1, \ldots, \sigma_{\minq{m, n}}} \T{\vV}$.

\begin{proposition}
    \label{thm:lowrank-set-compact}
    For any $\vA \in \real^{m \times n}$ and $r \in \intInt{1}{\minq{m, n}}$,
    the set $\lowrankSet{\vA}{r}$ endowed with any norm topology is nonempty
    and compact.
\end{proposition}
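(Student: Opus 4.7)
The plan is to exploit that $\lowrankSet{\vA}{r}$ sits inside a finite-dimensional ambient space, so that ``any norm topology'' is unambiguous (all norms on a finite-dimensional vector space are equivalent) and compactness reduces via Heine--Borel to closedness together with boundedness. I would prove the three properties (nonemptiness, boundedness, closedness) in that order.

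Nonemptiness is immediate from the existence of a full SVD of $\vA$: take any decomposition $\vA = \vU\diag(\sigma_1,\ldots,\sigma_{\min\{m,n\}})\T{\vV}$ guaranteed by \citep[Theorem 4.1]{trefethen1997numerical}, and form a candidate in $\lowrankSet{\vA}{r}$ by zeroing out the last $\min\{m,n\}-r$ diagonal entries of $\vSig$ while keeping the full $\vU$ and $\vV$. Boundedness follows because the orthogonality constraints $\xTx{\vU}=\vI_m$ and $\xTx{\vV}=\vI_n$ force every column of $\vU$ and $\vV$ to have unit Euclidean norm, and the nonzero entries of $\vSig$ are singular values of $\vA$ and hence bounded above by $\norm{\vA}_2$.

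The substantive step is closedness. Suppose $\sqty{\vU_k,\vSig_k,\vV_k}\in\lowrankSet{\vA}{r}$ converges to $\sqty{\vU_*,\vSig_*,\vV_*}$. The orthogonality equations, the diagonal structure of $\vSig_*$, the nonnegativity and nonincreasing ordering of its first $r$ diagonal entries $\sigma_1^{(*)}\geq\cdots\geq\sigma_r^{(*)}\geq 0$, and the vanishing of its remaining diagonal entries all pass to the limit by continuity of the defining polynomial (in)equalities. What must still be verified is the implicit existential clause in the definition of $\lowrankSet{\vA}{r}$: the existence of a tail $\sigma_{r+1}^{(*)},\ldots,\sigma_{\min\{m,n\}}^{(*)}$ with $\sigma_r^{(*)}\geq\sigma_{r+1}^{(*)}\geq\cdots\geq 0$ such that $\vA=\vU_*\diag(\sigma_1^{(*)},\ldots,\sigma_{\min\{m,n\}}^{(*)})\T{\vV_*}$. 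For each $k$, membership in $\lowrankSet{\vA}{r}$ supplies such a tail $\sigma_{r+1}^{(k)},\ldots,\sigma_{\min\{m,n\}}^{(k)}$; since $0\leq\sigma_j^{(k)}\leq\sigma_r^{(k)}\to\sigma_r^{(*)}<\infty$ for $j>r$, these tails are uniformly bounded, so Bolzano--Weierstrass lets me pass to a subsequence along which every tail coordinate converges to a limit $\sigma_j^{(*)}\geq 0$ that still satisfies the nonincreasing ordering. Passing the identity $\vA=\vU_k\diag(\sigma_1^{(k)},\ldots,\sigma_{\min\{m,n\}}^{(k)})\T{\vV_k}$ to the limit of this subsequence exhibits the required full SVD for the triple $\sqty{\vU_*,\vSig_*,\vV_*}$, placing it in $\lowrankSet{\vA}{r}$.

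The only real obstacle is this existential-quantifier clause; the naive concern is that the ``witnessing'' tail might escape to infinity or lose its ordering in the limit. The escape is ruled out by the bound $\sigma_j^{(k)}\leq\sigma_r^{(k)}$ with $\sigma_r^{(k)}$ convergent, and the ordering is preserved because weak inequalities are closed. Everything else is routine.
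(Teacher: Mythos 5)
Your proof is correct, but it takes a genuinely different route from the paper's. The paper establishes compactness by exhibiting $\lowrankSet{\vA}{r}$ as the continuous image of a compact set: it fixes one element $\sqty{\vU, \vSig, \vV}$ and invokes the classification of all SVDs of $\vA$ (every other SVD arises as $\sqty{\vU\vQ, \vSig, \vV\vQ}$ for $\vQ$ block-orthogonal on the repeated-singular-value blocks), then notes that the set of such $\vQ$ is closed and bounded and that $\vQ \mapsto \sqty{\vU\vQ, \vSig, \vV\vQ}$ is continuous. You instead work directly from the definition of $\lowrankSet{\vA}{r}$ as a constraint set and verify closedness and boundedness by hand, handling the existential clause (the tail singular values) with a Bolzano--Weierstrass extraction on the witnesses. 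Your argument is more elementary and self-contained in that it does not lean on the structure theorem for SVD non-uniqueness, and the witness-extraction technique would transfer to other sets defined by bounded existential clauses; the paper's argument is shorter and additionally records the useful fact that the whole set is a single orbit under block-orthogonal transformations, which the surrounding construction of the perception-compatible choice function exploits. One small simplification available to you: since the full ordered singular values of $\vA$ are unique, both $\vSig_k$ and the witnessing tails $\sigma_{r+1}^{(k)}, \ldots, \sigma_{\minq{m,n}}^{(k)}$ are in fact constant in $k$, so the subsequence extraction, while harmless, is not needed.
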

\begin{proof}
    The nonemptiness of $\lowrankSet{\vA}{r}$ follows from the existence of the
    SVD of $\vA$. Take an arbitrary item $\sqty{\vU, \vSig, \vV} \in
    \lowrankSet{\vA}{r}$. Then all other items in $\lowrankSet{\vA}{r}$ can be
    obtained by replacing $\vU$ and $\vV$ with $\vU \vQ$ and $\vV \vQ$ for a
    matrix $\vQ$ that is block-orthogonal where the blocks correspond to the
    repeated singular values in $\vSig$. The mapping $\vQ \mapsto \sqty{ \vU\vQ,
    \vSig, \vV\vQ}$ is continuous. Since the set of orthogonal matrices is the
    inverse image of the continuous function $\vM \mapsto \xTx{\vM}$ at $\vI$,
    it is closed. Since the column vectors of an orthogonal matrix are
    orthonormal, the set of orthogonal matrices is bounded. Thus
    $\lowrankSet{\vA}{r}$ is also compact.
\end{proof}

\begingroup
\newcommand{\pcfSym}{\mathfrak{f}}
\newcommand{\pcf}[1]{\pcfSym\sqty{#1}}

There are many ways to define a unique value of $\lowrank{\vA}{r}$ such that the
derived payoff perception function is valid and odd (cf. \cref{ def:lpg:ppf,
def:lpg:odd}). To characterize the possible choices, define the following
\emph{perception-compatible choice function} on $\lowrankSet{\vA}{r}$:
\begin{definition}[Perception-compatible choice function]
    \label{def:lpg:perception-compatible-choice}
    A function $\pcfSym: 2^{\real^{m \times n \times 3}} \to \real^{m \times n}$
    is a \emph{perception-compatible choice function} if and only if it
    satisfies the following properties:
    \begin{itemize}
        \item For $\vA \in \real^{m \times n}$ and $r \in \intInt{1}{\minq{m,
            n}}$, $\pcf{\lowrankSet{\vA}{r}} \in \lowrankSet{\vA}{r}$.
        \item For $\vA \in \real^{m \times n}$ and $1 \leq r_1 < r_2 \leq
            \minq{m, n}$, \[
                \pcf{\lowrankSet{\vA}{r_1}} =
            \pcf{\lowrankSet{\pcf{\lowrankSet{\vA}{r_2}}}{r_1}}
        \]
        \item For $\vA \in \real^{m \times n}$ and $r \in \intInt{1}{\minq{m,
            n}}$, $\pcf{\lowrankSet{-\vA}{r}} = -\pcf{\lowrankSet{\vA}{r}}$.
    \end{itemize}
\end{definition}
It easy to verify that setting $\ppfunc{\vA}{c} = \lowrank{\vA}{ \minq{c, m, n}}
= \pcf{\lowrankSet{\vA}{\minq{c, m, n}}}$ for any perception-compatible choice
function $\pcf{\cdot}$ yields an odd payoff perception function $\ppfunc{\cdot}{
\cdot}$.

One way to define a perception-compatible choice function is to impose a
preference over the basis vectors in the SVD. For example, define
\begin{align*}
    p_i^j(S) &\defeq  \condSet{ \sqty{\vU, \vSig, \vV} \in S }{
        U_{j,\,i} = \inf_{\sqty{\vU', \vSig', \vV'} \in S} U'_{j,\,i}
    }, \\
    q_i(S) &\defeq \qty(p_i^1 \circ \cdots \circ p_i^m)(S),\quad
    f(S) \defeq \qty(q_1 \circ \cdots \circ q_m)(S),
\end{align*}
where $(f \circ g)(x)$ denotes function composition $f(g(x))$. Since $U_{i,\,j}$
is continuous in $\vU$, the set $p_i^j(S)$ is nonempty and compact if $S$ is
nonempty and compact. Intuitively, the function $f(S)$ selects the SVD whose
$\vU$ has the smallest lexicographically ordered columns. One can verify that if
$\sqty{\vU, \vSig, \vV} \in q_i(S)$ and $\sqty{\vU', \vSig', \vV'} \in q_i(S)$,
then $\mcol{\vU}{i} = \mcol{\vU'}{i}$; with similar reasoning, if $\sqty{\vU,
\vSig, \vV} \in f(S)$ and $\sqty{\vU', \vSig', \vV'} \in f(S)$, then $\vU =
\vU'$. Since $\vSig$ is unique and $\vV$ is uniquely determined by $\vU$ via
$\T{\vV} = \vSig^{-1} \T{\vU} \vA$, $f(S)$ contains exactly one item. Define
$f(S) = \pcf{S}$ for any nonempty and compact $S$. Then $\pcf{\cdot}$ is a
perception-compatible choice function.

Given the above results, the limited-rank games are formally defined as follows:
\begin{definition}[Limited-rank game]
    \label{def:lpg:limited-rank-game}
    A limited-rank game is a two-player \oslpGame{} (\cref{ def:lpg:gameplay})
    where the payoff perception function is defined as
    $\ppfunc{\vA}{c} = \pcf{\lowrankSet{\vA}{\minq{c, m, n}}}$ for some
    perception-compatible choice function $\pcf{\cdot}$, and $m \defeq \card{
    \setS_1}$ and $n \defeq \card{\setS_2}$ are the numbers of strategies for
    the row and column players, respectively. The first player is the row player
    and the second player is the column player.

    In an instance of a limited-rank game, the perceived payoff functions are
    typically denoted by matrix pairs $\sqty{\vA_1, \vB_1}$ and $\sqty{\vA_2,
    \vB_2}$ for the row and column players, respectively. If the game is
    zero-sum, then the perceived payoff functions are denoted by $\sqty{\vA_1,
    -\vA_1}$ and $\sqty{\vA_2, -\vA_2}$ for the row and column players,
    respectively.
\end{definition}
\endgroup

There is no known closed-form formula for the SVD of arbitrarily sized matrices.
Practical algorithms use iterative methods to compute the SVD
numerically~\citep{ trefethen1997numerical}.

\subsection{Payoff bounds of limited-rank games}
\label{sec:lpg:lrk-payoff-bound}
This section analyzes the bounds of true payoff given a perceived payoff
function in limited-rank games. The analysis uses the following notation:
\begin{itemize}
    \item For $r = \extPosInt$ and $\vA \in \real^{m \times n}$, $\sigma_r(\vA)$
        denotes the $r$-th largest singular value of $\vA$. Note that the value
        is unique. If $r > \minq{m, n}$ (including $r = \infty$), then
        $\sigma_r(\vA) = 0$.
    \item For $\vA \in \real^{m \times n}$, $\Null{\vA}$ is a matrix whose
        columns form an orthonormal basis for the null space of $\vA$. More
        precisely, $\Null{\vA} = \mqty[\vb_1 & \cdots & \vb_k] \in \real^{n
        \times k}$ where $\vb_i \in \real^n$, $\T{\vb_i} \vb_j =
        \indicator{i=j}$ for $i, j \in \intInt{1}{k}$, and $\vA\vx = 0$ if and
        only if $\vx \in \spn{\vb_1, \ldots, \vb_k}$. Specially, if $\rank(\vA)
        = m$, then $\Null{\vA} = \V{0}$.
\end{itemize}

The main result of this section is the following theorem:

\begin{theorem}[Payoff bounds of limited-rank games]
    \label{thm:lpg:lrk-payoff-bound}
    Given a matrix $\vA \in \real^{m\times n}$ and two vectors $\vx \in \real^m$
    and $\vy \in \real^n$, if $\vA$ is the perceived payoff function of a player
    with capability level $c$ in a limited-rank game (\cref{
    def:lpg:limited-rank-game}), then the true payoff bounds as defined in
    \cref{ def:lpg:payoff-bounds} are given by:
    \begin{align}
    \label{eqn:lpg:lrk-payoff-bound}
    \begin{split}
        \minPayoff{\vA, c, \vx, \vy} &= \T{\vx} \vA \vy
            - \uncPayoff{\vA, c, \vx, \vy} \\
        \maxPayoff{\vA, c, \vx, \vy} &= \T{\vx} \vA \vy
            + \uncPayoff{\vA, c, \vx, \vy} \\
        \text{where } \uncPayoff{\vA, c, \vx, \vy}
            &\defeq \sigma_c(\vA) \cdot
            \norm{\leftNullT{\vA}\vx}_2 \cdot \norm{\NullT{\vA}\vy}_2
    \end{split}
    \end{align}

    Moreover, the limited-rank game is narrowly reversible (\cref{
    def:lpg:narrowly-reversible}).
\end{theorem}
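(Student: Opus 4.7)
The plan is to first pin down $\ppfuncInv{\vA}{c}$ using the SVD structure, then reduce the payoff-bound computation to a constrained bilinear form, and finally construct an explicit extremal matrix to prove narrow reversibility. By Mirsky's theorem a matrix $\vM$ lies in $\ppfuncInv{\vA}{c}$ essentially iff $\vM = \vA + \vE$ where the residual $\vE$ has column space in $\mathcal{N}(\T{\vA})$, row space in $\mathcal{N}(\vA)$, and spectral norm $\sigma_1(\vE) \leq \sigma_c(\vA)$. These orthogonality constraints let me parameterize $\vE = \leftNull{\vA}\,\vW\,\NullT{\vA}$ for some $\vW$, and since $\leftNull{\vA}$ and $\Null{\vA}$ have orthonormal columns the spectral-norm bound reduces to $\sigma_1(\vW) \leq \sigma_c(\vA)$. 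Expanding $\T{\vx}\vM\vy = \T{\vx}\vA\vy + \T{\va}\vW\vb$ with $\va \defeq \leftNullT{\vA}\vx$ and $\vb \defeq \NullT{\vA}\vy$, the inequality $\abs{\T{\va}\vW\vb} \leq \sigma_1(\vW)\norm{\va}_2\norm{\vb}_2 \leq \sigma_c(\vA)\norm{\va}_2\norm{\vb}_2$ holds with equality attained by the rank-one choice $\vW^* = \pm\sigma_c(\vA)\,\va\T{\vb}/(\norm{\va}_2\norm{\vb}_2)$, yielding \cref{eqn:lpg:lrk-payoff-bound}.

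For narrow reversibility with $c_1 < c_2$, I would exhibit an explicit minimizer. Let $\tilde{\vx} \defeq \leftNull{\vA}\leftNullT{\vA}\vx$ and $\tilde{\ve}_k \defeq \Null{\vA}\NullT{\vA}\ve_k$ be the orthogonal projections of $\vx$ and $\ve_k$ onto $\mathcal{N}(\T{\vA})$ and $\mathcal{N}(\vA)$ respectively, and define $\vA^* \defeq \vA - \sigma_{c_1}(\vA)\,\tilde{\vx}\,\T{\tilde{\ve}_k}/(\norm{\tilde{\vx}}_2\norm{\tilde{\ve}_k}_2)$. Three facts then suffice: (i) $\vA^*$ has rank at most $c_1+1 \leq c_2$, and its capability-$c_1$ truncation is $\vA$ because the added rank-one term is orthogonal to both the row and column spaces of $\vA$ and contributes a singular value of exactly $\sigma_{c_1}(\vA)$, so $\vA^* \in T_1^2$; (ii) a direct inner-product computation using $\T{\vx}\tilde{\vx} = \norm{\tilde{\vx}}_2^2$ and $\T{\tilde{\ve}_k}\ve_k = \norm{\tilde{\ve}_k}_2^2$ gives $\T{\vx}\vA^*\ve_k = \T{\vx}\vA\ve_k - \sigma_{c_1}(\vA)\norm{\tilde{\vx}}_2\norm{\tilde{\ve}_k}_2 = \minPayoff{\vA, c_1, \vx, \ve_k}$; (iii) $\NullT{\vA^*}\ve_k = \V{0}$, because the added rank-one term absorbs the entire $\mathcal{N}(\vA)$-component of $\ve_k$ into the row space of $\vA^*$, leaving nothing in $\mathcal{N}(\vA^*)$. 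Combining (ii) and (iii) with \cref{eqn:lpg:lrk-payoff-bound} at capability $c_2$ gives $\maxPayoff{\vA^*, c_2, \vx, \ve_k} = \minPayoff{\vA, c_1, \vx, \ve_k}$. The reverse inequality follows from capability path independence: for any $\vA' \in T_1^2$ and $\vM \in \ppfuncInv{\vA'}{c_2}$, we have $\ppfunc{\vM}{c_1} = \ppfunc{\vA'}{c_1} = \vA$, so $\T{\vx}\vM\ve_k \geq \minPayoff{\vA, c_1, \vx, \ve_k}$ and hence $\maxPayoff{\vA', c_2, \vx, \ve_k} \geq \minPayoff{\vA, c_1, \vx, \ve_k}$.

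The main obstacle is the SVD tie-breaking ambiguity encoded in the perception-compatible choice function. When $\sigma_c(\vA) = \sigma_{c+1}(\vM)$ the rank-$c$ truncation is not unique as a matrix, so some boundary matrices may fall outside $\ppfuncInv{\vA}{c}$ depending on how the choice function resolves the tie; one cannot naively equate the set with $\{\vA+\vE : \sigma_1(\vE) \leq \sigma_c(\vA)\}$ modulo orthogonality. The fix is a limiting argument: scaling the residual as $(1-\eta)\vE$ for $\eta > 0$ keeps $\vM$ strictly inside the region where the rank-$c$ truncation is unambiguous, so $\vM \in \ppfuncInv{\vA}{c}$ regardless of the choice function, and letting $\eta \to 0^+$ recovers the extremal values by continuity of $\vM \mapsto \T{\vx}\vM\vy$. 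The same limit is needed for the extremal $\vA^*$ in the narrow-reversibility argument, whose $(c_1+1)$-th singular value ties with $\sigma_{c_1}(\vA)$: I would replace the coefficient $\sigma_{c_1}(\vA)$ with $(1-\eta)\sigma_{c_1}(\vA)$ to obtain $\vA^*_\eta$ unambiguously in $T_1^2$, and take $\eta \to 0^+$.
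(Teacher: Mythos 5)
Your proposal is correct and follows essentially the same route as the paper's proof in \cref{apx:sec:limited-rank-games}: both exploit the orthogonality of the SVD residual to the row and column spaces of $\vA$, bound the bilinear form $\T{\vx}\vE\vy$ via Cauchy--Schwarz, attain the extremes with a rank-one perturbation supported on $\leftNull{\vA}$ and $\Null{\vA}$, and use a $(1-\eta)$-scaled limiting argument to sidestep the SVD tie at $\sigma_{c}(\vA)$ (the paper's $\vA_q$ with $\abs{q} < \sigma_r(\vA)$ and $\epsilon \to 0$). The only differences are presentational: you parameterize the whole set $\ppfuncInv{\vA}{c}$ up front rather than bounding each element separately, and your reversibility witness $\vA^*_\eta$ handles arbitrary $c_2 \geq c_1 + 1$ directly, whereas the paper proves the $c_2 = c_1+1$ case and relies on monotonicity of the narrow concretization sets.
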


The rest of this section is devoted to proving \cref{ thm:lpg:lrk-payoff-bound}.

\begin{lemma}
    \label{thm:lpg:lrk-approx-error}
    For a matrix $\vA \in \real^{m \times n}$ and an integer $r \in \intInt{1}{
    \minq{m, n} - 1}$, let $\lowrank{\vA}{r} \in \lowrankSet{\vA}{r}$ be a
    rank-$r$ approximation (not necessarily the result of a
    perception-compatible choice function).

    Given any $\vx \in \real^m$ and $\vy \in \real^n$, it holds that
    \begin{align}
    \label{eqn:lpg:lrk-approx-error}
    \begin{split}
        -s &\leq \T{\vx} \vA \vy - \T{\vx} \lowrank{\vA}{r} \vy \leq s \\
        \text{where } s &\defeq \sigma_{r+1}(\vA) \cdot
        \norm{\NullT{\lowrankT{\vA}{r}} \vx}_2 \cdot
        \norm{\NullT{\lowrank{\vA}{r}} \vy}_2
    \end{split}
    \end{align}
    The bound can be relaxed to $s \leq \sigma_{r+1}(\vA) \cdot \norm{\vx}_2
    \cdot \norm {\vy}_2$.
\end{lemma}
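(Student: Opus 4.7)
The plan is to analyze the residual matrix $\vE \defeq \vA - \lowrank{\vA}{r}$ and show that it acts only on the components of $\vx$ and $\vy$ lying in the left null space and null space of $\lowrank{\vA}{r}$, respectively. Then the bound will follow from a spectral-norm / Cauchy--Schwarz estimate together with the Eckart--Young identity $\norm{\vE}_2 = \sigma_{r+1}(\vA)$.

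Concretely, I would first write $\T{\vx}\vA\vy - \T{\vx}\lowrank{\vA}{r}\vy = \T{\vx}\vE\vy$, so it suffices to bound $\abs{\T{\vx}\vE\vy}$ from above by $s$ (the lower bound follows by applying the upper bound to $-\vx$). Next, using the SVD $\vA = \vU\vSig\T{\vV}$ and writing $\lowrank{\vA}{r} = \sum_{i=1}^r \sigma_i \mcol{\vU}{i}\T{\mcol{\vV}{i}}$ and $\vE = \sum_{i=r+1}^{\minq{m,n}} \sigma_i \mcol{\vU}{i}\T{\mcol{\vV}{i}}$, I would observe that the column space of $\vE$ is contained in $\spn{\mcol{\vU}{r+1}, \ldots}$, which coincides with the left null space of $\lowrank{\vA}{r}$; and symmetrically, the row space of $\vE$ is contained in the null space of $\lowrank{\vA}{r}$. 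This gives the identities $\T{\vx}\vE = \T{(P_L\vx)}\vE$ and $\vE\vy = \vE(P_N\vy)$, where $P_L$ and $P_N$ denote orthogonal projection onto the left null space and null space of $\lowrank{\vA}{r}$. Since the columns of $\Null{\lowrank{\vA}{r}}$ and $\Null{\lowrankT{\vA}{r}}$ are orthonormal, we have $\norm{P_L\vx}_2 = \norm{\NullT{\lowrankT{\vA}{r}}\vx}_2$ and $\norm{P_N\vy}_2 = \norm{\NullT{\lowrank{\vA}{r}}\vy}_2$.

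Applying Cauchy--Schwarz and then the operator-norm inequality yields
\[
    \abs{\T{\vx}\vE\vy} = \abs{\T{(P_L\vx)}\vE(P_N\vy)}
    \leq \norm{P_L\vx}_2 \cdot \norm{\vE}_2 \cdot \norm{P_N\vy}_2.
\]
Here $\norm{\vE}_2 = \sigma_{r+1}(\vA)$ is the standard Eckart--Young / Mirsky statement already invoked in \cref{sec:lpg:lrk-problem-setup}; alternatively, it is immediate from the SVD of $\vE$. Substituting the two projection-norm identities recovers $s$ exactly, giving \cref{eqn:lpg:lrk-approx-error}. The relaxed bound $s \leq \sigma_{r+1}(\vA)\,\norm{\vx}_2\,\norm{\vy}_2$ is then immediate since orthogonal projections are $1$-Lipschitz.

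The main subtlety is verifying that the SVD identifications of ``column space of $\vE$'' with ``left null space of $\lowrank{\vA}{r}$'' (and the row-space analogue) are not merely inclusions but are compatible with the definition of $\Null{\cdot}$ used in the paper; this is where tied singular values or degeneracies could nominally cause trouble. However, only inclusions are needed for the argument, and the spectral-norm bound on $\vE$ does not depend on the non-uniqueness of the SVD discussed in \cref{sec:lpg:lrk-problem-setup}. The degenerate case $\sigma_{r+1}(\vA) = 0$ forces $\vE = \V{0}$, so the inequality holds trivially with both sides zero, and no special treatment is required.
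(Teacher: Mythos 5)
Your proposal is correct and follows essentially the same route as the paper's proof: both expand the residual $\vA - \lowrank{\vA}{r}$ via the trailing terms of the SVD, identify the trailing singular vectors with $\Null{\lowrankT{\vA}{r}}$ and $\Null{\lowrank{\vA}{r}}$, and bound the bilinear form by $\sigma_{r+1}(\vA)$ times the norms of the projected vectors. The only cosmetic difference is that you invoke the operator-norm identity $\norm{\vA - \lowrank{\vA}{r}}_2 = \sigma_{r+1}(\vA)$ directly, whereas the paper reaches the same estimate by explicitly constructing a sign-adjusted worst-case matrix $\vA^*$ and applying Cauchy--Schwarz term by term.
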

\begin{proof}
    Let $f \defeq \T{\vx} \qty(\vA - \lowrank{\vA}{r}) \vy$ be the value to be
    bounded. The following proof only deals with the upper bound of $f$. The
    lower bound can be proved similarly.

    Let $\vA = \vU \vSig \T{\vV}$ be the SVD of $\vA$ that is truncated to
    obtain $\lowrank{\vA}{r}$. Then we have $f = \sum_{i=r+1}^{\minq{m,n}}
    \sigma_i(\vA) \T{\vx} \mcol{\vU}{i} \T{\mcol{\vV}{i}} \vy$. The value of $f$
    can be increased if we negate some columns of $\vU$ to ensure that $\T{\vx}
    \vU_{:,\,i} \T{\vV_{:, \,i}} \vy > 0$ for $r + 1 \leq i \leq \minq{m, n}$
    and set $\sigma_i(\vA) = \sigma_{r+1}(\vA)$ for $r + 2 \leq i \leq \minq{m,
    n}$. More precisely, we have $f \leq f^*$ where
    \begin{align*}
        f^* &\defeq \T{\vx} \qty(\vA^* - \lowrank{\vA}{r}) \vy,\quad
        \vA^* \defeq \vU\vS \vSig^* \T{\vV},\quad
        \vS \defeq \diag\sqty{s_1, s_2, \ldots, s_{\minq{m, n}}} \\
        s_i &\defeq \begin{dcases}
            -1 & \text{if } i \geq r + 1
                \text{ and } \T{\vx}\mcol{\vU}{i}\T{\mcol{\vV}{i}}\vy < 0 \\
            1 & \text{otherwise}
        \end{dcases} \\
        \vSig^* &\defeq \diag\sqty{\sigma_1(\vA), \sigma_2(\vA), \ldots,
            \sigma_{r+1}(\vA), \sigma_{r+1}(\vA), \ldots, \sigma_{r+1}(\vA)}
    \end{align*}

    Let $\vX_k \defeq \mcol{\vU}{k:m}$ be the $k$-th and subsequent columns of
    $\vU$. Similarly define $\vY_k \defeq \mcol{\vV}{k:n}$ and $\vX_k^* \defeq
    \vU \mcol{\vS}{k:m}$. Since $\vS$ is a diagonal matrix, we have $\vX_k^* =
    \vX_k \vS_{k:m,\,k:m}$. Rewrite $f^*$ as:
    \begin{align}
    \label{eqn:lpg:lrk-approx-error-fstar}
    \begin{split}
        f^* &= \T{\vx}\vU\vS \qty( \vSig^* - \diag\sqty{\sigma_1(\vA),
            \sigma_2(\vA), \ldots, \sigma_r(\vA)}) \T{\vV} \vy \\
            &= \T{\vx}\vX_{r+1}^* \sigma_{r+1}(\vA)\vI^{(m-r)\times(n-r)}
            \T{\vY_{r+1}} \vy \\
            &= \sigma_{r+1}(\vA) \T{\qty(\T[*]{\vX_{r+1}}\vx)}
                \qty(\vI^{(m-r)\times(n-r)} \T{\vY_{r+1}} \vy ) \\
            &\leqN1 \sigma_{r+1}(\vA) \norm{\T[*]{\vX_{r+1}}\vx}_2
                \norm{\vI^{(m-r)\times(n-r)} \T{\vY_{r+1}} \vy }_2 \\
            &\leq \sigma_{r+1}(\vA) \norm{\T[*]{\vX_{r+1}}\vx}_2
                \norm{\T{\vY_{r+1}} \vy }_2 \\
            &= \sigma_{r+1}(\vA)
                \norm{\T{\vS_{r+1:m,\,r+1:m}} \T{\vX_{r+1}}\vx}_2
                \norm{\T{\vY_{r+1}} \vy }_2 \\
            &= \sigma_{r+1}(\vA) \norm{\T{\vX_{r+1}}\vx}_2
                \norm{\T{\vY_{r+1}} \vy }_2, \\
    \end{split}
    \end{align}
    where $\leqN1$ follows from the Cauchy-Schwarz inequality. Since $\vU$ and
    $\vV$ are orthogonal matrices, \cref{ eqn:lpg:lrk-approx-error} can
    be proven by the plugging following inequalities into \cref{
    eqn:lpg:lrk-approx-error-fstar}:
    \begin{align*}
    \begin{gathered}
        \vX_{r+1} = \leftNull{\vU_{:,\,1:r}}
            = \Null{\lowrankT{\vA}{r}},\quad
        \norm{\T{\vX_{r+1}}\vx}_2 \leq \norm{\vx}_2 \\
        \vY_{r+1} = \leftNull{\vV_{:,\,1:r}}
            = \Null{\lowrank{\vA}{r}},\quad
        \norm{\T{\vY_{r+1}}\vy}_2 \leq \norm{\vy}_2
    \end{gathered}
    \end{align*}
\end{proof}

\begingroup
\newcommand{\vAm}[1][\epsilon]{\vA^-_{#1}}
\newcommand{\vAp}[1][\epsilon]{\vA^+_{#1}}
\begin{lemma}
    \label{thm:lpg:lrk-payoff-bound-lem}
    Given a rank-$r$ matrix $\vA \in \real^{m\times n}$ and two vectors $\vx \in
    \real^m$ and $\vy \in \real^n$, the true payoff bounds satisfy:
    \begin{align}
    \label{eqn:lpg:lrk-payoff-bound-lem}
    \begin{split}
        \minPayoff{\vA, r, \vx, \vy} &= \T{\vx} \vA \vy
            - \uncPayoff{\vA, r, \vx, \vy} \\
        \maxPayoff{\vA, r, \vx, \vy} &= \T{\vx} \vA \vy
            + \uncPayoff{\vA, r, \vx, \vy} \\
    \end{split}
    \end{align}

    When $r < \minq{m, n}$, for any $\epsilon \in \posReal$, there exist
    matrices $\vAm$ and $\vAp$ such that
    \begin{align}
    \label{eqn:lpg:lrk-payoff-bound-tight}
    \begin{gathered}
        \rank(\vAm) = \rank(\vAp) = r + 1,\quad
        \lowrankSet{\vAm}{r} = \lowrankSet{\vAp}{r} = \cqty{\vA} \\
        \minPayoff{\vAm, r+1, \vx, \vy} = \maxPayoff{\vAm, r+1, \vx, \vy}
            = \T{\vx} \vA \vy - (1 - \epsilon)s \\
        \minPayoff{\vAp, r+1, \vx, \vy} = \maxPayoff{\vAp, r+1, \vx, \vy}
            = \T{\vx} \vA \vy + (1 - \epsilon)s \\
        \text{where } s \defeq \uncPayoff{\vA, r, \vx, \vy}
    \end{gathered}
    \end{align}
\end{lemma}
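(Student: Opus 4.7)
The plan is to derive the payoff bounds formula in Part~1 first, and then reuse the same rank-one perturbation construction to build the witnesses in Part~2. For any $\vM \in \ppfuncInv{\vA}{r}$, the matrix equality $\lowrank{\vM}{r} = \vA$ pins the top $r$ singular values of $\vM$ to those of $\vA$ and therefore forces $\sigma_{r+1}(\vM) \leq \sigma_r(\vA)$. Plugging this into \cref{thm:lpg:lrk-approx-error}, with $\vM$ playing the role of the full matrix and $\vA$ playing the role of its rank-$r$ truncation, gives $\abs{\T{\vx}\vM\vy - \T{\vx}\vA\vy} \leq \sigma_{r+1}(\vM)\,\norm{\leftNullT{\vA}\vx}_2\,\norm{\NullT{\vA}\vy}_2 \leq s$ uniformly, which yields $\maxPayoff{\vA, r, \vx, \vy} \leq \T{\vx}\vA\vy + s$ and $\minPayoff{\vA, r, \vx, \vy} \geq \T{\vx}\vA\vy - s$.

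For the matching tightness in Part~1, I would exhibit a family of near-extremal witnesses. Assuming both $\norm{\leftNullT{\vA}\vx}_2$ and $\norm{\NullT{\vA}\vy}_2$ are positive (otherwise $s = 0$ and nothing remains), let $\hat{\vx}_\perp$ be the unit vector in $\leftNull{\vA}$ aligned with the projection of $\vx$ onto that subspace, and $\hat{\vy}_\perp$ the analogous unit vector in $\Null{\vA}$, so that $\T{\vx}\hat{\vx}_\perp = \norm{\leftNullT{\vA}\vx}_2$ and $\T{\hat{\vy}_\perp}\vy = \norm{\NullT{\vA}\vy}_2$. For any $\lambda \in (0, \sigma_r(\vA))$, the perturbation $\vM_\lambda \defeq \vA \pm \lambda \hat{\vx}_\perp \T{\hat{\vy}_\perp}$ admits an SVD whose top $r$ triplets are inherited from $\vA$ and whose $(r+1)$-th singular value is $\lambda$; the strict gap $\lambda < \sigma_r(\vA)$ at the truncation boundary forces every perception-compatible tie-breaking to return $\vA$, so $\vM_\lambda \in \ppfuncInv{\vA}{r}$. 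Since $\T{\vx}\vM_\lambda\vy - \T{\vx}\vA\vy = \pm \lambda \norm{\leftNullT{\vA}\vx}_2 \norm{\NullT{\vA}\vy}_2$, sending $\lambda \to \sigma_r(\vA)^-$ pushes the two bounds to equality.

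For Part~2, I would reuse this construction at the fixed rate $(1-\epsilon)\sigma_r(\vA)$, setting $\vAp \defeq \vA + (1-\epsilon)\sigma_r(\vA)\hat{\vx}_\perp \T{\hat{\vy}_\perp}$ and $\vAm \defeq \vA - (1-\epsilon)\sigma_r(\vA)\hat{\vx}_\perp \T{\hat{\vy}_\perp}$; when either projection vanishes I would pick any unit vector in the relevant subspace, which is non-trivial because $r < \minq{m, n}$. Both matrices have rank exactly $r+1$, and because the added singular value $(1-\epsilon)\sigma_r(\vA)$ is strictly smaller than $\sigma_r(\vA)$, every SVD tuple in $\lowrankSet{\vAp}{r}$ and $\lowrankSet{\vAm}{r}$ yields the matrix $\vA$. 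Direct computation gives $\T{\vx}\vAp\vy = \T{\vx}\vA\vy + (1-\epsilon)s$ and $\T{\vx}\vAm\vy = \T{\vx}\vA\vy - (1-\epsilon)s$. Finally, since the projection of $\vx$ onto $\leftNull{\vA}$ has been entirely absorbed into the added singular direction, one has $\leftNullT{\vAp}\vx = \V{0}$, hence $\uncPayoff{\vAp, r+1, \vx, \vy} = 0$, and invoking Part~1 at capability $r+1$ collapses both payoff bounds onto $\T{\vx}\vAp\vy$; the same reasoning handles $\vAm$.

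The main obstacle is controlling the non-uniqueness of the SVD under repeated singular values, which could in principle let a perception-compatible choice function return a truncation different from $\vA$. The fix is already built into the construction: maintaining a strict gap at the truncation boundary via $\lambda < \sigma_r(\vA)$ in Part~1 and via the factor $(1-\epsilon) < 1$ in Part~2 guarantees that every tie-breaking consistent with the choice function agrees with $\vA$, so the constructed matrices genuinely lie in the intended concretization sets.
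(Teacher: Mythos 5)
Your proposal is correct and follows essentially the same route as the paper: the one-sided bounds come from \cref{thm:lpg:lrk-approx-error} via $\sigma_{r+1}(\vM) \leq \sigma_r(\vA)$, and your rank-one perturbation $\vA \pm \lambda\,\hat{\vx}_\perp\T{\hat{\vy}_\perp}$ with a strict spectral gap at the truncation boundary is exactly the paper's matrix $\vA_q$ written in explicit outer-product form. The only (cosmetic) difference is ordering: you establish the Part~1 equality directly by letting $\lambda \to \sigma_r(\vA)^-$, whereas the paper derives it from the Part~2 witnesses.
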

\begin{proof}
    \newcommand{\vAh}{\vA'}
    \newcommand{\vUr}{\widetilde{\vU}}
    \newcommand{\vVr}{\widetilde{\vV}}
    Let $s \defeq \uncPayoff{\vA, r, \vx, \vy}$. When $\vA$ is full-rank (i.e.,
    $r = \minq{m, n}$), either $\leftNull{\vA}$ or $\Null{\vA}$ is the zero
    vector, which implies $s = 0$. Therefore, \cref{
    eqn:lpg:lrk-payoff-bound-lem} holds when $r = \minq{m, n}$.

    Now assume $r < \minq{m, n}$. Take an arbitrary $\vAh \in
    \ppfuncInv{\vA}{r}$. Since $\sigma_{r+1}(\vAh) \leq \sigma_r(\vAh) =
    \sigma_r(\vA)$, \cref{thm:lpg:lrk-approx-error} yields $\abs{\T{\vx} (\vA -
    \vAh) \vy} \leq s$, which implies
    \begin{align}
        \label{eqn:lpg:lrk-payoff-bound:proof-1}
        \minPayoff{\vA, r, \vx, \vy} \geq \T{\vx} \vA \vy - s,\quad
        \maxPayoff{\vA, r, \vx, \vy} \leq \T{\vx} \vA \vy + s
    \end{align}
    \Cref{eqn:lpg:lrk-payoff-bound:proof-1,
    eqn:lpg:lrk-payoff-bound-tight} together imply \cref{
    eqn:lpg:lrk-payoff-bound-lem}. The rest of the proof focuses on
    \cref{eqn:lpg:lrk-payoff-bound-tight}.

    Assume the SVD of $\vA$ is $\vA = \vU \vSig \T{\vV}$ where $\Sigma_{i,\,i} =
    0$ for $i \geq r+1$. Define $\vUr \defeq \mcol{\vU}{1:r}$ and $\vVr \defeq
    \mcol{\vV}{1:r}$ to be the left or right singular vectors of $\vA$
    corresponding to nonzero singular values. Decompose $\vx$ as $\vx = \vUr
    \vx_0 + \vx_b$ where $\vx_0 \defeq \T{\vUr}\vx$ so that $\vx_b$ is
    orthogonal to the column space of $\vUr$ (i.e., $\T{\vUr}\vx_b = \V{0}$). It
    follows that $\vx_b$ is in the column space of $\mcol{\vU}{r+1:m}$, which
    implies that there is an orthogonal matrix $\vP$ such that
    $\T{\vx_b}\qty(\mcol{\vU}{r+1:m}\vP) = \mqty[\norm{\vx_b}_2 & \V{0}]$. It
    also holds that $\norm{\vx_b}_2 = \norm{\leftNullT{\vA}{\vx}}_2$. Similarly,
    decompose $\vy$ as $\vy = \vVr \vy_0 + \vy_b$ and find an orthogonal matrix
    $\vQ$ such that $\T{\qty( \mcol{\vV}{r+1:n}\vQ)}\vy_b =
    \T{\mqty[\norm{\vy_b}_2 & \V{0}]}$.

    For $q \in \real$, define
    \begin{align*}
        \vA_q \defeq \mqty[\vUr & \mcol{\vU}{r+1:m}\vP]
        \diag\sqty{\sigma_1\qty( \vA), \ldots, \sigma_r\qty(\vA), q,
            0, \ldots, 0} \T{\mqty[\vVr & \mcol{\vV}{r+1:n}\vQ]}
    \end{align*}
    One can verify that for $\abs{q} < \sigma_r\qty(\vA)$, the following holds:
    \begin{align*}
    \begin{gathered}
        \rank(\vA_q) = r + 1,\quad
        \lowrankSet{\vA_q}{r} = \cqty{\vA},\quad
        \uncPayoff{\vA_q, r + 1, \vx, \vy} = 0 \\
        \T{\vx} \vA_q \vy = \T{\vx} \vA \vy + q \norm{\vx_b}_2 \norm{\vy_b}_2
        = \T{\vx} \vA \vy + q \cdot
            \norm{\leftNullT{\vA}\vx}_2 \cdot \norm{\NullT{\vA}\vy}_2
    \end{gathered}
    \end{align*}
    Therefore, \cref{ eqn:lpg:lrk-payoff-bound-tight} is satisfied by
    setting $\vAm = \vA_{-w}$ and $\vAp = \vA_{w}$ where $w \defeq (1 -
    \epsilon) \sigma_r\qty(\vA)$.
\end{proof}

\begin{separateProof}{thm:lpg:lrk-payoff-bound}
    Since $\vA$ is the perceived payoff function of a player with capability
    level $c$ in a limited-rank game, we have $\rank(\vA) \leq c$. If
    $\rank(\vA) = c$, then \cref{ eqn:lpg:lrk-payoff-bound-lem} directly implies
    \cref{ eqn:lpg:lrk-payoff-bound}. Otherwise when $\rank(\vA) < c$, we have
    $\sigma_c(\vA) = 0$ and $\ppfuncInv{\vA}{c} = \cqty{\vA}$, which gives
    $\uncPayoff{\vA, c, \vx, \vy} = 0$ and thus \cref{ eqn:lpg:lrk-payoff-bound}
    still holds.

    Let's proceed to show that \cref{ eqn:lpg:lower-narrowly-reversible } holds
    for $c_1 = c$ and $c_2 = c + 1$, and therefore the limited-rank game is
    narrowly reversible. Take arbitrary values of $\vx \in \simplex{m}$ and $\vy
    \in \simplex{n}$. If $ \rank(\vA) = c$, then take an arbitrary sequence
    $\epsilon_i \to 0$. \Cref{ thm:lpg:lrk-payoff-bound-lem} implies that there
    exists $\vAm[\epsilon_i] \in \ncSet{\vA, c, c+1}$ satisfying \cref{
    eqn:lpg:lrk-payoff-bound-tight}, which yields
    \begin{align*}
        \inf_{\vA' \in \ncSet{\vA, c, c+1}} \maxPayoff{\vA', c+1, \vx, \vy}
        &\leq \inf_{i \to \infty} \maxPayoff{\vAm[\epsilon_i], c+1, \vx, \vy} \\
        &= \inf_{i \to \infty} \qty( \T{\vx}\vA\vy - (1 - \epsilon_i) s)
        = \minPayoff{\vA, c, \vx, \vy}
    \end{align*}
    One can also easily verify that \[
        \inf_{\vA' \in \ncSet{\vA, c, c+1}} \maxPayoff{\vA', c+1, \vx, \vy}
        \geq \minPayoff{\vA, c, \vx, \vy}
    \]
    Therefore,
    \begin{align*}
        \inf_{\vA' \in \ncSet{\vA, c, c+1}} \maxPayoff{\vA', c+1, \vx, \vy}
        = \minPayoff{\vA, c, \vx, \vy}
    \end{align*}
    Analogously, one can prove $\sup_{\vA' \in \ncSet{\vA, c, c+1}}
    \minPayoff{\vA', c+1, \vx, \vy} = \maxPayoff{\vA, c, \vx, \vy}$.

    If $\rank(\vA) < c$, then
    \begin{align*}
        \sup_{\vA' \in \ncSet{\vA, c, c+1}} \minPayoff{\vA', c+1, \vx, \vy}
        &= \inf_{\vA' \in \ncSet{\vA, c, c+1}}
            \maxPayoff{\vA', c+1, \vx, \vy} \\
        &= \minPayoff{\vA, c, \vx, \vy}
        = \maxPayoff{\vA, c, \vx, \vy}
        = \T{\vx}\vA\vy
    \end{align*}

    Therefore, the limited-rank game is narrowly reversible.
\end{separateProof}
\endgroup

\end{document}